\newcommand{\field}[1]{\mathbb{#1}}
\newcommand{\remove}[1]{}
\newtheorem{thm}{Theorem}[section]
\newtheorem{claim}[thm]{Claim}
\newtheorem{lem}[thm]{Lemma}
\newtheorem{define}[thm]{Definition}
\newtheorem{cor}[thm]{Corollary}
\newtheorem{obs}[thm]{Observation}
\def\F{{\mathbb{F}}}
\def\Z{{\mathbb{Z}}}
\def\N{{\mathbb{N}}}
\def\1{\mathbf 1}
\def\_{\,\,\,\,\,}
\newcommand{\sop}{\Sigma\Pi\left(\Sigma\Pi\right)^{[s]}}
\newcommand{\sopone}{\Sigma\Pi\left(\Sigma\Pi\right)^{[1]}}
\newcommand{\soptwo}{\Sigma\Pi\left(\Sigma\Pi\right)^{[2]}}
\newcommand{\sopm}{\Sigma\Pi\left(\Sigma\Pi\right)^{[N^{{\mu}}]}}
\newcommand{\nwm}{NW_{n,{\mu}}}
\newcommand{\nwa}{NW_{{a'},  { \mu'}}}
\newcommand{\nwb}{NW'_{{a'},  { \mu'}}}
\def\dim{\mathsf{Dim}}
\def\h{\mathsf{Hom}}
\def\es{\mathsf{ESYM}}
\begin{document}
\def\e{{\mathbb E}}
\def\lm{\mathsf{Lead\mbox{-}Mon}}

\mathchardef\mhyphen="2D 

\title{Sums of products of polynomials in  few variables : lower bounds and polynomial identity testing}
\author{Mrinal Kumar  \and Shubhangi Saraf}
\author{Mrinal Kumar\thanks{Department of Computer Science, Rutgers University.
Email: \texttt{mrinal.kumar@rutgers.edu}. Research supported in part by NSF grant CCF-1253886.}\and
Shubhangi Saraf\thanks{Department of Computer Science and Department of Mathematics, Rutgers University.
Email: \texttt{shubhangi.saraf@gmail.com}. Research supported by NSF grant CCF-1350572.}}

\date{}
\maketitle
\abstract{
We study the complexity of representing polynomials as a sum of products of polynomials in few variables. More precisely, we study representations of the form $$P = \sum_{i = 1}^T \prod_{j = 1}^d Q_{ij}$$
such that each $Q_{ij}$ is an arbitrary polynomial that depends on at most $s$ variables. 

We prove the following results. 

\begin{itemize}
\item Over fields of characteristic zero, for every constant $\mu$ such that $0 \leq \mu < 1$, we give an explicit family of polynomials $\{P_{N}\}$, where $P_{N}$ is of degree $n$ in $N = n^{O(1)}$ variables, such that any representation of the above type for $P_{N}$ with $s = N^{\mu}$  requires $Td \geq n^{\Omega(\sqrt{n})}$. This strengthens a recent result of Kayal and Saha~\cite{KayalSaha14} which showed similar lower bounds for  the model of sums of products of linear forms in few variables. It is known that any asymptotic improvement in the exponent of the lower bounds (even for $s = \sqrt{n}$) would separate $\VP$ and $\VNP$~\cite{KayalSaha14}. 
\item We obtain a deterministic subexponential  time blackbox polynomial identity testing (PIT) algorithm for circuits computed by the above model when $T$ and the individual degree of each variable in $P$ are at most $\log^{O(1)} N$ and $s \leq N^{\mu}$ for any constant $\mu < 1/2$. We get quasipolynomial running time when $s < \log^{O(1)} N$.  
The PIT algorithm is obtained by combining our lower bounds with the hardness-randomness tradeoffs developed in~\cite{DSY09, KI04}.  To the best of our knowledge, this is the first nontrivial PIT algorithm for this model (even for the case $s=2$), and the first nontrivial PIT algorithm obtained from lower bounds for small depth circuits.~\footnote{In a  recent independent work, Forbes~\cite{Forbes-personal} does blackbox identity testing for another subclass of depth four circuits using shifted partial derivative based methods. To the best of our understanding, the results in these two papers are incomparable even though both rely on similar techniques.}
\end{itemize}
}

\newpage

\section{Introduction}

Arithmetic circuits are the most natural model of computation for a wide variety of algebraic problems such as matrix multiplication, computing fast fourier transforms etc. The problem of proving lower bounds for arithmetic circuits is one of the most fundamental and interesting problems in complexity theory. Proving superpolynomial lower bounds for general arithmetic circuits would resolve the $\VP$ versus $\VNP$ conjecture~\cite{Valiant79}, the algebraic analog of the $\P$ vs $\NP$ conjecture. This is one of the holy grails of complexity theory and has received a lot of attention, since it is a more structured and potentially easier question to understand and analyse than the $\P$ vs $\NP$ problem . 

The intimately related problem of polynomial identity testing (PIT) is the problem of testing if a polynomial, given as an arithmetic circuit is identically zero. In the setting where the algorithm cannot look inside the circuit, but only has access to evaluations of the circuit, the problem is referred to as blackbox PIT. There is a very simple randomized algorithm for this problem - simply evaluate the polynomial at a random point from a large enough domain. With very high probability, a nonzero polynomial will have a nonzero evaluation~\cite{Schwartz80, Zippel79}. It is a very important and fundamental question to derandomize the above algorithm. In a seminal work, Kabanets and Impagliazzo~\cite{KI04} showed that the problem of proving  lower bounds for arithmetic circuits and the problem of derandomizing identity testing are essentially equivalent!  

These two problems have occupied a central position in complexity theory and despite much attention, our understanding of general arithmetic circuits is still very limited.  Thus there has been a great deal of effort in understanding the complexity of restricted classes of arithmetic circuits in an attempt to obtain a better understanding of the general problem.  Low depth arithmetic circuits in particular are one such well studied class. 

\paragraph{Lower bounds for homogeneous low depth arithmetic circuits.}
The last few years have seen a tremendous amount of exciting progress on the problems of ``depth reduction" of general arithmetic circuits to low depth arithmetic circuits, and of proving lower bounds for low depth arithmetic circuits. Using depth reduction techniques~\cite{VSBR83, AV08, koiran, Tavenas13} it was shown that $N^{\omega(\sqrt n)}$ lower bounds (for polynomials in $N$ variables and of degree $n$) for just homogeneous depth 4 arithmetic circuits of bottom fan-in $\sqrt n$ would suffice to separate $\VP$ from $\VNP$ and imply superpolynomial lower bounds for general arithmetic circuits. At the same time there was a very exciting line of works proving $N^{\Omega(\sqrt n)}$ lower bounds for the same model of arithmetic circuits (and in fact for even the more general class of homogeneous depth 4 arithmetic circuits with no restriction on bottom fan-in)~\cite{GKKS12, FLMS13, KSS13, KS-formula, KLSS14, KS-full}.

\paragraph{Lower bounds for non-homogeneous low depth arithmetic circuits.} Despite all this remarkable progress, and some very strong lower bounds for homogeneous low depth arithmetic circuits, in the nonhomogenous world much less is understood. Only mild lower bounds are known when we drop the condition of homogeneity,  even for very simple classes of low depth arithmetic circuits. For depth 3 circuits over fields of characteristic 0, only quadratic lower bounds known~\cite{SW01, Shp01}, and there has been no progress on this question in more than a decade now. 

In a beautiful depth reduction result over fields of characteristic 0, Gupta et al~\cite{GKKS13} showed that $N^{\omega(\sqrt n)}$ lower bounds (for polynomials in $N$ variables and of degree $n$) for the class of non-homogeneous {\it depth 3} circuits would already separate $\VP$ from $\VNP$. It was recently observed by Kayal and Saha~\cite{KayalSaha14}~\footnote{They attribute the observation to Ramprasad Saptharishi.} that in fact it suffices to prove such lower bounds for depth 3 circuits with bottom fan-in $\sqrt n$. 

Till recently (in particular till the work of~\cite{KayalSaha14}), the best known lower bounds for depth 3 circuits even with bottom fan-in 2 were still just quadratic. In a very nice recent result, Kayal and Saha~\cite{KayalSaha14} showed an exponential lower bound for depth 3 circuits over fields of characteristic 0, whose bottom fan-in is at most $N^{\mu}$, where $N$ is the number of variables and $0 \leq \mu < 1$ is an arbitrary constant. More precisely, they prove the following. 

\begin{thm}[Kayal-Saha~\cite{KayalSaha14}]~\label{thm: KSaha}
Let $\F$ be a field of characteristic zero. Then, for every constant $0 \leq \mu < 1$ there is a family $\{P_N\}$ of degree $n$ polynomials in $N = n^{O_{\mu}(1)}$ variables over $\F$ in $\VNP$ such that any depth three circuit of bottom fan-in at most $N^{\mu}$ computing $P_N$ has top fan-in at least $N^{\Omega_{\mu}{(\sqrt{n})}}$. 
\end{thm}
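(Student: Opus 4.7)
My plan is to use the shifted partial derivative framework with a \emph{restricted} shift tailored to exploit the bottom-fan-in bound $s = N^\mu$. Take as the hard polynomial a Nisan--Wigderson design polynomial $P_N = \nwm$ on $N = n^{O_\mu(1)}$ variables; such polynomials are standard and easily placed in $\VNP$. For parameters $k, \ell, r$ to be chosen later, define the measure
$$\Gamma(f) \;=\; \dim_{\F}\, \pi_r\!\bigl(\bx^{\leq \ell}\cdot \D^{=k} f\bigr),$$
where $\D^{=k}f$ is the span of the $k$-th order partials of $f$, $\bx^{\leq \ell}$ denotes multiplication by all monomials of degree at most $\ell$, and $\pi_r$ projects onto the subspace spanned by monomials in which every variable has individual degree at most $r$. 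The measure is sub-additive, so to lower bound the top fan-in $T$ of a depth-$3$ circuit $\sum_{i=1}^T \prod_j L_{ij}$ computing $P_N$ it suffices to (i) upper bound $\Gamma\bigl(\prod_j L_{ij}\bigr)$ on a single product gate, and (ii) lower bound $\Gamma(P_N)$.

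For a single product $\Pi = \prod_{j=1}^d L_j$ with each $L_j$ affine in at most $s$ variables, Leibniz's rule together with the linearity of each $L_j$ forces $\D^{=k}\Pi$ to lie in the span of the $\binom{d}{k}$ subproducts $\bigl\{\prod_{j\notin S} L_j : |S|=k\bigr\}$. Each such subproduct is a polynomial in at most $s(d-k)$ variables of degree $d-k$, hence lives in a space of dimension at most $\binom{s(d-k)+d-k}{d-k}$. The key lemma I would need is that even after multiplying by arbitrary shift monomials of degree $\leq \ell$ and applying $\pi_r$, the total dimension remains bounded by something of the shape
$$\binom{d}{k}\cdot\binom{s(d-k)+\ell+d-k}{\ell+d-k}$$
times a lower-order correction, which is much smaller than $\binom{N+\ell}{\ell}$ when $s(d-k) \ll N$.

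For the lower bound on $\Gamma(\nwm)$ one uses the combinatorial ``distinct support'' property of the Nisan--Wigderson polynomial, combined with a careful count of how many individual-degree-$\leq r$ monomials are hit by the order-$k$ shifted partials. This should give $\Gamma(\nwm)$ close to the full image dimension of $\pi_r$ restricted to degrees up to $\deg(P_N) - k + \ell$, essentially saturating the target space. With $d = n$ and $N = n^c$ for $c > 1/(1-\mu)$, one has $s n \ll N$, so the upper bound is genuinely smaller than the lower bound. Choosing $k = \Theta(\sqrt n)$ and tuning $\ell, r$ to maximize the gap yields $T \geq N^{\Omega_\mu(\sqrt n)}$ for any constant $\mu < 1$.

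The main obstacle is the upper bound for a single product gate: the naive estimate $\binom{d}{k}\binom{N+\ell}{\ell}$ ignores the bottom fan-in $s$ entirely and is far too weak to give any meaningful bound for $\mu > 0$. One must show that the restricted shifts $\pi_r\!\bigl(\bx^{\leq \ell}\cdot \prod_{j\notin S} L_j\bigr)$ have dimension controlled by $s(d-k)$ rather than $N$. The intuition is that shifting by a monomial that activates many variables outside the support of $\{L_j\}_{j\notin S}$ tends to push some individual degree above $r$ and be killed by $\pi_r$, so the effective number of usable shifts is small; but making this quantitative with the right dependence on $s$ (so that $\mu$ can be any constant below $1$, rather than only, say, $1/2$) is the technical heart of the argument and exactly where the restricted-shift refinement over the naive shifted-partials measure is essential.
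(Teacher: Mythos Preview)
There are two genuine gaps.

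\textbf{Missing homogenization.} You write ``With $d=n$\ldots'', but the theorem is about \emph{non-homogeneous} depth-three circuits, where the formal degree $d$ of a product gate may exceed the degree $n$ of $P_N$ by an arbitrary amount (cancellations at the top sum gate allow this). Your upper bound $\binom{d}{k}\binom{s(d-k)+\ell+d-k}{\ell+d-k}$ is therefore vacuous as stated. Both Kayal--Saha and the present paper deal with this before anything else: one writes each affine factor with nonzero constant term as $1+L_j'$, expands $\prod_j(1+L_j')$ through elementary symmetric polynomials, and invokes the Shpilka--Wigderson homogeneous $\Sigma\Pi\Sigma\wedge$ formula for $\es_l$ (Theorem~\ref{thm : SW}). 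This yields an equivalent circuit whose level-two product fan-in is at most $n$, at the cost of inflating the top fan-in by a factor $2^{O(\sqrt n)}$ (Lemmas~\ref{lem:homog} and~\ref{lem:depth6} here; the analogous step in~\cite{KayalSaha14} gives a strictly homogeneous depth-three circuit since the $Q_{ij}$ are linear). This reduction is not optional, and nothing in your plan replaces it.

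\textbf{The projection $\pi_r$ does not do what you claim.} Even granting $d=n$, your intuition that a shift by a monomial in variables outside the support of $\prod_{j\notin S}L_j$ ``tends to push some individual degree above $r$ and be killed by $\pi_r$'' is wrong. If $m$ is multilinear on fresh variables, the individual degrees in $m\cdot\prod_{j\notin S}L_j$ are just the pointwise maxima of those in $m$ and in the product, so $\pi_r$ kills nothing, and the shift space is genuinely governed by $N$ rather than $s(n-k)$; you are back to the naive $\binom{n}{k}\binom{N+\ell}{\ell}$ bound you yourself flag as too weak. The actual mechanism in~\cite{KayalSaha14} (and here) is a \emph{random restriction}: keep each variable alive independently with probability $p=N^{-(\mu+\delta)}$ for $\delta=(1-\mu)/2$. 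With high probability this sets to zero every monomial of support $\geq s'=\Theta(\sqrt n)$ in every factor, after which one applies the \emph{projected} shifted partial derivative measure (projection $\sigma$ onto multilinear monomials, not a degree cap $\pi_r$) to get the upper bound of Lemma~\ref{lem:lowsupbound1}. The lower bound on $\Phi_{{\cal M},m}(\nwm|_V)$ must then be proved to survive the \emph{same} random restriction (Lemma~\ref{lem: KS main}), which is itself nontrivial. Both the random-restriction step and the restriction-robust lower bound are essential for every $\mu>0$ and are absent from your proposal.
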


\paragraph{Our Model:} In this work, we consider the model of sums of products of polynomials in few variables. More formally, we consider representations of polynomials $P$ (degree $n$ in $N = n^{O(1)}$ variables) in the form
\begin{equation}~\label{def:model intro}
P = \sum_{i=1}^T \prod_{j = 1}^d Q_{ij}
\end{equation}
where each $Q_{ij}$ is an arbitrary polynomial (of arbitrarily high degree) in at most $s$ variables. We call this the model of $\sop$ circuits. 

Observe that the model is more general than that considered in~\cite{KayalSaha14}. The model in ~\cite{KayalSaha14} corresponds to sums of products of {\it linear forms} in few variables. In our case, the $Q_{ij}$ no longer have to be linear forms, but can be general polynomials of arbitrarily high degree. Prior to this work, even for the case when $s = 2$, there were no nontrivial lower bounds known for this model. 




$\sop$ circuits for $s \geq 2$ can also be seen as a generalization of the model of sums of products of univariate polynomials (which corresponds to $\sop$ circuits with $s=1$), which has been very well studied in the arithmetic circuit complexity literature. Lower bounds for $\sopone$ circuits follow from works of Nisan~\cite{Nisan91} and Saxena~\cite{S07}. Over the last few years, there have been some very nice results giving quasipolynomial time blackbox identity testers for $\sopone$ circuits~\cite{ForbesS13, FS13, ASS13}.
$\sop$ circuits can also be seen as a generalization of the widely studied model of diagonal circuits, since polynomials computable by diagonal circuits can be represented as a $\sopone$ circuit without much blow up in the size of the representation~\cite{S07}. 

Although $\sopone$ circuits seem fairly well understood from the point of view of lower bounds and derandomization of polynomial identity testing, if one considers the model of sums of products of bivariate polynomials ($\soptwo$ circuits), then our understanding changes completely. Although only seemingly a mild generalization of $\sopone$ circuits, the known proof techniques
for lower bounds for $\sopone$ circuits (which were proved using {\it evaluation dimension} techniques of~\cite{Nisan91, Raz06}) seem to completely break down in this setting. Thus, studying this model seems like an interesting next step towards understanding non-homogeneous small depth algebraic computation. 
As far as we are aware there are also (not surprisingly) no nontrivial PIT results for the model. 
We are now ready to state our results.
\subsection{Our results} 
\paragraph{Lower bounds : } We show an exponential lower bound for the model of $\sop$, when $s$ is at most $N^{\mu}$ for any constant $0 \leq \mu < 1$ ($N$ is the number of variables). More precisely, we show the following. 

\begin{thm}~\label{thm:mainthm intro}
Let $\F$ be a field of characteristic zero and $\mu$ be any constant such that $0 \leq \mu < 1$. There exists a family $\{P_N\}$ of polynomials over $\F$ in $\VNP$, where $P_N$ is of degree $n$  in $N = n^{O_{\mu}(1)}$ variables, such that for any representation of $P_N$ of the form
$$P_N = \sum_{i = 1}^T\prod_{j = 1}^{d} Q_{ij}$$
where each $Q_{ij}$ is polynomial in at most  $s =N^{\mu}$ variables, it must be true that   $$T\cdot d \geq n^{\Omega_{\mu}(\sqrt{n})}$$
\end{thm}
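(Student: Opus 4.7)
The plan is to prove Theorem~\ref{thm:mainthm intro} via the method of shifted partial derivatives, in the spirit of~\cite{GKKS13, KSS13, KLSS14, KayalSaha14}. The argument has the usual three-part structure: an explicit hard polynomial, a complexity measure that is subadditive on sums, and a matching pair of upper/lower bounds on the measure of a single ``gate'' $\prod_{j} Q_{ij}$ versus that of the hard polynomial.

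\textbf{Hard polynomial and measure.} I would take $P_N$ to be the Nisan--Wigderson design polynomial $NW_{n,\mu}$ used in~\cite{KSS13, KayalSaha14}; this is an explicit family in $\VNP$ of degree $n$ on $N = n^{O_\mu(1)}$ variables, and it is set-multilinear under a natural partition of its variables. The complexity measure is a projected variant of shifted partials,
$$\Gamma_{k,\ell}(f) \;=\; \mathrm{dim}_{\F}\, \mathrm{span}\bigl\{\pi(x^\beta \cdot \partial^\alpha f) : |\alpha| = k,\; |\beta| \leq \ell\bigr\},$$
where $\pi$ is the projection onto (set-)multilinear monomials of a fixed total degree compatible with $NW$. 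The projection is essential: because each $Q_{ij}$ is allowed to have arbitrarily high individual degree, projecting onto a multilinear support kills the unbounded-degree contributions that would otherwise make the derivative space uncontrollable.

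\textbf{Upper bound on one product.} For $g = \prod_{j=1}^d Q_{ij}$, let $V_g$ denote the union of the variable sets of the $Q_{ij}$, so that $|V_g| \leq sd$. By the Leibniz rule, $\partial^\alpha g$ is a polynomial purely in $V_g$. Splitting the shift $x^\beta = x^{\beta_1} x^{\beta_2}$ into its $V_g$-part and its complement,
$$\pi(x^\beta \partial^\alpha g) \;=\; x^{\beta_2} \cdot \pi_{V_g}\!\bigl(x^{\beta_1} \partial^\alpha g\bigr),$$
where $x^{\beta_2}$ is a multilinear monomial outside $V_g$ and the second factor lies in the space of multilinear polynomials on $V_g$, of dimension at most $2^{|V_g|} \leq 2^{sd}$. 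Combining, the per-gate bound becomes (up to a polynomial factor in the number of admissible $\alpha$)
$$\Gamma_{k,\ell}(g) \;\leq\; 2^{sd} \cdot \binom{N - |V_g| + \ell}{\ell}.$$

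\textbf{Lower bound and parameter balancing.} Adapting the analyses of~\cite{KSS13, KLSS14, KS-full} to the projected measure, I would show that $\Gamma_{k,\ell}(NW_{n,\mu})$ is close to its trivial maximum --- essentially the number of $(k,\ell)$-shifted partials --- using the design property of $NW$ to ensure that sufficiently many multilinear shifted partials are linearly independent. By subadditivity, $\Gamma_{k,\ell}(P_N) \leq T \cdot \max_i \Gamma_{k,\ell}(g_i)$, and balancing the two bounds with $k \approx \sqrt{n}$ and $\ell$ optimized against $s = N^\mu$, a case split on whether $sd \leq \sqrt{n}\log n$ (so that $T$ alone is large) or $sd > \sqrt{n}\log n$ (so that $d$ alone is large) produces the claimed bound $T \cdot d \geq n^{\Omega_\mu(\sqrt n)}$.

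\textbf{Main obstacle.} The delicate step is the per-gate upper bound. In~\cite{KayalSaha14} the $Q_{ij}$ are linear forms, so the number of monomials contributed by each factor is trivially small; here the unbounded degree of $Q_{ij}$ forces the use of the projection $\pi$. Designing $\pi$ so that it both kills enough high-degree contributions from $\prod Q_{ij}$ to yield the strong per-gate upper bound above, and yet preserves enough structure of $NW$ for a matching lower bound, is the key balance the proof must strike. A related technical subtlety is that one should first WLOG bound the degree of each $Q_{ij}$ (by a suitable homogenization of $P_N$, or by passing to the relevant low-degree truncation) so that the Leibniz expansion of $\partial^\alpha g$ has finitely many terms.
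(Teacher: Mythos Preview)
Your high-level plan (Nisan--Wigderson polynomial, projected shifted partials, subadditivity) matches the paper, but the per-gate upper bound you propose has a genuine gap, and the case split does not close it.

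Your bound $\Gamma_{k,\ell}(g)\le 2^{sd}\cdot\binom{N-|V_g|+\ell}{\ell}$ is correct as stated, but it is \emph{exponential} in $d$. In this model $d$ is completely unconstrained: the formal degree of the circuit may be vastly larger than $n$ (for instance $\prod_{j=1}^{d}(1+X_1)$ already has $d$ arbitrary). So in your Case~2, from $sd>\sqrt{n}\log n$ you only conclude $d>\sqrt{n}\log n/N^{\mu}$, which is at best polynomial in $n$; it is nowhere near $n^{\Omega(\sqrt n)}$. Consequently the argument yields at best $T\cdot 2^{O(sd)}\ge n^{\Omega(\sqrt n)}$, not $T\cdot d\ge n^{\Omega(\sqrt n)}$. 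Your final remark about ``WLOG bounding the degree of each $Q_{ij}$'' does not help either: even if every $Q_{ij}$ is linear (the Kayal--Saha setting), $d$ can still be huge, and the $2^{sd}$ factor still explodes. The difficulty is not the degree of each factor but the \emph{number} of factors.

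The paper's proof resolves exactly this point with a structural step you are missing. Writing each $Q_{ij}$ with nonzero constant term as $1+\h^{\ge 1}[Q_{ij}]$, the product over those $j$ expands into elementary symmetric polynomials $\es_l(\{\h^{\ge 1}[Q_{ij}]\})$; since $P_N$ has degree $n$, only $l\le n$ contribute to the degree-$n$ component. Invoking the Shpilka--Wigderson $\Sigma\Pi\Sigma\wedge$ formula for $\es_l$ of size $2^{O(\sqrt l)}$, one obtains a new circuit for $P_N$ whose product fan-in at level two is now at most $n$, at the cost of multiplying the top fan-in by only $\poly(d)\cdot 2^{O(\sqrt n)}$. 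After this reduction the paper applies random restrictions (keeping each variable with probability $N^{-(\mu+\delta)}$) so that every surviving $Q'_{ij}$ has all monomials of support at most $s'\approx\sqrt n$, and then invokes the standard bounded-bottom-support upper bound $\binom{n+r}{r}\binom{N}{m+rs'}$ from~\cite{KLSS14, KS-full}. The crucial point is that this upper bound is polynomial in the (now bounded) product fan-in, so comparing with the Kayal--Saha lower bound for $\nwm$ under the same restriction yields $T\cdot d\ge n^{\Omega_\mu(\sqrt n)}$ as claimed.
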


Given the depth reduction results of~\cite{GKKS13} and the observation mentioned earlier from~\cite{KayalSaha14}, it is known that any asymptotic improvement in the exponent of the lower bound (even for $s = O(\sqrt{n})$) would imply $\VNP$ is different from $\VP$.

As discussed in the introduction, even though this model seems a natural generalization of the model of sums of products of univariate polynomials, our lower bound technique is very different from those used in proving lower bounds for sums of products of univariates.  Our lower bound proof is based on ideas developed in the course of investigating homogeneous depth four arithmetic circuits~\cite{KLSS14, KS-full}.

\paragraph{Blackbox PIT : } We also consider the problem of PIT for the model of $\sop$ circuits. For general sums of products of even bivariate polynomials, this question seems quite difficult, and as of now we are not even able to obtain subexponential time PIT. However, as a consequence of our lower bounds and by suitably adapting hardness randomness tradeoffs for arithmetic circuits developed in~\cite{KI04} and~\cite{DSY09}, we are able to obtain PIT results in the setting where the top fan-in of the circuit is bounded, and when we have the promise that the circuit
computes a polynomial of low individual degree.

Our understanding of blackbox PIT for depth four circuits is very limited, and the results known are in very restricted settings. Saraf and Volkovich~\cite{SarafV11} gave blackbox PIT algorithms for multilinear depth 4 circuits with bounded top fan-in. To the best of our knowledge, the idea in~\cite{SarafV11} does not extend to the case of non-multilinear depth 4 circuits, even when the individual degree of each of the variables is at most $2$. Recently, Oliveira et al~\cite{OSV14} gave a subexponential time blackbox PIT for all depth four multilinear circuits\footnote{The running time increases with the size of the circuit, and in particular, it is subexponential time for polynomial sized depth four multilinear circuits.}. In the non-multilinear setting, Agrawal et al.~\cite{ASSS12} gave PIT algorithms for constant depth formulas in which the number of {\it occurences} of each variable is bounded. Without going into the technical details, we remark that the notion of {\it bounded occur} is a generalization of the well studied notion of bounded reads. The most closely related results to those in this paper that we are aware of are the recent papers of Gupta~\cite{Gupta14} and Mukhopadhyay~\cite{Mukhopadhyay15}, which give blackbox PIT results for sums of products of low degree polynomials, where the top sum fan-in is bounded and the circuits satisfy  certain algebraic geometric restrictions.

So, the question of getting PIT results for general depth four circuits (even with bounded top and bottom fan-in) remains wide open.  For instance we still do not know any nontrivial PIT results for a sum of constant many products of degree 2 polynomials. Though we still don't know how to deal with this question, when we replace the polynomials of low degree with polynomials of few variables (but of arbitrarily large degree), then we are able to obtain quasipolynomial PIT results. There is one added caveat however, that the final polynomial computed
needs to be of low individual degree (as seems necessary for PIT results obtained from 
the known hardness-randomness tradeoffs for bounded depth circuits~\cite{DSY09}). We now formally state the theorem.

\begin{thm}~\label{thm:mainthm2 intro}
Let $c$ and $\mu$ be arbitrary constants such that $c> 0$ and $0 \leq \mu < 1/2$, and let $\F$ be a field of characteristic zero. Let ${\cal C}$ be the set of polynomials $P$ in $N$ variables and individual degree at most $k$ over $\F$, with the property that $P$ can be expressed as 
$$P = \sum_{i = 1}^T \prod_{j = 1}^d Q_{ij}$$
such that
\begin{enumerate}
\item $T < \log^c N$
\item $k < \log ^c N$
\item $d < N^c$
\item each $Q_{ij}$ depends on at most $N^{\mu}$ variables 
\end{enumerate}
Then, there exists a constant $\epsilon < 1$ dependent only on $c$ and $\mu$, such that there is a hitting set of size $\exp(N^{\epsilon})$ for ${\cal C}$ which can be constructed in time $\exp(N^{\epsilon})$. 

\end{thm}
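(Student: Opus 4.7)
The plan is to apply the hardness-randomness paradigm in the spirit of Kabanets-Impagliazzo~\cite{KI04} and Dvir-Shpilka-Yehudayoff~\cite{DSY09}: use the hard polynomial produced by Theorem~\ref{thm:mainthm intro} as the seed of a Nisan-Wigderson-style polynomial generator, and argue that the generator hits every nonzero $C \in \mathcal{C}$. The hitting set is then obtained by instantiating the generator on a sufficiently dense grid in the seed space.

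Concretely, I would pick an integer $r$ (depending only on $\mu$) with $2 < r < 1/\mu$---an interval that is nonempty precisely because $\mu < 1/2$---and set $m = \lceil N^{1/r} \rceil$. Let $P_m$ be the hard polynomial of Theorem~\ref{thm:mainthm intro} on $m$ variables and of degree $n = m^{\Theta(1)}$, and fix a polynomial-based NW design $S_1, \ldots, S_N \subseteq [\ell]$ with $|S_i| = m$, $|S_i \cap S_j| \le r$, and $\ell = O(N^{2/r})$. Define the generator $G : \F^\ell \to \F^N$ by $G_i(y) = P_m(y_{S_i})$. The key claim is that if $C \in \mathcal{C}$ satisfies $C(G(y)) \equiv 0$, then $C \equiv 0$; granting this, a hitting set of size $\poly(T, d, n, k)^\ell = \exp(N^\epsilon)$, with $\epsilon$ slightly larger than $2/r$ but still strictly less than $1$, follows by Schwartz-Zippel on a polynomially-sized grid per seed coordinate.

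To prove the hitting-set claim I would run the standard hybrid argument over the $N$ substitutions $x_i \mapsto G_i(y)$: if $C \not\equiv 0$ but $C \circ G \equiv 0$, then some intermediate hybrid produces a nonzero polynomial $\widetilde{C}(x_i, y)$ with $\widetilde{C}(P_m(y_{S_i}), y) \equiv 0$. Expanding $\widetilde{C}$ in powers of $x_i$, using the individual-degree bound $k < \log^c N$, and specializing $y_{[\ell]\setminus S_i}$ to a generic field value, yields an identity $\sum_{j=0}^k P_m(z)^j \widetilde{h}_j(z) \equiv 0$ in the $m$ variables $z = y_{S_i}$, with not all $\widetilde{h}_j$ identically zero. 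By Lagrange interpolation at $k+1$ values of $x_i$, each $\widetilde{h}_j$ is a $\sop$ circuit with top fan-in at most $T(k+1)$, product fan-in $d$, and---crucially, by the intersection property of the design---bottom fan-in at most $s \cdot r = O_\mu(N^\mu)$ when regarded as a polynomial in the $m$ variables $z$. A Kabanets-Impagliazzo style reconstruction then converts this polynomial equation for $P_m$ into an explicit $\sop$ circuit computing $P_m$ itself, with only a polynomial blow-up in top fan-in and product fan-in. Since $s \cdot r \le m^{r\mu}$ and $r\mu < 1$, the reconstructed circuit is covered by Theorem~\ref{thm:mainthm intro} applied with parameter $\mu' = r\mu$, forcing $T \cdot d \ge n^{\Omega(\sqrt n)}$---a contradiction, since $T \cdot d$ is bounded by $\poly(N)$ while $n = m^{\Theta(1)} = N^{\Omega(1/r)}$ is super-polylogarithmic in $N$.

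The main technical obstacle is the reconstruction step: turning the root equation $\sum_j P_m(z)^j \widetilde{h}_j(z) \equiv 0$ into a genuine $\sop$ circuit for $P_m$, rather than a rational or algebraic extension. This needs to be carried out carefully within the $\sop$ model because the available lower bound is structure-specific, not a lower bound against arbitrary bounded-depth circuits; following the philosophy of \cite{KI04} and its depth-preserving refinement in \cite{DSY09}, one argues that extracting a univariate root of degree $k$ inflates the top fan-in and bottom fan-in by only a $\poly(k)$ factor and does not increase the product depth, so the $\sop$ structure is preserved. The delicate parameter balance---choosing $r$ slightly less than $1/\mu$ to get $r\mu < 1$ (needed to invoke Theorem~\ref{thm:mainthm intro} on $P_m$) while keeping $r > 2$ (needed for $\ell = N^{2/r}$ to be subpolynomial)---is exactly what the hypothesis $\mu < 1/2$ makes possible.
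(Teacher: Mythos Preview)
Your overall architecture matches the paper's: a Nisan--Wigderson generator seeded with the hard polynomial of Theorem~\ref{thm:mainthm intro}, a hybrid argument, and a root-extraction step in the style of~\cite{DSY09} to derive a small $\sop$ circuit for the hard polynomial, contradicting the lower bound. Your parameter discussion correctly isolates why $\mu<1/2$ is needed (the window $2<r<1/\mu$ must be nonempty).

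There is, however, a real gap in the reconstruction step. You assert that ``extracting a univariate root of degree $k$ inflates the top fan-in and bottom fan-in by only a $\poly(k)$ factor'' and conclude that the reconstructed circuit has $T\cdot d$ bounded by $\poly(N)$. This is not what the DSY machinery gives. Lemma~\ref{lem:DSY main} expresses $\h^{\le t}[P_m]$ as $\h^{\le t}[R_t(C_0,\dots,C_k)]$ where $R_t$ has degree $t$ equal to the degree $n$ of the \emph{root} $P_m$, not the degree $k$ of the equation. Expanding $R_n(C_0,\dots,C_k)$ as a $\sop$ circuit, each monomial $C_0^{e_0}\cdots C_k^{e_k}$ with $\sum e_j\le n$ contributes (after distributing the sums) roughly $\binom{T\cdot\poly(k)+n}{n}^{k+1}$ terms, and there are $\binom{n+k+1}{k+1}$ such monomials. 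The upshot is that the reconstructed top fan-in is $n^{O(Tk^4)}$, not $\poly(T,k,d)$; the product fan-in becomes $O(dn)$. The contradiction with $n^{\Omega(\sqrt n)}$ therefore requires
\[
Tk^4 + \log d \;=\; o(\sqrt n),
\]
which is exactly why the hypotheses $T,k<\log^c N$ and $d<N^c$ are stated as they are, and why $n$ must be a genuine power of $N$ rather than merely super-polylogarithmic. Your final sentence (``$T\cdot d$ is bounded by $\poly(N)$'') would let you tolerate $T,k$ as large as $N^{0.01}$, which the argument does \emph{not} survive.

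Two smaller points you will need to address when filling in the reconstruction: (i) Lemma~\ref{lem:DSY main} requires $\partial_y P(0,\dots,0,f(0))\neq 0$, so one must first pass to a suitable iterated derivative in $x_i$ (Lemma~\ref{lem:non zero derivative}) and translate the $y$-variables---both operations preserve the $\sop$ structure by Lemma~\ref{lem: model props}, but the translation means you recover a shift of $P_m$ rather than $P_m$ itself, and you then extract $P_m$ as the top homogeneous component via Lemma~\ref{lem:interpolation}; (ii) each of these steps multiplies the top fan-in by further $\poly(k,d,n)$ factors, which must be tracked. The paper's Claim~\ref{clm:dsy app} carries out this bookkeeping.
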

Moreover, from our proof, it also follows that if each of polynomial $Q_{ij}$  depends only on $\log^{O(1)} N$ variables, then both the size of the hitting set and the time to construct it, are upper bounded by a quasipolynomial function in $N$. 

\paragraph{Organisation of the paper:} We provide an overview of the proofs in Section~\ref{sec:overview}. We describe some definitions and preliminaries in Section~\ref{sec:prelims}. We present the proof of the lower bound in Section~\ref{sec:lower bound}. We describe the application to blackbox PIT in Section~\ref{sec:pit} and conclude with some open problems in Section~\ref{sec:open ques}. 
\section{Proof overview}~\label{sec:overview}
In this section, we provide an overview of the main ideas in proofs of Theorem~\ref{thm:mainthm intro} and Theorem~\ref{thm:mainthm2 intro}.
\subsection{Overview of proof of Theorem~\ref{thm:mainthm intro}}~\label{sec:overview lower bounds}
We restate Theorem~\ref{thm:mainthm intro} for the sake of clarity. \\
{\bf Theorem~\ref{thm:mainthm intro}}~\label{thm:mainthm intro2}
{\it Let $\F$ be a field of characteristic zero and $\mu$ be any constant such that $0 \leq \mu < 1$. There exists a family $\{P_N\}$ of polynomials over $\F$ in $\VNP$, where $P_N$ is of degree $n$  in $N = n^{O_{\mu}(1)}$ variables, such that for any representation of $P_N$ of the form
$$P_N = \sum_{i = 1}^T\prod_{j = 1}^{d} Q_{ij}$$
where each $Q_{ij}$ is polynomial in only  $N^{\mu}$ variables, it must be true that   $$T\cdot d \geq n^{\Omega_{\mu}(\sqrt{n})}$$

}
The key difference between proving the above lower bound and the lower bounds for homogeneous depth four circuits is that the formal degree of the circuit in the above case could be much larger than the degree of the polynomial, which is $n$. In fact, even the fan-in of the product gates at level 2, that is $d$ could be much larger than $n$. Therefore, a straightforward application of homogeneous depth four circuit lower bounds does not seem to work. Our proof is in two steps and at a high level follows the strategy of the lower bound for non-homogeneous depth three circuits with bounded bottom fan-in by Kayal and Saha~\cite{KayalSaha14} with some key differences. 
\begin{itemize}
\item In the first step, we obtain another representation of $P_N$, as $$P_N = \sum_{i = 1}^{Td2^{O(\sqrt{n})}}\prod_{j = 1}^{n} Q_{ij}'$$
where every monomial in each of the $Q_{ij}'$ has {\it support}\footnote{A monomial is said to have support support $s$ if it depends on at most $s$ distinct variables.} at most $s$, although each $Q_{ij}'$ could now depend on all the variables. The key property that we have gained from this transformation is that the fan-in of the product gates at level two is bounded by $n$ now, which is the degree of $P_N$. However, we have no bound on the degree of the $Q_{ij}'$. Moreover,  we have blown up the top fan-in a bit, but we  will be able to tolerate this loss if $s$ is small.
\item In the second step, the strategy can be seen in two stages. If $\mu$ was very small, say $0.001$, then we could have taken advantage of the fact that in the representation obtained in the first step above, the product fan-in is at most $n$ and the support of every monomial in each of the $Q_{ij}'$ is small, to prove an upper bound on the dimension of the space of projected shifted partial derivatives of the above representation. Comparing this dimension with that of our hard polynomial gives us our lower bound. For larger values of $\mu$, we use random restrictions to ensure that all the monomials of {\it large support} in  $Q_{ij}'$ are set to zero. At the end of such a procedure, we are back to the low support case. This step of the proof is closely along the lines of the proof of homogeneous depth four arithmetic circuit lower bounds in~\cite{KLSS14, KS-full} although in the present case, formal degree of the circuit could be as large as $n^2$, which is much larger than the degree of the polynomial $P_N$. For such large formal degrees, in general we do not even know lower bounds for non-homogeneous depth three circuits.  
\end{itemize}
We would like to point out that the first step of the proof above is similar to the homogenization step in the proof of lower bounds for general depth three circuits with bounded bottom fan-in by Kayal and Saha~\cite{KayalSaha14}. The key difference is that while the circuit they obtain at the end of this step is a strictly homogeneous circuit of formal degree $n$, we are unable to get a similar structure. The complication stems from the fact that when $Q_{ij}$ are not affine forms, they could contain monomials of varying degrees. In this case, it seems difficult to obtain a strict homogenization with a small blow up in size. We get around this deficiency by a more subtle analysis in  the second step, where we show a lower bound for a circuit which has a formal degree much larger than the degree of the polynomial being computed, but has some added structure. This step critically uses that the fact that the product fan-in at level two of these circuits is at most $n$, and the support of every monomial in each of the $Q_{ij}'$ is small.

\subsection{Overview of proof of Theorem~\ref{thm:mainthm2 intro}} 
We first restate  Theorem~\ref{thm:mainthm2 intro}. \\
{\bf Theorem~\ref{thm:mainthm2 intro}}~\label{thm:mainthm2 intro 2}{\it
Let $c$ and $\mu$ be arbitrary constants such that $c> 0$ and  $0 \leq \mu < 1/2$, and let $\F$ be a field of characteristic zero. Let ${\cal C}$ be the set of polynomials $P$ in $N$ variables and individual degree at most $k$ over $\F$, with the property that $P$ can be expressed as 
$$P = \sum_{i = 1}^T \prod_{j = 1}^d Q_{ij}$$
such that
\begin{enumerate}
\item $T < \log^c N$
\item $k < \log ^c N$
\item $d < N^c$
\item each $Q_{ij}$ depends on at most $N^{\mu}$ variables 
\end{enumerate}
Then, there exists a constant $\epsilon < 1$ dependent only on $c$ and $\mu$, such that there is a hitting set of size $\exp(N^{\epsilon})$ for ${\cal C}$ which can be constructed in time $\exp(N^{\epsilon})$. 
 }

\vspace{2mm}
The construction of the hitting set is based on the well known idea of using hard functions for derandomization. Our goal is to reduce the number of variables from $N$ to at most $N^{\delta}$ for some constant $\delta < 1$, while maintaining the zeroness/nonzeroness of the polynomial being tested~\cite{KI04, DSY09}. Once we have done this, we take a brute force hitting set of size $\text{(Degree + 1)}^{\text{Number of variables}}$ as given by Lemma~\ref{lem: comb nulls}. 
To reduce the number of variables, we use the framework introduced by Kabanets and Impagliazzo~\cite{KI04}. 

The key technical step of the proof is to show that for a non-zero polynomial $P$ as defined above, if there exists a polynomial $f \in \F[X_1, X_2, \ldots, X_{i-1}, X_{i+1}, X_{i+2}, \ldots, X_{N}]$ such that $X_i-f$ divides $P$, then $f$ can also be expressed as a sum of products of polynomials in few variables of  reasonably small size. This step crucially uses a statement about complexity of  roots of polynomials computed by low depth circuits from~\cite{DSY09}. Therefore, if $f$ is a polynomial which does not have a small representation as a sum of products of polynomials in {\it few} variables, then $X_i - f$ does not divide $P$. This observation guarantees that the construction of hitting sets from hard polynomials given by~\cite{KI04} works for this class of circuits.  


\section{Notation and Preliminaries}~\label{sec:prelims}
We now introduce some notation and preliminary notions that we use in the rest of the paper. 
\paragraph{Computational model : } In this work, we consider the model of sums of products of polynomials in few variables. More formally, we consider representations of polynomials $P$ (degree $n$ in $N = n^{O(1)}$ variables) in the form
\begin{equation}~\label{def:model}
P = \sum_{i=1}^T \alpha_i\cdot \prod_{j = 1}^d Q_{ij}
\end{equation}
where each $Q_{ij}$ is an arbitrary polynomial (of arbitrarily high degree) in at most $s$ variables and each $\alpha_i$ is a field constant. We call this the model of $\sop$ circuits. We use the quantity $Td$ as a measure of the size of a $\sop$ circuit.   Without loss of generality, we can assume that the degree zero term in each of the $Q_{ij}$ is either zero or one. If it is a non-zero constant other than $1$, we can extract it out and absorb it in $\alpha_i$. For each of the product gates, the fan-in could be different, but  we can assume without loss of generality that all the product fan-ins are equal to $d$. Observe that the $d$ could be much larger than the degree of the polynomial $P$. Throughout this paper, we will be working over a field of characteristic zero. 
\paragraph{Some basic notations : }
\begin{enumerate}
\item For an integer $i$, we denote the set $\{1, 2, \ldots, i\}$ by $[i]$.
\item By $\overline{X}$, we mean the set of variables $\{X_1, X_2, \ldots, X_N\}$. 
\item For a polynomial $P$ and a positive integer $i$, we represent by $\h^i[P]$, the homogeneous component of $P$ of degree equal to $i$. By $\h^{\leq i}[P]$ and $\h^{\geq i}[P]$, we represent the component of $P$ of degree at most $i$ and at least $i$ respectively. 
\item The support of a monomial $\alpha$ is the set of variables which appear with a non-zero exponent in $\alpha$. We denote the size of the support of $\alpha$ by $\text{Supp}(\alpha)$. 
\item Throughout the paper, we say that a function $f(N)$ is subexponential in $N$ if there exists a positive real number $\epsilon$, such that $\epsilon < 1$ and for all $N$ sufficiently large, $f(N) < \exp(N^{\epsilon})$. 
\item  We say that a function $f(N)$ is quasipolynomial in $N$ if there exists a positive absolute constant $c$, such that  for all $N$ sufficiently large, $f(N) < \exp(\log^c N)$. 
\item In this paper, we only consider layered arithmetic circuits and we will be counting levels from top to bottom, starting with the output gates being at level one. 
\item By a $\Sigma\Pi\Sigma\wedge$ circuit, we refer to a depth four circuit with all the product gates at the lowest level being replaced by powering ($\wedge$) gates. Similarly, by a $\Sigma\Pi\Sigma\wedge\Sigma\Pi$ circuit, we mean a depth six circuit all of whose product gates at level four from the top are powering gates. 
\end{enumerate}
\paragraph{Hitting set : } Let ${\cal C}$ be a set of polynomials in $N$ variables over a field $\F$. Then, a set  ${\cal H} \subseteq \F^{N}$ is said to be a {\it hitting set} for the class ${\cal C}$, if for every polynomial $P \in {\cal C}$ such that $P$ is not the identically zero polynomial, there exists a $p \in {\cal H}$ such that $P(p) \neq 0$.  

\paragraph{Elementary symmetric polynomials : } For variables $\overline{X} = \{X_1, X_2, \ldots, X_N\}$ and any integer $0 \leq l \leq N$,  the elementary symmetric polynomial of degree $l$ on variables $\overline{X}$ is defined as $$\es_l(\overline{X}) = \sum_{S \subseteq [N], |S| = l} \prod_{j \in S} X_j$$

\paragraph{Projected shifted partial derivatives :}
A key idea behind the recent progress on lower bounds is the notion of {\it shifted partial derivatives} introduced in~\cite{Kayal12}. In this paper, we use a variant of the measure, called projected shifted partial derivatives introduced in~\cite{KLSS14} and subsequently used in~\cite{KS-full}. Although we  never explicitly do any calculations with the measure in this paper, we provide a brief introduction to it below since the bounds are based on it.

For a polynomial $P$ and a monomial $\gamma$,  ${\partial_{\gamma} (P)}$ is the partial derivative of $P$ with respect to $\gamma$. For every polynomial $P$ and a set of monomials ${\cal M}$,  $\partial_{\cal M} (P)$ is the set of partial derivatives of $P$ with respect to monomials in ${\cal M}$. The space of $({\cal M}, m)\mhyphen$projected shifted partial derivatives of a polynomial $P$ is defined below. 
\begin{define}[$({\cal M}, m)\mhyphen$projected shifted partial derivatives]\label{def:shiftedderivative}
For an $N$ variate polynomial $P \in {\field{F}}[X_1, X_2, \ldots, X_{N}]$, set of monomials ${\cal M}$ and a positive integer $m\geq 0$, the space of $({\cal M}, m)$-projected shifted partial derivatives of $P$ is defined as
\begin{align}
 \langle \partial_{\cal M} (P)\rangle_{m} \stackrel{def}{=} \field{F}\mhyphen span\{\sigma(\prod_{i\in S}{X_i}\cdot g)  :  g \in \partial_{\cal M} (P), S\subseteq [N], |S| = m\}
\end{align}
\end{define}
Here, $\sigma(P)$ of a polynomial $P$ is the projection of $P$ on the multilinear monomials in its support. 
The measure of complexity of a polynomial that we use in this paper, is the dimension of projected shifted partial derivative space of $P$ with respect to some set of monomials ${\cal M}$ and a parameter $m$. Formally, 
$$\Phi_{{\cal M}, m} (P) = \dim( \langle \partial_{\cal M} (P)\rangle_{m})$$

From the definitions, it is straight forward to see that the measure is subadditive.
\begin{lem}[Sub-additivity]~\label{subadditive}
Let  $P$ and $Q$ be any two multivariate polynomials in $\F[X_1, X_2, \ldots, X_{N}]$.  Let ${\cal M}$ be any set of monomials and $m$ be any positive integer. Then, for all scalars $\alpha$ and $\beta$
$$\Phi_{{\cal M}, m} (\alpha\cdot P + \beta\cdot Q) \leq \Phi_{{\cal M}, m} (P) + \Phi_{{\cal M}, m} (Q)$$
\end{lem}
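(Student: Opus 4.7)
The plan is to unwind the definition of $\Phi_{{\cal M},m}$ and exploit the $\F$-linearity of every operation that goes into it: taking partial derivatives, multiplying (``shifting'') by a multilinear monomial $\prod_{i \in S}X_i$, and the projection $\sigma$ onto multilinear monomials. Because each of these operations is an $\F$-linear map on $\F[X_1,\ldots,X_N]$, the space $\langle \partial_{\cal M}(\alpha P+\beta Q)\rangle_m$ will fit inside the vector-space sum of the two analogous spaces for $P$ and $Q$, and then the standard bound $\dim(U+V)\leq \dim U + \dim V$ finishes the job.

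Concretely, I would first observe that for every monomial $\gamma \in {\cal M}$, linearity of differentiation gives $\partial_{\gamma}(\alpha P+\beta Q)=\alpha\,\partial_{\gamma}P+\beta\,\partial_{\gamma}Q$, so every element of $\partial_{\cal M}(\alpha P+\beta Q)$ is an $\F$-linear combination of elements in $\partial_{\cal M}(P)\cup \partial_{\cal M}(Q)$. Next, for any fixed subset $S\subseteq [N]$ with $|S|=m$, multiplication by $\prod_{i\in S}X_i$ is $\F$-linear, and the projection $\sigma$ onto multilinear monomials is $\F$-linear as well (it is the coordinate projection in the monomial basis onto those basis elements that happen to be multilinear). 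Composing these linear maps, each generator $\sigma\bigl(\prod_{i\in S}X_i\cdot g\bigr)$ with $g\in \partial_{\cal M}(\alpha P+\beta Q)$ lies in $\langle \partial_{\cal M}(P)\rangle_m + \langle \partial_{\cal M}(Q)\rangle_m$.

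Taking the $\F$-span, I would then conclude
\[
\langle \partial_{\cal M}(\alpha P+\beta Q)\rangle_m \;\subseteq\; \langle \partial_{\cal M}(P)\rangle_m \;+\; \langle \partial_{\cal M}(Q)\rangle_m,
\]
and the elementary inequality $\dim(U+V)\leq \dim U+\dim V$ for subspaces $U,V$ of $\F[X_1,\ldots,X_N]$ immediately yields the claimed bound on $\Phi_{{\cal M},m}(\alpha P+\beta Q)$.

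Honestly, there is no real obstacle here; the statement is a formal consequence of linearity together with a standard dimension inequality. The only thing to be careful about is to check that the generating set described in Definition~\ref{def:shiftedderivative} is literally the image of a linear map applied to $P$ (resp.\ $Q$), so that the inclusion of generating sets translates cleanly to an inclusion of $\F$-spans; this is where I would take a bit of care in the write-up, but there is no actual computation to perform.
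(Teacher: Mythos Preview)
Your proposal is correct and matches the paper's approach: the paper simply states that subadditivity ``is straight forward to see'' from the definitions and gives no further argument, so your linearity-based unwinding is exactly the intended (and only reasonable) justification.
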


\paragraph{Approximations : }We will refer to the following lemma to approximate expressions during our calculations. 

\begin{lem}[\cite{GKKS12}]~\label{lem:approx}
Let $a(n), f(n), g(n) : \Z_{>0}\rightarrow \Z_{>0}$ be integer valued functions such that $(f+g) = o(a)$. Then,
$$\log \frac{(a+f)!}{(a-g)!} = (f+g)\log a \pm O\left( \frac{(f+g)^2}{a}\right)$$
\end{lem}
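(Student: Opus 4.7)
The plan is to reduce the ratio of factorials to a product of $f+g$ consecutive integers and then apply the Taylor expansion of $\log(1+x)$ termwise. First I would write
\[
\frac{(a+f)!}{(a-g)!} \;=\; \prod_{j=1}^{f+g}(a - g + j),
\]
so that
\[
\log\frac{(a+f)!}{(a-g)!} \;=\; \sum_{j=1}^{f+g} \log(a - g + j) \;=\; (f+g)\log a \;+\; \sum_{j=1}^{f+g}\log\!\left(1 + \frac{j-g}{a}\right).
\]
This cleanly isolates the claimed main term $(f+g)\log a$ on the right, and the whole task reduces to bounding the residual sum by $O((f+g)^2/a)$.

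Next I would control that residual by a uniform Taylor expansion. Because $|j-g| \leq f+g = o(a)$, every argument of the inner logarithm is $1+o(1)$, so the expansion $\log(1+x) = x - x^2/2 + O(x^3)$ applies uniformly in $j$. The linear contribution is
\[
\frac{1}{a}\sum_{j=1}^{f+g}(j-g) \;=\; \frac{(f+g)(f-g+1)}{2a} \;=\; O\!\left(\frac{(f+g)^2}{a}\right),
\]
which already fits inside the stated error. The quadratic contribution is at most $\sum_{j=1}^{f+g} O((j-g)^2/a^2) = O((f+g)^3/a^2)$, and since $(f+g)=o(a)$ we have $(f+g)^3/a^2 = o((f+g)^2/a)$, so it is absorbed. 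The same reasoning handles the cubic and higher remainders, which are only smaller.

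The main obstacle, such as it is, is purely bookkeeping. The signed sum $\sum_{j=1}^{f+g}(j-g)$ is not symmetric about $0$ in general, so it does not vanish; one has to verify that it is nonetheless of the correct order $(f+g)^2/a$, which the computation above confirms. One also needs the hypothesis $(f+g)=o(a)$ to be strong enough to make each higher-order Taylor remainder strictly smaller than the quadratic one; the verification is routine. No deeper idea is required, since the estimate is essentially a termwise application of the first-order Taylor expansion of $\log$ around $a$.
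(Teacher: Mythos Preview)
Your argument is correct and complete: the factorization $\frac{(a+f)!}{(a-g)!}=\prod_{j=1}^{f+g}(a-g+j)$ is right, and the termwise Taylor expansion of $\log(1+(j-g)/a)$ with your bounds on the linear, quadratic, and higher-order sums yields exactly the claimed estimate. The paper itself does not supply a proof of this lemma --- it is quoted from \cite{GKKS12} and used as a black box --- so there is no in-paper argument to compare against; your direct computation is the standard way to establish it.
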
 

In the proofs in this paper, we  use Lemma~\ref{lem:approx} only in situations where $(f+g)^2$ will  be $O(a)$. In this case, the error term will be bounded by an absolute constant. So, up to  constant factors, $\frac{(a+f)!}{(a-g)!} = a^{(f+g)}$. We use the symbol $\approx$ to indicate equality up to  constant factors.

\paragraph{Complexity of coefficients and homogeneous components :} We now summarise two simple lemmas which are useful for our proof.  The first lemma summarises that given a circuit $C$ for a polynomial $P \in \F[X1, X2, \ldots, X_{N}, Y]$ of degree at most $d$, for every $0 \leq i \leq d$, the coefficient of $Y^i$ in $P$ (when viewing $P$ as a polynomial in $\F[X_1, X_2, \ldots, X_{N}][Y]$) can also be computed by a circuit of size not much larger than the size of $C$.

\begin{lem}~\label{lem:extracting coefficients}
Let $P \in \F[X_1, X_2, \ldots, X_{N}, Y]$ be a polynomial of degree at most $d$ in $Y$ over a field  $\F$ of characteristic zero, such that $P$ is computable by an arithmetic circuit $C$ of size $|C|$. 
Let $$P = \sum_{i = 0}^d Q_i(X_1, X_2, \ldots, X_{N})\cdot Y^i$$
for polynomials $Q_i(X_1, X_2, \ldots, X_{N}) \in \F[X_1, X_2, \ldots, X_{N}]$.  
Then, for every $i$ such that $0 \leq i \leq d$, the polynomial $Q_i$ can be computed by an arithmetic circuit $C'$ of size at most $|C|\cdot (d+1)$. Moreover, if the output gate of $C$ is a $+$ gate, then the depth of $C'$ is equal to the depth of $C$. Else, the depth of $C'$ is at most $1$ more than the depth of $C$. 
\end{lem}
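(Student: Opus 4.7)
The natural approach is polynomial interpolation in the variable $Y$. Since $\F$ has characteristic zero, I can pick $d+1$ distinct field elements $\alpha_0, \alpha_1, \ldots, \alpha_d \in \F$. For each such $\alpha_j$, evaluating $P$ at $Y = \alpha_j$ gives
\[
P(X_1, \ldots, X_N, \alpha_j) \;=\; \sum_{i=0}^d Q_i(X_1, \ldots, X_N)\cdot \alpha_j^i.
\]
Viewing $(Q_0, Q_1, \ldots, Q_d)$ as an unknown vector, the right-hand side is a Vandermonde linear system in these unknowns; since the $\alpha_j$ are distinct, the Vandermonde matrix $(\alpha_j^i)_{i,j}$ is invertible over $\F$. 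Therefore there exist field constants $\{c_{ij}\}_{0 \le i,j \le d}$ (the entries of the inverse Vandermonde) such that
\[
Q_i(X_1, \ldots, X_N) \;=\; \sum_{j=0}^d c_{ij}\cdot P(X_1, \ldots, X_N, \alpha_j).
\]

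To build the circuit $C'$ for $Q_i$, the plan is to take $d+1$ copies of $C$, where in the $j$th copy the input wire for $Y$ is hardwired to the constant $\alpha_j$. Each such copy has size $|C|$ and computes $P(X_1, \ldots, X_N, \alpha_j)$. I then combine these $d+1$ outputs by a single linear combination using the scalars $c_{ij}$, yielding a circuit for $Q_i$ of size at most $(d+1)|C|$ plus $O(d)$ for the final addition, which can be absorbed into the bound $|C|\cdot (d+1)$.

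For the depth claim, note that the $d+1$ sub-circuits run in parallel at the same depth as $C$. If the output gate of $C$ is already a $+$ gate, the top addition gates of the $d+1$ copies can be merged into a single $+$ gate of the appropriate (larger) fan-in, with the scalars $c_{ij}$ pushed down onto the incoming wires; this keeps the depth exactly equal to $\mathrm{depth}(C)$. Otherwise, the combining $+$ gate adds one new layer on top, giving depth at most $\mathrm{depth}(C)+1$.

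The only genuinely delicate point is verifying that hardwiring $Y = \alpha_j$ inside $C$ does not increase the stated size or depth (this is immediate since we replace a single input by a constant), and that the final linear combination can indeed be folded into an existing top $+$ gate without increasing depth in the first case. Both are routine; there is no real obstacle. The use of characteristic zero is not essential — any field containing $d+1$ distinct elements (which is guaranteed once $|\F| > d$) suffices, but the statement is only invoked in the characteristic zero setting of the paper.
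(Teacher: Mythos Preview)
Your proof is correct and essentially identical to the paper's own argument: both use Lagrange interpolation in $Y$, take $d+1$ copies of $C$ with $Y$ set to distinct scalars in $\F$, and combine their outputs by a single $\F$-linear combination, with the same case analysis for depth depending on whether the output gate of $C$ is a $+$ gate.
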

\begin{proof}
 We can view $P$ as a univariate polynomial of degree at most $d$ in $Y$ with the coefficients coming from $\F(\overline{X})$.  From the classical Lagrange interpolation, we know that the coefficient of $Y^i$ in $P$ can be written as an $\F(\overline{X})$ linear combination of the evaluations of $P$ at $d+1$ distinct values of $Y$ taken from $\F(\overline{X})$. In fact, more strongly, we can evaluate $P$ at $d+1$ values of $Y$ all chosen from $\F$ itself, in which case the constants in the linear combination are also from $\F$.  
So, $Q_i$ can be computed by a circuit obtained from taking $d+1$ circuits each obtained from $P$ by substituting $Y$ by a scalar in $\F$,  and taking their linear combination. Let this circuit be $C'$. Clearly the size of $C'$ is at most $(d+1)$ times the size of $C$. If the output gate of $C$ was an addition gate, then the outer addition for the linear combination can be absorbed into it, and the depth remains the same. Else, the depth increases by one. 
\end{proof}

The second  lemma stated below essentially says that the circuit complexity of homogeneous components of a polynomial is not much larger than the circuit complexity of the polynomial itself. 

\begin{lem}~\label{lem:interpolation}
Let $P$ be a polynomial of degree at most $d$ in $N$ variables over a field  $\F$ of characteristic zero, such that $P$ is computable by an arithmetic circuit $C$ of size $|C|$. Then, for every $i$ such that $0 \leq i \leq d$, the homogeneous component of degree $i$ of $P$ can be computed by an arithmetic circuit $C'$ of size at most $|C|\cdot (d+1)$. Moreover, if the output gate of $C$ is a $+$ gate, then the depth of $C'$ is equal to the depth of $C$. Else, the depth of $C'$ is at most $1$ more than the depth of $C$. 
\end{lem}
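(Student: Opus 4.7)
The plan is to mimic the structure of the proof of Lemma~\ref{lem:extracting coefficients}, but applied to a one-parameter scaling of $P$ rather than to a designated variable $Y$. Specifically, for any scalar $\alpha \in \F$, consider the polynomial
\[
P^{(\alpha)}(X_1,\ldots,X_N) \;=\; P(\alpha X_1, \alpha X_2, \ldots, \alpha X_N).
\]
If $P = \sum_{i=0}^d H_i$ is the decomposition of $P$ into homogeneous components $H_i$ of degree $i$, then by homogeneity $P^{(\alpha)} = \sum_{i=0}^d \alpha^i H_i$. Viewing this as a univariate polynomial of degree at most $d$ in $\alpha$ whose coefficients lie in $\F[X_1,\ldots,X_N]$, classical Lagrange interpolation gives, for each fixed $i \in \{0,1,\ldots,d\}$, scalars $c_1,\ldots,c_{d+1}\in \F$ and distinct field elements $\alpha_1,\ldots,\alpha_{d+1}\in\F$ such that
\[
H_i \;=\; \sum_{k=1}^{d+1} c_k\, P^{(\alpha_k)}.
\]

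The key observation is that each $P^{(\alpha_k)}$ is computed by a circuit of the \emph{same} size and depth as $C$. Indeed, to compute $P^{(\alpha_k)}$ one simply takes the circuit $C$ and replaces each leaf labeled $X_j$ by the edge-scaled input $\alpha_k X_j$; the scalars $\alpha_k$ can be absorbed into the incoming edge weights of the first level of gates (or into the leading multiplications) without adding any new gate or layer. Thus, $P^{(\alpha_k)}$ admits a circuit of size exactly $|C|$ and depth exactly that of $C$.

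To build the final circuit $C'$ for $H_i$, I take these $d+1$ circuits, scale their outputs by $c_1,\ldots,c_{d+1}$ respectively, and sum them with an outer addition gate. The total size is at most $(d+1)\cdot|C|$. If the output gate of $C$ is already a $+$ gate, the outer linear combination can be merged into the existing top addition gates of the $d+1$ copies, yielding a circuit $C'$ of the same depth as $C$. Otherwise, the extra top-level addition increases the depth by exactly one. This is a direct and entirely routine argument; there is no real obstacle, and essentially no calculation is needed beyond noting that homogeneous components are recovered by coefficient extraction on the scaling parameter. (One could equivalently deduce the statement from Lemma~\ref{lem:extracting coefficients} by introducing a fresh variable $Y$ and applying it to $P(YX_1,\ldots,YX_N)$, but the direct Lagrange argument above avoids the spurious $N$ additional multiplication gates that the substitution $X_j \mapsto YX_j$ would introduce.)
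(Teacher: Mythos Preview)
Your proof is correct and takes essentially the same approach as the paper: the paper substitutes $X_j \mapsto X_j \cdot t$ for a fresh variable $t$ and then invokes Lemma~\ref{lem:extracting coefficients} to extract the coefficient of $t^i$, which unwinds to exactly the Lagrange interpolation on the scaling parameter that you carry out directly. Your concern about ``spurious $N$ multiplication gates'' is moot, since in the paper's route Lemma~\ref{lem:extracting coefficients} immediately sets $t$ to scalars in $\F$, whereupon the products $X_j\cdot t$ become scalar-weighted edges just as in your argument.
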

\begin{proof}
Let $P'(t)$ be the polynomial obtained from $P$ by replacing every variable $X$ in $P$ by $X\cdot t$ for a new variable $t$. We can view $P'$ to be a univariate polynomial of degree at most $d$ in $t$ with the coefficients coming from $\F(\overline{X})$. Observe that for every $i$ such that $0 \leq i \leq d$, the homogeneous component of $P$ of degree equal to $i$ is equal to the coefficient of $t^i$ in $P'$. The proof now follows from  Lemma~\ref{lem:extracting coefficients}. 
\end{proof}

\section{Proof of the lower bound}~\label{sec:lower bound}
In this section, we give the proof of Theorem~\ref{thm:mainthm intro}. We prove the lower bound for a variant of the well known  family of Nisan-Wigderson polynomials defined by Kayal and Saha~\cite{KayalSaha14}.  
\subsection{Target polynomials for the lower bound}
We now define the family of polynomials of degree $n$ in $N$ variables for which we prove the lower bounds. The family is a variant of the Nisan-Wigderson polynomials which were introduced by Kayal et al in~\cite{KSS13} in the context of lower bounds for homogeneous depth four circuits. The particular variant we use in the paper is due to Kayal and Saha~\cite{KayalSaha14}. 

The tradeoff between the number of variables $N$ and the degree $n$ will be parameterized by the parameter $\mu$ where $0 \leq \mu < 1$.  
First we need some parameters, which we define below. 
\begin{enumerate}
\item $\delta = (1-\mu)/2$ is a positive real number such that $\mu + \delta < 1$. 
\item $\gamma = \frac{2(\mu + \delta) + 1}{1-\mu-\delta} $. 
\item $N$ is chosen such that $N/n$ is a prime number between $n^{1 + \gamma}$ and $2n^{1+\gamma}$. Such a prime number always exists from the Bertrand-Chebychev theorem. Without loss of generality, we pick the smallest one. 
\item $\rho = (\mu + \delta)\frac{\log N}{\log n}$
\item $D   = \frac{\gamma + \rho}{2(1 + \gamma)} \cdot n$ , where $D-1$ is the degree of the underlying univariate polynomials in the definition of $\nwm$.
\end{enumerate}
Let $\psi$ be the prime number equalling $N/n$. We are now ready to restate the definition of $\nwm$ from~\cite{KayalSaha14}. 
\begin{define}[Nisan-Wigderson Polynomials~\cite{KayalSaha14}]~\label{defn:NW}  Let $\mu$ be a real number such that $0 \leq \mu < 1$.  For a given $\mu$ and $n$, let $N$, $D$, $\psi$ be as defined above. For the set of $N$ variables $\{X_{ij} : i\in [n], j \in [\psi]\} $, we define the degree $n$ homogeneous polynomial $\nwm$ as 
$$\nwm = \sum_{\substack{f(z) \in \F_{\psi}[z] \\
                        deg(f) \leq D-1}} \prod_{i \in [n]} X_{if(i)}$$
\end{define}                     
                       
From the definition, we can observe the following properties of $\nwm$. 
\begin{enumerate}
\item The number of monomials in $\nwm$ is exactly ${\psi}^{D} = n^{O(D)}$. 
\item Each of the monomials in $\nwm$ is multilinear.
\item Each monomial corresponds to evaluations of a univariate polynomial of degree at most $D-1$ at all points of $\F_{\psi}$. Thus, any two distinct monomials agree in at most $D-1$ variables in their support. 
\end{enumerate}
We will also need the following lemma in our proof. 
\begin{lem}~\label{lem: NW eval}
Let $\mu$ be a non-negative real number less than $1$.  Given $q \in \F^N$, $\mu$, $n$, we can evaluate the polynomial $\nwm$ at $q$ in time $N^{O(n)}$. 
\end{lem}
\begin{proof}
Given $n$ and $\mu$, we first find $D$, $\psi$ as given by the choice of parameters. 
Once we have $D$, we iterate through every monomial $\alpha$ of degree $n$ in the $\overline{X}$ variables which is supported on all the rows of the variable matrix and check if it is in the polynomial $\nwm$ by trying to find a univariate polynomial  $f(z) \in \F_{\psi}[z]$ such that degree of $f$ is at most $D-1$ and  $\prod_{i \in [n]} X_{if(i)} = \alpha$. The interpolation takes only $\text{Poly}(n)$ time, and the total number of monomials to try is at most $N^n$. So, we get the lemma.  
\end{proof}

We now proceed with the proof as outlined in Section~\ref{sec:overview lower bounds}. 
\subsection{Reducing the product fan-in at level two}
Let $P$ be a homogeneous polynomial in $N$ variables of degree $n$ which has a $\sop$ circuit of top fan-in $T$ and product fan-in $d$ at the second level. In other words, there exist polynomials $\{Q_{ij} : i \in [T], j \in [d]\}$ in at most $s$ variables each, such that 
\begin{equation}~\label{def:model2}
P = \sum_{i=1}^T \alpha_i\cdot\prod_{j = 1}^d Q_{ij}
\end{equation} 
Recall that without loss of generality, we can assume  that the constant term in each of the $Q_{ij}$ is either $0$ or $1$. We have the following lemma. 
\begin{lem}~\label{lem:homog}
Let $\F$ be a field of characteristic zero. Let $P$ be a homogeneous polynomial of degree $n$ in $N$ variables over $\F$ as defined above.  For each $i$,  $1\leq i \leq T$ define the set $$S_i = \{j : 1 \leq j \leq d \text{ and } \h^{0}[Q_{ij}] = 1\}$$ Then,
\begin{equation}
P = \sum_{i = 1}^T \alpha_i \cdot \h^n\left[\prod_{j \notin S_i} Q_{ij} \times \sum_{l = 0}^{n} \es_l(\{\h^{\geq 1}[Q_{ij}] : j \in S_i\})\right] 
\end{equation}
\end{lem}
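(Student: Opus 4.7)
The proof is essentially a direct algebraic manipulation using the definition of $S_i$ and the homogeneity of $P$. The plan is to split each product $\prod_{j=1}^d Q_{ij}$ according to $S_i$, expand the ``easy'' part using elementary symmetric polynomials, and then invoke homogeneity to take the degree-$n$ component.

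First, I would observe that since the constant term of each $Q_{ij}$ is either $0$ or $1$, for $j \in S_i$ we can write $Q_{ij} = 1 + \h^{\geq 1}[Q_{ij}]$, and for $j \notin S_i$ we have $Q_{ij} = \h^{\geq 1}[Q_{ij}]$, so in particular $\prod_{j \notin S_i} Q_{ij}$ is unchanged. Splitting the product gives
\[
\prod_{j=1}^d Q_{ij} \;=\; \Bigl(\prod_{j \notin S_i} Q_{ij}\Bigr) \cdot \prod_{j \in S_i} \bigl(1 + \h^{\geq 1}[Q_{ij}]\bigr).
\]
Expanding the second factor by choosing, for each $j \in S_i$, either the $1$ or the $\h^{\geq 1}[Q_{ij}]$ summand, yields
\[
\prod_{j \in S_i} \bigl(1 + \h^{\geq 1}[Q_{ij}]\bigr) \;=\; \sum_{l=0}^{|S_i|} \es_l\bigl(\{\h^{\geq 1}[Q_{ij}] : j \in S_i\}\bigr),
\]
which is just the standard generating-function identity for elementary symmetric polynomials.

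Next, I would use that $P$ is homogeneous of degree $n$, so $P = \h^n[P] = \sum_i \alpha_i \, \h^n\bigl[\prod_j Q_{ij}\bigr]$. Substituting the expansion above inside $\h^n[\cdot]$ gives the stated formula, except with the outer sum running up to $|S_i|$ rather than $n$. To close the gap I would note that $\es_l$ of polynomials each of degree $\geq 1$ has degree $\geq l$, so when $l > n$ the entire term $\prod_{j \notin S_i} Q_{ij} \cdot \es_l(\cdot)$ has degree strictly greater than $n$, and therefore contributes nothing to $\h^n[\cdot]$. (Symmetrically, if $|S_i| < n$, terms with $l > |S_i|$ vanish because $\es_l$ on fewer than $l$ arguments is zero.) Hence truncating or extending the summation index to run from $0$ to $n$ leaves the expression unchanged.

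There is no real obstacle here; the lemma is a bookkeeping step whose purpose is to prepare the circuit for the subsequent support-reduction argument by separating the product factors into ``low-degree constant-1'' pieces (which will be handled combinatorially via $\es_l$) and ``already-homogeneous-of-positive-degree'' pieces. The only point requiring a small amount of care is justifying the summation limit $n$, which follows from the two degree-bound observations above.
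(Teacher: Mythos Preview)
Your proposal is correct and follows essentially the same route as the paper: split the product according to $S_i$, expand $\prod_{j\in S_i}(1+\h^{\geq 1}[Q_{ij}])$ via the elementary-symmetric identity, then use the degree observation on $\es_l$ (for $l>n$) together with $P=\h^n[P]$ to justify the summation range. You are even slightly more explicit than the paper in handling the case $|S_i|<n$.
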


\begin{proof}

To prove the lemma, we will try to extract out the homogeneous part of degree $n$ of each product gate $\prod_{j = 1}^d Q_{ij}$. Together with the fact that the polynomial $P$ is homogeneous of degree $n$, we get the lemma.  Every $Q_{ij}$ with a non-zero constant term can be written as $\h^{\geq 1}[Q_{ij}] + 1$, since the constant term in each $Q_{ij}$ is either $0$ or $1$. Now,  
\begin{equation}~\label{eqn:1}
\prod_{j = 1}^d Q_{ij} = \prod_{j \notin S_i} Q_{ij} \times \prod_{j \in S_i} (\h^{\geq 1}[Q_{ij}] + 1)
\end{equation}
  Decomposing the product $\prod_{j \in S_i} (\h^{\geq 1}[Q_{ij}] + 1)$ further, we have 
 \begin{equation}~\label{eqn:2}
 \prod_{j \in S_i} (\h^{\geq 1}[Q_{ij}] + 1] = \sum_{l = 0}^{|S_i|} \sum_{U \subseteq S_i : |U| = l} \prod_{j \in U} \h^{\geq 1}[Q_{ij}]
\end{equation}  
Now, observe that the degree of every monomial in $\prod_{j \in U} \h^{\geq 1}[Q_{ij}]$ is at least as large as the size of $U$. So, for every subset $U$ of size larger than $n$, $\prod_{j \in U} \h^{\geq 1}[Q_{ij}]$ is a polynomial of degree strictly larger than $n$. Also, for any fixed $l$, the expression $ \sum_{U \subseteq S_i : |U| = l} \prod_{j \in U} \h^{\geq 1}[Q_{ij}]$ is precisely the elementary symmetric polynomial of degree $l$ in the set of variables $\{\h^{\geq 1}[Q_{ij}] : j \in S_i\}$. Therefore,  
\begin{equation}~\label{eqn:3}
 \h^{\leq n}\left[\prod_{j \in S_i} (\h^{\geq 1}[Q_{ij}] + 1)\right] = \h^{\leq n}\left[\sum_{l = 0}^{n} \es_l(\{\h^{\geq 1}[Q_{ij}] : j \in S_i \})\right]
\end{equation}  
Therefore, 
\begin{equation}~\label{eqn:4}
\h^{n}\left[\prod_{j = 1}^d Q_{ij}\right] = \h^{n}\left[\prod_{j \notin S_i} Q_{ij} \times \sum_{l = 0}^{n} \es_l(\{\h^{\geq 1}[Q_{ij}] : j \in S_i \})\right]
\end{equation}
Summing up for all $i$, we get the lemma.
\end{proof}

The lemma above has in some sense helped us locate the monomials of degree $n$ in the circuit, which otherwise has a much higher formal degree. We now combine the above lemma with the well known fact that elementary symmetric polynomial of degree $l$ in $k$ variables can be computed by homogeneous $\Sigma\Pi\Sigma\wedge$ circuits of size at most $k2^{O(\sqrt{l})}$ to obtain a $\Sigma\Pi\Sigma\wedge\Sigma\Pi$ circut $C'$ such that the fan-in of the product gates at level two is at most $n$.   We use the following theorem (Theorem 5.2) by Shpilka and Wigderson~\cite{SW01}.
\begin{thm}[Shpilka-Wigderson~\cite{SW01}]~\label{thm : SW}
For every set of variables $\{Y_1, Y_2, \ldots, Y_m\}$ and a positive integer $l$, $\es_l(\{Y_1, Y_2, \ldots, Y_m\})$ can be computed by a homogeneous $\Sigma\Pi\Sigma\wedge$ circuit of size $m2^{O(\sqrt{l})}$. 
\end{thm}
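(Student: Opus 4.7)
The plan is to construct the circuit explicitly using the classical Newton--Girard expression of elementary symmetric polynomials in terms of power-sum symmetric polynomials, and to exploit the fact that the number of integer partitions of $l$ is $p(l) = 2^{O(\sqrt{l})}$ by the Hardy--Ramanujan asymptotic --- this is precisely where the $\sqrt{l}$ in the exponent should come from.

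First, I would expand $\es_l$ in terms of the power sums $p_k(\overline{Y}) := \sum_{i=1}^{m} Y_i^k$. Iterating Newton's identity $l \cdot \es_l = \sum_{j=1}^{l} (-1)^{j-1} p_j \, \es_{l-j}$ (valid since $\F$ has characteristic zero) gives an explicit formula
$$\es_l(\overline{Y}) \;=\; \sum_{\lambda} c_\lambda \prod_{k} p_k(\overline{Y})^{a_k(\lambda)},$$
indexed by integer partitions $\lambda = 1^{a_1(\lambda)} 2^{a_2(\lambda)} \cdots$ of $l$, with explicit rational coefficients $c_\lambda$.

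Next, I would realise this formula as a $\Sigma\Pi\Sigma\wedge$ circuit. Each $p_k$ is computed as a single $\Sigma\wedge$ gate of size $O(m)$ --- a sum of $m$ powering gates applied to individual variables. For each partition $\lambda$, I would implement $\prod_k p_k^{a_k(\lambda)}$ as one product gate of fan-in $\sum_k a_k(\lambda) \leq l$, whose inputs are copies of the relevant $\Sigma\wedge$ subcircuits. Summing over all $p(l) = 2^{O(\sqrt{l})}$ partitions then yields a $\Sigma\Pi\Sigma\wedge$ circuit of total size $O(m \cdot l \cdot p(l)) = m \cdot 2^{O(\sqrt{l})}$, absorbing $\log l$ into $O(\sqrt{l})$. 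Homogeneity is immediate because each $p_k$ is homogeneous of degree $k$, so each partition term has total degree $\sum_k k \cdot a_k(\lambda) = l$, and the entire circuit is homogeneous of degree $l$.

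The main obstacle --- more a matter of discipline than a real mathematical hurdle --- is ensuring the circuit shape is really $\Sigma\Pi\Sigma\wedge$ and not $\Sigma\Pi\wedge\Sigma\wedge$ of depth five. The naive reading of $p_k^{a_k(\lambda)}$ as a powering gate applied to the $\Sigma\wedge$ gate computing $p_k$ introduces an extra level and breaks the depth budget; the fix is to use repeated multiplication at the intermediate layer instead, at a cost of only a factor of $l$ in size (easily absorbed into $2^{O(\sqrt{l})}$). A secondary point is verifying that the rational coefficients $c_\lambda$ produced by unrolling Newton's identity are well-defined and finite, which is exactly where the characteristic-zero hypothesis on $\F$ enters.
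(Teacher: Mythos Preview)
The paper does not prove this theorem; it simply cites it as Theorem~5.2 of Shpilka--Wigderson~\cite{SW01} and uses it as a black box. Your proof sketch is correct and is essentially the argument in the original source: express $\es_l$ via Newton's identities as an $\F$-linear combination of power-sum products $\prod_k p_k^{a_k}$ indexed by integer partitions of $l$, invoke the Hardy--Ramanujan bound $p(l)=2^{O(\sqrt{l})}$ on the number of such partitions, and realise each $p_k$ as a $\Sigma\wedge$ subcircuit of size $O(m)$. Your treatment of the depth issue (feeding $a_k$ copies of $p_k$ into the product gate rather than introducing an extra powering layer) and your remark that characteristic zero is needed for the rational coefficients $c_\lambda$ are both on point.
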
   

We now prove the following lemma. 
\begin{lem}~\label{lem:depth6}
Let $\F$ be a field of characteristic zero. Let $P$ be a polynomial of degree $n$ in $N$ variables over $\F$ which is computable by an $\sop$ circuit $C$ of top fan-in $T$ and the degree of product gates at level two being $d$. So, $P$ can be represented as  $$P = \sum_{i=1}^T \alpha_i\cdot\prod_{j = 1}^d Q_{ij}$$ 
Then, $P$ can be represented as the homogeneous component of degree $n$ of a polynomial computed  by a $\Sigma\Pi\Sigma\wedge\Sigma\Pi$ circuit $C''$ with the following properties : 
\begin{enumerate}
\item The inputs to the $\wedge$ gates are the polynomials $\{\h^{\geq 1}[Q_{ij}] : 1 \leq i \leq T, 1 \leq j \leq d\}$
\item The fan-in of the $\times$ gates at the second level from the top is at most $n$
\item The top fan-in of $C''$ is at most $Tdn2^{O(\sqrt{n})}$.
\end{enumerate}
\end{lem}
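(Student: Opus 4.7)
The starting point is to apply Lemma~\ref{lem:homog} to the homogeneous degree-$n$ polynomial $P$, which yields
$$P \;=\; \sum_{i=1}^T \alpha_i \cdot \h^n\!\left[\prod_{j \notin S_i} Q_{ij} \cdot \sum_{l=0}^n \es_l\bigl(\{\h^{\geq 1}[Q_{ij}] : j \in S_i\}\bigr)\right].$$
Since $\h^n$ is linear, it suffices to build a $\Sigma\Pi\Sigma\wedge\Sigma\Pi$ circuit $C''$ whose output is the polynomial inside the brackets, summed over $i$ with coefficients $\alpha_i$; then $\h^n[C''] = P$. The rest of the plan is to engineer $C''$ so that the three structural properties hold.

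The next step is to invoke Theorem~\ref{thm : SW} on each elementary symmetric polynomial appearing above: $\es_l$ on $|S_i|$ inputs has a homogeneous $\Sigma\Pi\Sigma\wedge$ representation of size at most $|S_i|\cdot 2^{O(\sqrt{l})} \leq d\cdot 2^{O(\sqrt{n})}$, and I will arrange for its $\wedge$ gates to receive the individual polynomials $\h^{\geq 1}[Q_{ij}]$ as their inputs. Since each $\h^{\geq 1}[Q_{ij}]$ depends on at most $s$ variables, it is itself a $\Sigma\Pi$ expression (a sum of its monomials), so placing one such $\Sigma\Pi$ below every $\wedge$ gate realises the bottom four levels $\Sigma\wedge\Sigma\Pi$ of the target $C''$.

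To incorporate the outer product $\prod_{j \notin S_i} Q_{ij}$, observe that $j \notin S_i$ forces $\h^0[Q_{ij}] = 0$, whence $Q_{ij} = \h^{\geq 1}[Q_{ij}]$; I can therefore view each such factor as a trivial $\wedge$-gate of exponent $1$ applied to $\h^{\geq 1}[Q_{ij}]$ and fold it into the level-two product gate already present in the Shpilka--Wigderson representation of each $\es_l$. Distributing this product across the top sum of that circuit and then summing over all $i \in [T]$ and $l \in \{0,1,\ldots,n\}$ yields the desired $\Sigma\Pi\Sigma\wedge\Sigma\Pi$ circuit $C''$, with top fan-in bounded by $T \cdot (n+1) \cdot d \cdot 2^{O(\sqrt{n})} = Tdn\,2^{O(\sqrt{n})}$.

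The point requiring care is the level-two fan-in bound. By construction every factor feeding into a level-two product gate has degree at least $1$ (constants can always be absorbed into the coefficients at the top sum gate). Consequently any level-two product gate of fan-in greater than $n$ computes a polynomial of degree strictly greater than $n$, and therefore contributes nothing to $\h^n[C''] = P$. Deleting all such gates from $C''$ preserves $\h^n$ of its output and leaves every surviving level-two product gate with fan-in at most $n$, completing the construction.
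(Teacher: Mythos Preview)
Your argument follows the same route as the paper's (Lemma~\ref{lem:homog} plus Theorem~\ref{thm : SW}, then fold the $j\notin S_i$ factors into the level-two products). The only issue is the final pruning step: the inference ``fan-in $>n$ $\Rightarrow$ degree $>n$ $\Rightarrow$ contributes nothing to $\h^n$'' is not valid as written, since a polynomial of total degree $>n$ can certainly have a nonzero degree-$n$ homogeneous part. What you actually need is that every level-two factor has \emph{zero constant term} (not merely positive degree), so that the product has \emph{minimum} degree at least the fan-in and hence empty $\h^n$ when the fan-in exceeds $n$.

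This stronger property does hold here, but it uses the homogeneity of the Shpilka--Wigderson circuit rather than just the ability to absorb scalar factors: each level-three $\Sigma$ gate in a homogeneous $\Sigma\Pi\Sigma\wedge$ circuit computes a form that is homogeneous in the formal inputs, so after substituting $\h^{\geq 1}[Q_{ij}]$ for those inputs the gate outputs either a scalar (absorbed, as you say) or a polynomial with no constant term; the remaining factors $\h^{\geq 1}[Q_{ij}]$ for $j\notin S_i$ have no constant term by definition. With that said explicitly your deletion step goes through. The paper sidesteps this by restricting the range of $l$ to $0\le l\le n-(d-|S_i|)$ up front and then using homogeneity to bound the level-two fan-in by $l+(d-|S_i|)\le n$ directly, so no pruning is needed.
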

\begin{proof}
From Lemma~\ref{lem:homog}, we know that for the set $S_i$ defined as $$S_i = \{j : 1 \leq j \leq d \text{ and } \h^{0}[Q_{ij}] = 1\}$$ the polynomial $P$ can be written as
$$P = \sum_{i = 1}^T \alpha_i \cdot \h^n\left[\prod_{j \notin S_i} Q_{ij} \times \sum_{l = 0}^n \es_l(\{\h^{\geq 1}[Q_{ij}] : j \in S_i\})\right]$$

which is the same as 
$$P = \h^n\left[ \sum_{i = 1}^T \alpha_i \cdot \prod_{j \notin S_i} Q_{ij} \times \sum_{l = 0}^n \es_l(\{\h^{\geq 1}[Q_{ij}] : j \in S_i\})\right] $$
Observe that the polynomial $\prod_{j \notin S_i} Q_{ij}$ has degree at least $d-|S_i|$. We remark that if $d-|S_i|$ is larger than $n$, then such product gates do not contribute anything to the degree $n$ component of the polynomial and hence can be discarded without loss of generality; hence we assume $n-(d-|S_i|) > 0$. So, we could confine the inner sum from $l = 0$ to $l = n-(d-|S_i|)$, and still preserve the degree $n$ part of the polynomial, which is what we are interested in. 
From Theorem~\ref{thm : SW}, we know that for every $0 \leq l \leq n$, we can compute the polynomial $\es_l(\{\h^{\geq 1}[Q_{ij}] : j \in S_i\})$ by a $\Sigma\Pi\Sigma\wedge$ circuit of top fan-in at most $d \times 2^{O(\sqrt{l})}  $ which takes as input the polynomials $\{\h^{\geq 1}(Q_{ij}) : 1\leq j \leq d\}$.  From the homogeneity of the circuits given by Theorem~\ref{thm : SW}, it follows that the product gates at level two of these circuits have fan-in at most the degree of polynomial they compute, which is at most $n-(d-|S_i|)$. So, it follows that the polynomial 
$$\tilde{P} = \left( \sum_{i = 1}^T \alpha_i \cdot \prod_{j \notin S_i} Q_{ij} \times \sum_{l = 0}^{n-(d-|S_i|)} \es_l(\{\h^{\geq 1}[Q_{ij}] : j \in S_i\})\right)$$
 can be computed by a $\Sigma\Pi\Sigma\wedge\Sigma\Pi$ circuit, with top fan-in at most $Tdn\cdot 2^{O(\sqrt{n})}$, which satisfies the conditions in the lemma. 
\end{proof}
Finally, given the circuit $C''$ constructed above, we can construct a circuit which computes the polynomial $P$ as given by Lemma~\ref{lem:interpolation}. For this, observe that the monomials of degree strictly larger than $n$ in any of  the $Q_{ij}$ do not contribute to degree $n$ part of $\tilde{P}$. So, we can drop them, while still preserving the degree $n$ part of $\tilde{P}$. Therefore, the  degree of $\tilde{P}$ can be upper bounded by $n^2d$.  We can recover the degree $n$ part of $\tilde{P}$ by interpolation which blows up the top fan-in by a factor of at most $n^2d$. 

In this process, the fan-in of the product gates at level two remains unchanged. Strictly speaking,  inputs to the powering gate $\wedge$ at level four may no longer be the polynomials $\h^{\geq 1}[Q_{ij}]$, since in the process of interpolation, we replaced every variable $X_i$ by $X_i.t$ in $\tilde{P}$ and looked at the resulting polynomial $\tilde{P'}$ as a univariate polynomial in $t$ over the function field $\F(\overline{X})$. We then evaluated $\tilde{P'}$ at sufficiently many values of $t \in \F$ and then took their $\F$ linear combination. 
So, each of the polynomials $\h^{\geq 1}[Q_{ij}]$ gives rise to many other polynomials, one each for different values of $t$. We will call them the {\it siblings} of $\h^{\geq 1}[Q_{ij}]$. The key observation for our proof is that the set of variables in the siblings of $\h^{\geq 1}[Q_{ij}]$ is the same as the set of variables in $\h^{\geq 1}[Q_{ij}]$. From the lemma and the discussion above, we have the following corollary. 

\begin{cor}~\label{lem:depth6-cor}
Let $\F$ be a field of characteristic zero. Let $P$ be a polynomial of degree $n$ in $N$ variables over $\F$ which is computable by an $\sop$ circuit $C$ of top fan-in $T$ and the degree of product gates at level two being $d$. So, $P$ can be represented as  $$P = \sum_{i=1}^T \alpha_i\cdot\prod_{j = 1}^d Q_{ij}$$ 
Then, $P$ can be computed  by a $\Sigma\Pi\Sigma\wedge\Sigma\Pi$ circuit $C''$ with the following properties : 
\begin{enumerate}
\item The inputs to the $\wedge$ gates are the siblings of polynomials $\{\h^{\geq 1}[Q_{ij}] : 1 \leq i \leq T, 1 \leq j \leq d\}$
\item The fan-in of the $\times$ gates at the second level from the top is at most $n$
\item The top fan-in of $C''$ is at most $Td^2n^32^{O(\sqrt{n})}$.
\end{enumerate}
\end{cor}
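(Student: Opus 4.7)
The plan is to derive the corollary directly from Lemma~\ref{lem:depth6} by applying the interpolation-based homogenization of Lemma~\ref{lem:interpolation} to extract the degree-$n$ component, and then to carefully track how the resulting construction affects (i) the fan-in of the level-two product gates, (ii) the nature of the inputs feeding into the $\wedge$ gates, and (iii) the top fan-in.

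First I would invoke Lemma~\ref{lem:depth6} to obtain a $\Sigma\Pi\Sigma\wedge\Sigma\Pi$ circuit $C''$ computing a polynomial $\tilde P$ whose degree-$n$ homogeneous component equals $P$, such that the product gates at level two have fan-in at most $n$, the inputs to the $\wedge$ gates are the $\h^{\geq 1}[Q_{ij}]$, and the top fan-in is at most $Tdn\cdot 2^{O(\sqrt n)}$. Next I would bound the formal degree of $\tilde P$: since any monomial of degree exceeding $n$ inside $Q_{ij}$ is irrelevant for $\h^n[\tilde P] = P$, I can truncate each $\h^{\geq 1}[Q_{ij}]$ to degree $\le n$; combined with the fact that the elementary symmetric factor contributes degree at most $n$ and the product $\prod_{j \notin S_i} Q_{ij}$ contributes degree at most $nd$, this gives $\deg(\tilde P) \le O(n^2 d)$, which for our purposes we loosely bound by $n^2 d$.

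Then I would apply Lemma~\ref{lem:interpolation}: replacing each variable $X_i$ by $X_i \cdot t$ turns $\tilde P$ into a univariate polynomial in $t$ over $\F(\overline X)$ of degree at most $n^2 d$, and the degree-$n$ homogeneous component of $\tilde P$ (equal to $P$) is the coefficient of $t^n$, obtainable as an $\F$-linear combination of $O(n^2 d)$ evaluations of $\tilde P$ at scalar values of $t$. This operation preserves the $\Sigma\Pi\Sigma\wedge\Sigma\Pi$ skeleton: the level-two $\times$ gates keep their fan-in $\le n$, and each evaluation at $t = t_0 \in \F$ replaces every input $\h^{\geq 1}[Q_{ij}](\overline X)$ to a $\wedge$ gate with $\h^{\geq 1}[Q_{ij}](t_0 \cdot \overline X)$; these are the \emph{siblings} of $\h^{\geq 1}[Q_{ij}]$, and the crucial observation (needed downstream in the lower bound proof) is that the variable set of each sibling is identical to that of $\h^{\geq 1}[Q_{ij}]$, since scaling variables by a scalar preserves the support. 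Finally, the top fan-in is multiplied by the interpolation overhead $O(n^2 d)$, yielding $Tdn\cdot 2^{O(\sqrt n)}\cdot n^2 d = Td^2 n^3\cdot 2^{O(\sqrt n)}$, as claimed.

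The only subtlety — which I expect to be the main point to verify carefully rather than a serious obstacle — is property (1): that after interpolation the inputs to $\wedge$ gates are siblings of the $\h^{\geq 1}[Q_{ij}]$ and not arbitrary polynomials. This requires noting that the substitution $X_i \mapsto X_i \cdot t_0$ commutes through the homogeneous component operator and through the $\Sigma\Pi\Sigma\wedge$ subcircuit supplied by Theorem~\ref{thm : SW}, so the transformation acts only on the deepest inputs and not on the intermediate structure of the circuit.
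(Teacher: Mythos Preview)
Your proposal is correct and follows essentially the same route as the paper: invoke Lemma~\ref{lem:depth6}, truncate each $Q_{ij}$ to degree at most $n$ so that $\deg(\tilde P)\le n^2 d$, then apply Lemma~\ref{lem:interpolation} via the substitution $X_i\mapsto t\cdot X_i$ and interpolation at $n^2d+1$ scalar values of $t$, observing that this preserves the level-two product fan-in and turns the $\wedge$-inputs into the ``siblings'' $\h^{\geq 1}[Q_{ij}](t_0\cdot\overline X)$ with identical variable support. One small slip: the elementary symmetric factor contributes degree at most $n^2$ in the $X$-variables (degree $\le n$ in the $\h^{\geq 1}[Q_{ij}]$'s, each of degree $\le n$), not $n$, but your stated bound $\deg(\tilde P)\le n^2 d$ is still correct and matches the paper.
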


\subsection{Random Restrictions}~\label{sec: random res}
From the definition, it follows that the total number of variables in $NW_{n,\mu}$ is $N$.  Let the set of all these variables be $\cal V$.  We  now define our random restriction procedure by defining a distribution $\cal D$ over subsets $V \subset \cal V$. The random restriction procedure will sample $V \gets \cal D$ and then keep only those variables ``alive" that come from $V$ and set the rest to zero. We will denote the restriction of the polynomial obtained by such a restriction as $NW_{n, \mu}|_V$. Observe that a random restriction also results in a distribution over all circuits computing the polynomial $NW_{n, \mu}$. We denote by $C|_V$  the restriction of a circuit $C$ obtained by setting every input gate in $C$ which is labelled by a variable outside $V$ to $0$.

\vspace{2mm}
\noindent
{\bf The distribution ${\cal D}_p$: } Each variable in $\cal V$ is independently kept alive with a probability $p$. We will choose the value of $p$ based on the parameter $\mu$. 

\subsection{Analysing the circuit under random restrictions}
Let $C$ be a $\sopm$ circuit computing the polynomial $\nwm$. Let the top fan-in of $C$ be $T$ and the product fan-in at the second level be $d$. So, we have the following expression. 
$$\nwm = \sum_{i=1}^T \alpha_i\cdot\prod_{j = 1}^d Q_{ij}$$ 
where each $Q_{ij}$ depends on at most $N^{\mu}$ variables. 

Recall that from the choice of parameters $\delta = (1-\mu)/2$. Let $s$ be a parameter, which we later  set such that  $s = \Theta(\sqrt{n})$. If $T\cdot d \geq N^{\frac{\delta}{4} s}$, then we already have the desired lower bound of $n^{\Omega(\sqrt{n})}$ on the size of $C$ and we are done. Therefore, for the rest of this discussion, we will assume that $T\cdot d \leq N^{\frac{\delta}{4}s}$. We now apply the transformation to $C$ given by Corollary~\ref{lem:depth6-cor} to obtain a $\Sigma\Pi\Sigma\wedge\Sigma\Pi$ circuit $C''$, which has the following properties: 
\begin{enumerate}
\item The inputs to the $\wedge$ gates are the siblings of polynomials $\{\h^{\geq 1}[Q_{ij}] : 1 \leq i \leq T, 1 \leq j \leq d\}$
\item The fan-in of the $\times$ gates at the second level from the top is at most $n$
\item The top fan-in of $C''$ is at most $Td^2n^32^{O(\sqrt{n})}$.
\end{enumerate}
We now analyse the effect of the random restrictions on the circuit $C''$. We will choose a parameter $p = N^{-\mu-\delta}$ and keep every variable alive with a probability $p$. The circuit $C''$ can be represented as $$C'' = \sum_{u}\prod_{v} D_{uv}$$
Here, each $D_{uv}$ is a sum of powers of the siblings of  $\h^{\geq 1}[Q_{ij}]$. Our goal is to argue that under random restrictions, all the monomials in each of the $D_{uv}$ are of small support (support at most $s$). 

For any polynomial $P$ in $N^{\mu}$ variables and any integers $t, t_0$ such that $t_0 < t$, observe that $P^t$ can be written as 

$$P^t =  P_0 + \sum_{\alpha}\alpha\cdot P_{\alpha}$$
where $P_0$ is the part of $P$ consisting of monomials of support strictly less than $t_0$. The inner sum is over all multilinear monomials $\alpha$ of support equal to $t_0$. Such a decomposition may not be unique, but for this application, it would suffice to work with any one such decomposition. The number of such monomials $\alpha$ is at most  ${N^{\mu} \choose t_{0}}$. The probability that one such monomial survives the random restriction procedure is equal to $p^{t_0}$. So, the expected number of such multilinear monomials $\alpha$ surviving the random restriction procedure is at most ${N^{\mu} \choose t_{0}}\cdot p^{t_0}$. The crucial observation is that if no such monomials survive, then only the monomials in $P_0$ survive, all of  which have support at most $t_0-1$. 

Now, observe that each of the $D_{uv}$ are a sum of powers of the siblings of polynomials in the set $\{\h^{\geq 1}[Q_{ij}] : 1 \leq i \leq T, 1 \leq j \leq d\}$. Define ${\cal B}$ to be the set of all multilinear monomials of support equal to $s$, supported entirely on variables in any of the polynomials ${Q_{ij}}$ for some $1 \leq i \leq T, 1 \leq j \leq d$.  From the discussion in the paragraph above, the following observation follows. 

\begin{obs}~\label{obs: random rest 1}
Let the polynomials $D_{uv}$, $Q_{ij}$ and the set ${\cal B}$ be as defined above. Then, 
\begin{itemize}
\item $|{\cal B}| \leq T\cdot d\cdot {N^{\mu} \choose s}$
\item If none of the monomials in ${\cal B}$ survive under some random restrictions, then  each of the polynomials $D_{uv}'$ obtained as a restriction of $D_{uv}$ has all monomials of support at most $s$.  
\end{itemize}
\end{obs}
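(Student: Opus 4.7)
The observation is a bookkeeping step that splits cleanly into two independent claims: a cardinality bound (first bullet) and a structural statement about what can and cannot survive the restriction (second bullet). I will handle them in turn.

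For the first bullet, I will simply union-bound. Each element of ${\cal B}$ is, by definition, a multilinear monomial of support size $s$ supported entirely on the variable set of some single $Q_{ij}$. Since $Q_{ij}$ depends on at most $N^\mu$ variables, there are at most $\binom{N^\mu}{s}$ such monomials per pair $(i,j)$, and there are $Td$ pairs, so $|{\cal B}| \leq Td\binom{N^\mu}{s}$.

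For the second bullet, I will argue the contrapositive: any monomial of $D_{uv}$ whose support has size strictly greater than $s$, and that survives the restriction, forces some element of ${\cal B}$ to survive. The key structural step is that $D_{uv}$ is, by Corollary~\ref{lem:depth6-cor} and the setup above, a sum of powers $S^t$ of siblings $S$ of polynomials $\h^{\geq 1}[Q_{ij}]$; since siblings carry the same variable set as their parent by construction, every monomial $m$ appearing in $D_{uv}$ has its support contained in the variable set of some $Q_{ij}$. Given such an $m$ with support of size larger than $s$, picking any $s$-element subset $U$ of its support produces a multilinear monomial $\prod_{X \in U} X$, which lies in ${\cal B}$. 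Since a monomial survives the restriction iff every variable in its support is kept alive, if $m$ survives then $\prod_{X \in U} X$ survives too, yielding the desired contrapositive and hence the bullet.

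The only subtlety worth flagging is the passage from a possibly non-multilinear monomial $m$ to a multilinear element of ${\cal B}$: monomials of $D_{uv}$ need not be multilinear, whereas ${\cal B}$ contains only multilinear monomials. This is precisely what forces the step of picking an $s$-subset of the support of $m$ rather than using $m$ itself. Beyond this there is no real obstacle; the observation is pure bookkeeping, and the probabilistic content (bounding the expected number of surviving elements of ${\cal B}$ by $|{\cal B}|\cdot p^{s}$ and then choosing $p$ and $s$ so that this expectation is small) is carried out separately in the next stage of the argument.
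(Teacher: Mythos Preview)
Your proof is correct and follows essentially the same approach as the paper. The paper handles the first bullet identically (a trivial union bound over the $Td$ polynomials $Q_{ij}$), and for the second bullet it appeals to the decomposition $P^t = P_0 + \sum_\alpha \alpha\cdot P_\alpha$ set up in the preceding paragraph, which is logically the same as your direct contrapositive: any surviving monomial of support larger than $s$ in a power of a sibling of $\h^{\geq 1}[Q_{ij}]$ contains an $s$-subset of alive variables lying in the variable set of $Q_{ij}$, i.e., a surviving element of ${\cal B}$.
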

\begin{proof}
The bound on the size trivially follows from the fact that each of the $Q_{ij}$ depends on at most $N^{\mu}$ variables. For the second item, observe that each of the $D_{uv}$ is a sum of powers of siblings of  the $\h^{\geq 1}[Q_{ij}]$ and all the siblings are supported on the same set of variables. If all the monomials in the set ${\cal B}$ are set to zero, then the surviving monomials in any power of any of the siblings of  $\h^{\geq 1}[Q_{ij}]$ has support at most $s$. 
\end{proof}

We now estimate the probability that at least one of the monomials in the set ${\cal B}$ survives the random restriction procedure. We have the following lemma. 

\begin{lem}~\label{lem: rand res 2}
Let $\delta$ be a positive real number such that $\delta = (1-\mu)/2$ and let $p =  N^{-\mu-\delta}$. Then 
$$Pr_{V\leftarrow {\cal D}_p}\left[|{\cal B}|_{V}| \geq 1\right] \leq N^{-3/4 \cdot\delta \cdot s}$$  
\end{lem}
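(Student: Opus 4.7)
The plan is to bound the survival probability by a direct union bound over the monomials in $\mathcal{B}$. Since each variable is independently kept alive with probability $p = N^{-\mu - \delta}$ under $\mathcal{D}_p$, a fixed multilinear monomial of support exactly $s$ survives with probability exactly $p^{s}$. Hence
\[
\Pr_{V \leftarrow \mathcal{D}_p}\!\left[|\mathcal{B}|_V| \geq 1\right] \;\leq\; |\mathcal{B}| \cdot p^{s}.
\]

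Next I would plug in the two upper bounds that are already available. From Observation~\ref{obs: random rest 1} we have $|\mathcal{B}| \leq T\cdot d \cdot \binom{N^\mu}{s}$, and the crude estimate $\binom{N^\mu}{s} \leq N^{\mu s}$ is sufficient (we do not need to save the $s!$ in the denominator). Moreover, we are in the regime where we have already assumed $T\cdot d \leq N^{\frac{\delta}{4} s}$, since otherwise the desired lower bound $n^{\Omega(\sqrt{n})}$ on $T\cdot d$ already holds and there is nothing to analyze. Combining these bounds gives
\[
\Pr_{V \leftarrow \mathcal{D}_p}\!\left[|\mathcal{B}|_V| \geq 1\right] \;\leq\; N^{\frac{\delta}{4} s} \cdot N^{\mu s} \cdot p^{s} \;=\; N^{\frac{\delta}{4} s} \cdot N^{\mu s} \cdot N^{-(\mu+\delta) s} \;=\; N^{-\frac{3\delta}{4} s},
\]
which is exactly the desired inequality.

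There is essentially no obstacle here: the statement is a routine union-bound computation, and all the nontrivial preparatory work (the size bound on $\mathcal{B}$, the choice of $p$ calibrated to $\mu$ and $\delta$, and the assumption $T d \leq N^{\delta s/4}$) was done before stating the lemma. The only thing worth being slightly careful about is the direction in which the bound on $|\mathcal{B}|$ is used: monomials in $\mathcal{B}$ have support \emph{equal to} $s$ (not at most $s$), so the per-monomial survival probability is exactly $p^s$ rather than merely $\leq p^{s}$, and the exponent $\mu s$ in $\binom{N^\mu}{s}$ lines up cleanly with the exponent $-(\mu+\delta)s$ coming from $p^s$ to leave the final factor $N^{-\delta s}$, which the slack $N^{\delta s/4}$ from $T d$ reduces to $N^{-3\delta s/4}$.
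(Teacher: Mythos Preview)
Your proof is correct and matches the paper's argument essentially line for line. The only cosmetic difference is that the paper phrases it as bounding $\E_{V}[|\mathcal{B}|_V]$ and then applying Markov's inequality, whereas you apply the union bound directly; for a nonnegative integer-valued random variable these are identical, and the arithmetic that follows is the same.
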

\begin{proof}
We know that $$|{\cal B}| \leq T \cdot d \cdot {N^{\mu} \choose s}$$ and the probability that any fixed monomial in ${\cal B}$ survives the random restriction procedure is at most $p^{s}$. So $$\e_{V\leftarrow {\cal D}_p}[|{\cal B}_{V}|] \leq T \cdot d \cdot {N^{\mu} \choose s} \cdot p^s  $$
Now, observing that the value of $T\cdot d$ is at most $N^{\frac{\delta}{4}s}$ and $p =  N^{-\mu-\delta}$, the expected value is at most $$ N^{\frac{\delta}{4}s} {N^{\mu} \choose s} \cdot  N^{-(\mu+\delta)s} \leq N^{-3/4 \cdot\delta \cdot s}$$
The lemma then follows by Markov's inequality. 
\end{proof}

As a corollary of Lemma~\ref{lem: rand res 2} and Observation~\ref{obs: random rest 1}, we get the following lemma.
\begin{lem}~\label{lem: rand res main}
Let $\delta$ be a positive real number such that $ \delta = (1-\mu)/2$ and let $p =  N^{-\mu-\delta}$. Then with probability at least $1- N^{-3/4 \cdot\delta \cdot s}$ over random restrictions $V \leftarrow {\cal D}_p$, the polynomial computed by the circuit $C''|_{V}$ can be written as $\sum_{u = 1}^{T'} \prod_{v = 1}^n D_{uv}'$, where each of the monomials in each of the polynomials $D_{uv}'$ has support at most $s$.  
\end{lem}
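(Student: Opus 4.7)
The plan is to combine the two preceding results (Observation~\ref{obs: random rest 1} and Lemma~\ref{lem: rand res 2}) in a straightforward way, using the structural description of $C''$ that was established earlier via Corollary~\ref{lem:depth6-cor}.

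First I would set up the ``good event'' $E$ to be the event that, under a random restriction $V \leftarrow \mathcal{D}_p$, no monomial in the set $\mathcal{B}$ survives, i.e.\ $|\mathcal{B}|_V| = 0$. Lemma~\ref{lem: rand res 2} directly gives $\Pr[|\mathcal{B}|_V| \geq 1] \leq N^{-\frac{3}{4}\delta s}$, so $\Pr[E] \geq 1 - N^{-\frac{3}{4}\delta s}$, which matches the probability bound claimed in the lemma.

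Next I would condition on $E$. Recall that by Corollary~\ref{lem:depth6-cor} the circuit $C''$ can be written as $\sum_{u} \prod_{v} D_{uv}$, where the inner product has fan-in at most $n$ and each $D_{uv}$ is a sum of powers of siblings of the $\h^{\geq 1}[Q_{ij}]$; crucially, each such sibling depends on the same at most $N^{\mu}$ variables as its corresponding $Q_{ij}$, so all its monomials of support $\geq s$ are divisible by some monomial in $\mathcal{B}$. Applying the second item of Observation~\ref{obs: random rest 1}, when $E$ holds every monomial that survives in $D_{uv}|_V$ has support at most $s$. Since restriction commutes with sum and product, $C''|_V = \sum_{u=1}^{T'} \prod_{v=1}^{n} D_{uv}'$ where $D_{uv}' := D_{uv}|_V$, giving exactly the form claimed.

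There is essentially no additional obstacle here: the preparatory work was done in Corollary~\ref{lem:depth6-cor} (bounding the product fan-in at level two by $n$ and identifying the siblings), in Observation~\ref{obs: random rest 1} (showing that zeroing out $\mathcal{B}$ controls the supports of monomials in each $D_{uv}$), and in Lemma~\ref{lem: rand res 2} (bounding the probability that $\mathcal{B}$ is non-empty after restriction). The only subtlety worth double-checking is that ``each $D_{uv}$ is a sum of powers of siblings'' behaves well under restriction in the same way as a single sibling would: since restriction to $V$ is an $\F$-algebra homomorphism on $\F[\overline{X}]$, a power $P^t$ restricts to $(P|_V)^t$, and if no degree-$s$ multilinear submonomial of $P$ survives then the same decomposition argument used in Observation~\ref{obs: random rest 1} applies uniformly to every sibling appearing inside $D_{uv}$, yielding the support bound on all surviving monomials of $D_{uv}'$.
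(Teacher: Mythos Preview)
Your proposal is correct and follows exactly the route taken in the paper: the paper itself presents this lemma as an immediate corollary of Lemma~\ref{lem: rand res 2} and Observation~\ref{obs: random rest 1}, with the structure of $C''$ coming from Corollary~\ref{lem:depth6-cor}. Your write-up simply spells out in more detail the chain of implications that the paper leaves implicit.
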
 

\subsection{Upper bound on the complexity of C}
In order to upper bound the dimension of the projected shifted partial derivatives (under random restrictions) of the $\sop$ circuit $C$, Corollary~\ref{lem:depth6-cor} implies that it suffices to upper bound the dimension of the space of projected shifted partial derivatives of the $\Sigma\Pi\Sigma\wedge\Sigma\Pi$ circuit $C''$ given by Corollary~\ref{lem:depth6-cor}. In some sense, $C''$ is more structured than $C$ and this lets us prove a better upper bound. 

Recall that we are under the assumption that for the circuit  $C$, the product of the top fan-in and the product fan-in at level two is at most $N^{\frac{\delta}{4} \cdot s}$, else we are already done. 
From Lemma~\ref{lem: rand res main}, we know that with a high probability, under random restrictions, we are left with a circuit of the form $\sum_{u = 1}^{T'} \prod_{v = 1}^n D_{uv}'$ where each of the monomials in each of the polynomials $D_{uv}'$ has support at most $s$. The upper bound on the complexity of the projected shifted partial derivatives of $\sum_{u = 1}^{T'} \prod_{v = 1}^n D_{uv}'$ then just follows from the upper bound for homogeneous depth four circuits of bounded bottom support proved in~\cite{KLSS14, KS-full}. We restate the bound from~\cite{KS-full}.   

\begin{lem}~\label{lem:lowsupbound1}
Let $C$ be a depth 4 circuit with the fan-in or product gates at level two bounded by $n$, the bottom support bounded by $s$ and computing a polynomial in $N$ variables. Let ${\cal M}$ be a set of monomials of degree equal to $r$ and let $m$ be a positive integer. Then,  $$\Phi_{{\cal M}, m}(C) \leq \text{Top fan-in}(C){n + r \choose r}{N \choose m+ rs}$$ for any choice of $m, r, s, N$ satisfying $m+rs \leq N/2$.
\end{lem}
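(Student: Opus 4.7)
The plan is to establish the bound in two logical steps: first reduce to a single product gate via sub-additivity, then analyse that gate via a Leibniz expansion together with the bounded-support structure, following the strategy of~\cite{KLSS14, KS-full}.

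First, by the sub-additivity of $\Phi_{{\cal M},m}$ (Lemma~\ref{subadditive}), it suffices to show $\Phi_{{\cal M},m}(C') \leq \binom{n+r}{r}\binom{N}{m+rs}$ for a single product gate $C' = \prod_{j=1}^n D_j$, each $D_j$ having monomials of support $\leq s$; the factor of top fan-in then comes from summing across the $T$ product gates of $C$.

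Next, for a fixed $\gamma \in {\cal M}$ with $\deg(\gamma) = r$, I would expand $\partial_\gamma(C')$ via the Leibniz rule as a sum over multiplicative splittings $\gamma = \gamma_1 \cdots \gamma_n$ of the terms $\prod_j \partial_{\gamma_j}(D_j)$. Grouping the splittings by the degree vector $(r_1, \dots, r_n)$, where $r_j := \deg(\gamma_j)$ and $\sum_j r_j = r$, yields at most $\binom{n+r}{r}$ distinct patterns. For each such pattern, at most $r$ of the $r_j$ are positive, so one may factor
\[
\prod_j \partial_{\gamma_j}(D_j) \;=\; A \cdot B,
\]
where $A := \prod_{j:\,r_j > 0} \partial_{\gamma_j}(D_j)$ is a product of at most $r$ polynomials each with monomials of support $\leq s$, while $B := \prod_{j:\,r_j = 0} D_j$ depends only on the pattern (not on the particular $\gamma$ or splitting). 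Every monomial of $A$ then has support $\leq rs$, so every monomial of $X^S \cdot A$ has support $\leq m+rs$.

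Writing $X^S \cdot A = \sum_\mu c_\mu \mu$ as a sum of monomials $\mu$ of support at most $m+rs$, linearity of $\sigma$ gives $\sigma(X^S \cdot A \cdot B) = \sum_\mu c_\mu\,\sigma(\mu \cdot B)$; since $\sigma(\mu \cdot B) = 0$ unless $\mu$ is itself multilinear, the projected shifted partial for this pattern lies in the linear span of $\{\sigma(\mu \cdot B) : \mu \textrm{ multilinear with } |\textrm{supp}(\mu)| \leq m+rs\}$, a space of dimension at most $\binom{N}{m+rs}$ (using the hypothesis $m+rs \leq N/2$ to absorb $\sum_{k \leq m+rs}\binom{N}{k}$ into a constant multiple of $\binom{N}{m+rs}$). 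Summing over the $\binom{n+r}{r}$ patterns and the $T$ product gates completes the proof. The main obstacle is this last step, namely obtaining $m + rs$ rather than the naive $m + ns$ in the shift exponent: the key observation is that for a fixed derivative pattern the untouched product $B$ is fixed across all choices of $\gamma$ and $S$, and only the derived part $A$ (of support $\leq rs$) together with the shift $X^S$ drives the variation, so the dimension per pattern is controlled solely by the number of multilinear monomials of support $\leq m + rs$, independently of the (possibly much larger) support of $B$.
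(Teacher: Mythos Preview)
Your argument is correct and follows precisely the strategy of~\cite{KLSS14, KS-full}, which is exactly what the paper does: it does not prove this lemma at all but simply restates the bound from~\cite{KS-full}. Your Leibniz expansion, grouping by the degree pattern $(r_1,\dots,r_n)$, and the key factorisation $A\cdot B$ with $B$ fixed per pattern is the standard proof; the only minor looseness is that the span over multilinear $\mu$ with $|\mathrm{supp}(\mu)|\le m+rs$ has size $\sum_{k\le m+rs}\binom{N}{k}$ rather than exactly $\binom{N}{m+rs}$, but you already note this and in the paper's applications such polynomial slack is absorbed into the $2^{O(\sqrt n)}$ factors anyway.
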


The upper bound for $\sopm$ circuits, follows easily form the above lemma after random restrictions, and we formalize this in the lemma below.

\begin{lem}~\label{lem:complexity ub}
Let $\mu$ be a positive real number such that $0 \leq \mu < 1$. Let $\delta = (1-\mu)/2$ and let $p =  N^{-\mu-\delta}$ and let $\F$ be a field of characteristic zero. Let $P$ be a polynomial of degree $n$ in $N$ variables over $\F$ which is  computed by an $\sopm$ circuit $C$ of top fan-in $T$ and degree of product gates at level two at most $d$, i.e $P$ can represented as  $$P = \sum_{i=1}^T \alpha_i\cdot\prod_{j = 1}^d Q_{ij}$$ where $\alpha_i$ are field constants.
Let $m$ and $r$ be  positive integers satisfying $m+rs \leq N/2$ and ${\cal M}$ be any subset of multilinear monomials of degree equal to $r$. 
If $Td \leq N^{\frac{s\cdot \delta}{4}}$, then with probability at least  $1- N^{-3/4 \cdot\delta \cdot s}$ over random restrictions $V \leftarrow {\cal D}_p$,  $$\Phi_{{\cal M}, m} (C|_V) \leq Td^2n^3 \cdot rs \cdot 2^{O(\sqrt{n})}\cdot {N \choose m+rs} \cdot {n + r \choose r} $$
\end{lem}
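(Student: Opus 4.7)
The plan is to chain together three earlier pieces: the depth reduction of Corollary~\ref{lem:depth6-cor}, the random restriction analysis of Lemma~\ref{lem: rand res main}, and the bottom-support-bounded depth four upper bound of Lemma~\ref{lem:lowsupbound1}. First, I would apply Corollary~\ref{lem:depth6-cor} to the given $\sop$ circuit $C$ to obtain a $\Sigma\Pi\Sigma\wedge\Sigma\Pi$ circuit $C''$ computing the same polynomial $P$, whose top fan-in is at most $T d^2 n^3 2^{O(\sqrt{n})}$, whose product-gate fan-in at level two is at most $n$, and whose bottom-level inputs (the siblings of $\h^{\geq 1}[Q_{ij}]$) each depend on at most $N^{\mu}$ variables. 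Since restriction commutes with $+$, $\times$, and $\wedge$, we have $P|_V = C''|_V$, so it suffices to bound $\Phi_{{\cal M},m}(C''|_V)$.

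Next, the hypothesis $Td \leq N^{s\delta/4}$ is precisely the assumption required by Lemma~\ref{lem: rand res main}. Invoking that lemma, with probability at least $1 - N^{-3\delta s/4}$ over $V \leftarrow {\cal D}_p$, the polynomial $C''|_V$ can be written as $\sum_{u=1}^{T'} \prod_{v=1}^{n} D_{uv}'$, where $T' \leq Td^2 n^3 2^{O(\sqrt{n})}$ and every monomial in each $D_{uv}'$ has support at most $s$. This is exactly a depth four circuit with bottom support at most $s$, product fan-in at most $n$ at level two, and top fan-in $T'$, so Lemma~\ref{lem:lowsupbound1} applies (its hypothesis $m+rs \leq N/2$ is already assumed in the statement) and yields
$$\Phi_{{\cal M},m}(C|_V) \;\leq\; T'\cdot{n+r \choose r}\cdot{N \choose m+rs} \;\leq\; Td^2 n^3 \cdot 2^{O(\sqrt{n})} \cdot {n+r \choose r} \cdot {N \choose m+rs},$$
matching the claimed bound up to the small polynomial factor $rs$, which is absorbed into a slightly loose bookkeeping when tracking the top fan-in through the interpolation step of Corollary~\ref{lem:depth6-cor}.

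There is no real obstacle at this point, since all of the conceptual work has been done in the three preceding results. The only steps that warrant a small amount of care are (i) verifying that the top fan-in $T'$ of the post-restriction representation produced by Lemma~\ref{lem: rand res main} is indeed inherited from the top fan-in of $C''$ rather than inflated by the restriction, and (ii) checking that Lemma~\ref{lem:lowsupbound1} legitimately applies to $\sum_{u} \prod_{v} D_{uv}'$ even though the $D_{uv}'$ arise as restrictions of high-degree powers: this is fine because Lemma~\ref{lem:lowsupbound1} depends only on the bottom support, the product fan-in at level two, and the top fan-in, and not on the overall degree of the circuit.
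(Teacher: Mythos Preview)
Your proposal is correct and follows exactly the same approach as the paper, which simply states that the lemma follows immediately from Corollary~\ref{lem:depth6-cor}, Lemma~\ref{lem: rand res main}, and Lemma~\ref{lem:lowsupbound1}. Your observation about the extra $rs$ factor is also fine: the chain of lemmas actually yields the slightly stronger bound without it, so the stated bound follows a fortiori.
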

\begin{proof}
The lemma follows immediately from Corollary~\ref{lem:depth6-cor}, Lemma~\ref{lem: rand res main} and Lemma~\ref{lem:lowsupbound1}.
\end{proof}

\subsection{Nisan-Wigderson polynomial under random restrictions}
To complete the proof of Theorem~\ref{thm:mainthm intro}, we need a lower bound on the dimension of the space of projected shifted partial derivatives of the polynomial $\nwm$, under random restrictions. 
To this end, we will use the lower bound proved by Kayal and Saha~\cite{KayalSaha14}. 
We first enumerate our choice of parameters. Recall that $\delta = (1-\mu)/2$ is a positive real number.
\begin{enumerate}
\item $\gamma = \frac{2(\mu + \delta) + 1}{1-\mu-\delta}$ 
\item $N$ is such that $N/n$ is set equal to the smallest prime number between $n^{1 + \gamma}$ and $2n^{1+\gamma}$.  
\item $\rho = (\mu + \delta)\frac{\log N}{\log n}$
\item $D   = \frac{\gamma + \rho}{2(1 + \gamma)} \cdot n$ , where $D-1$ is the degree of the underlying univariate polynomials in the definition of $\nwm$.
\item $r, s$ which are the order of derivative and the bound on bottom support of the circuit after random restrictions respectively, are chosen such that $r = \epsilon_1\cdot \sqrt{n}, s = \epsilon_2\cdot \sqrt{n}$. Here, $\epsilon_1$ and $\epsilon_2$ are small enough positive real numbers  satisfying $\epsilon_1\cdot\epsilon_2 = 0.001 n $.  
\item $m = \frac{N}{2}(1-r\frac{\ln n}{n})$ is the degree of the shifts.
\item $p = N^{-(\mu + \delta)}$ is the probability with which each variable is independently kept alive. 
\item ${\cal M}$ is the set of all multilinear monomials of degree $r$. We take partial derivatives with respect to monomials in this set. 
\end{enumerate}
We are now ready to state the lower bound on the dimension of projected shifted partial derivatives as in~\cite{KayalSaha14}. 

\begin{lem}[Kayal-Saha~\cite{KayalSaha14}]~\label{lem: KS main}
Let $\nwm$ be Nisan-Wigderson polynomials as defined in Definition~\ref{defn:NW}. Let $\F$ be any field of characteristic zero. Then, for the choice of parameters defined above
$$\Phi_{{\cal M}, m}(\nwm|_V) \geq \frac{1}{n^{O(1)}}\text{min}\left(\frac{p^r}{4^r} \cdot {N \choose r} \cdot {N \choose m}, {N \choose m + n - r}\right) $$
with probability at least $1 - \frac{1}{n^{\theta(1)}}$ over random restrictions $V \leftarrow {\cal D}_p$. 
\end{lem}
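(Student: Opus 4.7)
The plan is to establish the dimension lower bound using the standard ``leading monomial'' technique adapted to the NW polynomial structure, with the random restriction analysis layered on top. First, I would exploit the key combinatorial fact about $\nwm$: since each monomial corresponds to evaluations of a univariate polynomial of degree $\leq D-1$ over $\F_\psi$, any two distinct NW monomials (viewed as subsets of $\overline X$) share at most $D-1$ variables. This yields tight control on partial derivatives: for any multilinear $\alpha$ of degree $r$, the number of NW monomials containing $\alpha$ in their support is at most $\psi^{\max(D-r,0)}$, so $\partial_\alpha(\nwm)$ is a structured sum of very few multilinear monomials of degree $n-r$.

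Second, I would analyze the random restriction $V \leftarrow \mathcal{D}_p$ and identify a suitable set $\mathcal{M}^* \subseteq \mathcal{M}$ of ``useful'' derivative monomials after restriction. The relevant expectations are that the number of $\alpha \in \mathcal{M}$ with $\alpha \subseteq V$ concentrates around $\binom{N}{r} p^r$, and that for a constant fraction of those, $\partial_\alpha(\nwm|_V)$ is nonzero with controlled support. The factor $1/4^r$ in the bound is the expected slack coming from Chebyshev/Markov concentration losses. Applying Markov's inequality (coupled with a second moment estimate to handle the dependence between different $\alpha$'s) would give that $|\mathcal{M}^*| \geq \frac{p^r}{4^r} \binom{N}{r}$ with probability $1 - n^{-\Theta(1)}$.

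Third, I would establish the dimension lower bound via a lex-ordering ``leading monomial'' argument. For each $\alpha \in \mathcal{M}^*$ and each shift $S$ disjoint from the support of the derivative, the projection $\sigma(\prod_{i\in S} X_i \cdot \partial_\alpha(\nwm|_V))$ has a well-defined leading monomial of degree $n-r+m$. The crucial claim is that as $(\alpha,S)$ varies, these leading monomials are nearly all distinct; this is exactly where the NW property (distinct monomials intersecting in $\leq D-1$ places) is used to show that two different $\alpha$'s contribute ``non-collinear'' shifted derivatives. A double-counting argument then gives the lower bound $|\mathcal{M}^*| \cdot \binom{N}{m} / (\text{multiplicity per leading monomial})$ on the dimension. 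When the shifts do not saturate the ambient multilinear space of degree $n-r+m$, this yields the first term $\frac{p^r}{4^r}\binom{N}{r}\binom{N}{m}$; when they do saturate it (i.e.\ when the above product exceeds $\binom{N}{n-r+m}$), the bound is trivially capped at $\binom{N}{n-r+m}$, explaining the $\min$.

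The main obstacle is the leading-monomial step: one must argue that distinct alive $\alpha$'s give rise to essentially-distinct shifted monomials despite the fact that $r < D$ means $\partial_\alpha(\nwm|_V)$ is \emph{not} a single monomial but a sum. The way past this, as in~\cite{KSS13, KayalSaha14}, is to restrict attention to a carefully chosen subfamily of $\alpha$'s (those whose unique ``canonical'' extending NW-monomial survives in $V$) and to track a single term per derivative; the NW intersection property then guarantees that the resulting leading monomials are nearly in bijection with $\mathcal{M}^* \times \binom{[N] \setminus \text{supp}}{m}$, up to polynomial-in-$n$ multiplicities that are absorbed into the $\frac{1}{n^{O(1)}}$ factor.
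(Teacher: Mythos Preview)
The paper does not prove this lemma at all: it is stated as a black-box citation of Kayal--Saha~\cite{KayalSaha14} and used without proof. So there is no ``paper's own proof'' to compare against; the authors simply invoke the bound.

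Your sketch is a reasonable high-level outline of how the argument in~\cite{KayalSaha14} (and the closely related~\cite{KLSS14, KS-full}) actually proceeds: one identifies, after the random restriction, a large family of derivative monomials $\alpha$ whose corresponding $\partial_\alpha(\nwm|_V)$ contains a surviving NW monomial, and then uses the NW distance property together with a leading-monomial / distinct-monomial count to lower-bound the dimension of the projected-shifted span. The $\min$ with $\binom{N}{m+n-r}$ is, as you say, the trivial ambient cap. A couple of points where your write-up is imprecise relative to the actual argument: the $4^{-r}$ factor is not a concentration slack but comes from the specific way one carves out ``good'' $\alpha$'s (roughly, one restricts to $\alpha$ supported on a designated half of the rows and to NW monomials whose restriction to those rows is $\alpha$), and the dimension count for \emph{projected} shifted partials is not literally a leading-monomial argument over all shifts but rather a count of distinct multilinear monomials of degree $m+n-r$ that can be produced, with an inclusion--exclusion to control overcounting (this is where the $1/n^{O(1)}$ loss enters). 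Your identification of the ``main obstacle'' is correct, and the resolution you describe---tracking one canonical surviving NW extension per $\alpha$---is exactly what is done.
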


\subsection{Wrapping up the proof of Theorem~\ref{thm:mainthm intro}}
From Lemma~\ref{lem: KS main} and Lemma~\ref{lem: rand res main}, we know that with a non-zero probability over the random restrictions $V$ from the distribution ${\cal D}_p$, the following two conditions hold. 
\begin{enumerate}
\item $$\Phi_{{\cal M}, m}(\nwm|_V) \geq \frac{1}{n^{O(1)}}\text{min}\left(\frac{p^r}{4^r} \cdot {N \choose r} \cdot {N \choose m}, {N \choose m + n - r}\right) $$
\item $$\Phi_{{\cal M}, m} (C|_V) \leq Td^2n^3 \cdot rs \cdot 2^{O(\sqrt{n})}\cdot {N \choose m+rs} \cdot {n + r \choose r}$$ 
\end{enumerate}
If $C$ computed the polynomial $\nwm$, then 
$$Td^2n^3 \cdot rs \geq \frac{{\frac{1}{n^{O(1)}}\text{min}\left(\frac{p^r}{4^r} \cdot {N \choose r} \cdot {N \choose m}, {N \choose m + n - r}\right)}}{{2^{O(\sqrt{n})}\cdot {N \choose m+rs} \cdot {n + r \choose r}}} $$

From the calculations in Appendix~\ref{sec:calc}, it follows that for our choice of parameters, the ratio is at least $\exp(\sqrt{n}\log n)$. So, we have the following theorem.

\begin{thm}~\label{thm:mainthm}
Let $\mu$ be an absolute constant such that $0 \geq \mu < 1$ and $\F$ be a field of characteristic zero. For $1 \leq i \leq T$ and $1 \leq j \leq d$, if there exist polynomials $Q_{ij}$, each dependent on only $s = N^{\mu}$ variables, such that 
$$\nwm = \sum_{i = 1}^T\prod_{j = 1}^{d} Q_{ij}$$
Then
$$T\cdot d \geq n^{\Omega_{\mu}(\sqrt{n})}$$
\end{thm}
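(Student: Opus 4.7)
The plan is to follow the high-level two-step strategy outlined in Section~\ref{sec:overview lower bounds}, combining the structural transformation from Corollary~\ref{lem:depth6-cor}, random restrictions, and the projected shifted partial derivative measure. Assume, for the sake of contradiction, that there is a representation of $\nwm$ with $T \cdot d < N^{\delta s / 4}$, where $s = \Theta(\sqrt{n})$ and $\delta = (1-\mu)/2$; any representation that violates this bound already gives us what we want, so this is the only case we must rule out.

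First, I would apply Corollary~\ref{lem:depth6-cor} to convert the hypothetical small $\sopm$ representation of $\nwm$ into a $\Sigma\Pi\Sigma\wedge\Sigma\Pi$ circuit $C''$ with top fan-in at most $Td^2n^3 2^{O(\sqrt{n})}$ and product fan-in at level two bounded by $n$, whose bottom $\Sigma\Pi$ layer consists of the siblings of the $\h^{\geq 1}[Q_{ij}]$. The crucial gain here is that the level-two product fan-in, which might originally have been much larger than $n$, is now bounded by the degree of the target polynomial; this is exactly the structure that will make the shifted partial derivative based upper bound from~\cite{KS-full} applicable after the next step.

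Next, I would apply the random restriction $V \leftarrow {\cal D}_p$ with $p = N^{-(\mu+\delta)}$. By Lemma~\ref{lem: rand res main}, with probability at least $1 - N^{-3\delta s/4}$, every monomial of every ``bottom'' polynomial in the restricted circuit $C''|_V$ has support at most $s$, because the set ${\cal B}$ of ``bad'' multilinear monomials of support $s$ inside the $Q_{ij}$'s vanishes. This reduces $C''|_V$ to a circuit of the form $\sum_u \prod_{v=1}^n D_{uv}'$ of low bottom support, to which Lemma~\ref{lem:lowsupbound1} applies and yields the upper bound on $\Phi_{{\cal M}, m}(C|_V)$ given in Lemma~\ref{lem:complexity ub}.

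Finally, I would compare this upper bound with the Kayal-Saha lower bound on $\Phi_{{\cal M}, m}(\nwm|_V)$ from Lemma~\ref{lem: KS main}. Since both events (the low-support structural event for $C''|_V$ and the large-complexity event for $\nwm|_V$) hold simultaneously with positive probability, and since $\Phi_{{\cal M}, m}$ is invariant under equality of polynomials, the ratio of the lower bound to the ``parameter-free'' part of the upper bound must be absorbed by $Td^2 n^3 \cdot rs$. With the parameter choices $r = \epsilon_1\sqrt{n}$, $s = \epsilon_2 \sqrt{n}$ (for small constants with $\epsilon_1 \epsilon_2 = 0.001n$), and $m = \frac{N}{2}(1 - r \frac{\ln n}{n})$, the calculation referenced in Appendix~\ref{sec:calc} shows that this ratio is at least $\exp(\sqrt{n} \log n)$, so $T \cdot d \geq n^{\Omega_\mu(\sqrt{n})}$. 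The main obstacle is really the parameter-tuning step: one has to ensure that $p$, $s$, $r$, $m$, and the degree parameter $D$ of the Nisan--Wigderson polynomial are simultaneously in the regime where (i) the bad event of Lemma~\ref{lem: rand res 2} has probability below the failure probability of Lemma~\ref{lem: KS main}, (ii) $m + rs \leq N/2$ so Lemma~\ref{lem:lowsupbound1} applies, and (iii) the binomial and $p^r$ factors line up to leave the claimed $\exp(\sqrt{n}\log n)$ gap after dividing by the slack $N^{\delta s / 4}$ we allowed for $Td$; this is exactly what the calculation in the appendix checks.
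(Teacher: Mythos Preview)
Your proposal is correct and follows essentially the same route as the paper's own proof: assume $Td < N^{\delta s/4}$, apply Corollary~\ref{lem:depth6-cor} to bound the level-two product fan-in by $n$, use the random restriction of Lemma~\ref{lem: rand res main} to force low bottom support, invoke Lemma~\ref{lem:complexity ub} for the upper bound and Lemma~\ref{lem: KS main} for the lower bound on $\Phi_{{\cal M},m}$, and finish with the Appendix~\ref{sec:calc} calculation. The structure, the lemmas cited, and the parameter choices all match the paper.
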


As a remark, we mention here that the lower bound above also holds for any translation $\nwm(\overline{X} + \overline{a})$ of the polynomial $\nwm(\overline{X})$. This is because the highest degree term of  $\nwm(\overline{X} + \overline{a})$ equals the polynomial $\nwm(\overline{X})$ and from Lemma~\ref{lem:interpolation}, the homogeneous components of a polynomial computable by small sized $\sop$ circuits also have small sized $\sop$ circuits. We leave the details to the interested reader.  
\section{Application to polynomial identity testing}~\label{sec:pit}
In this section, we prove Theorem~\ref{thm:mainthm2 intro}. 
We are interested in identity testing for $\sop$ circuits, i.e for polynomials  in $N$ variables $\{X_1, X_2, \ldots, X_N\}$ which can be expressed in the form 
$$P = \sum_{i = 1}^T \prod_{j = 1}^d Q_{ij}$$ such that 
\begin{enumerate}
\item The individual degree in $P$ of every variable is at most $k$
\item Each $Q_{ij}$ depends on at most $s$ variables
\end{enumerate}
For the case of this application, we will think of $k, T$ being polynomial in $(\log N)$ and $s$ being $N^{1/2-\epsilon}$ for a positive constant $\epsilon$. Observe that the bound on individual degree lets us upper bound the total degree of the polynomials by $Nk$.

We describe the construction of the hitting set in Section~\ref{sec:hitting set} and prove its correctness in Section~\ref{sec:hitting set correct}. We go over some preliminaries that we need in our proof in the next section. 

\subsection{Some preliminaries} 
In the following lemma, we prove some properties of the model of $\sop$ circuits, which will be useful in the proof of the identity testing result. 
\begin{lem}~\label{lem: model props}
Let $\F$ be a field of characteristic zero. 
Let $P$ be a non-zero polynomial in $N$ variables and individual degree at most $k$ over $\F$, which is computed by a $\sop$ circuit C of top fan-in $T$ and product fan-in  $d$ at level two, i.e $P$ can be expressed as 
$$P = \sum_{i = 1}^T \prod_{j = 1}^d Q_{ij}$$
such that for each $i \in [T]$ and $j \in [d]$, $Q_{ij}$ depends on at most $s$ variables. Then, the following are true. 
\begin{enumerate}
\item For every variable $y$ and integer $1 \leq j \leq k$, $\frac{{\partial}^j P}{{\partial y^j}}$ can be computed by a circuit of the form $$\frac{\partial^j P}{\partial y^j} = \sum_{i = 1}^{T'} \prod_{j = 1}^d Q_{ij}'$$
where $T' \leq T\cdot (k+1)^2$ and each of the polynomials $Q_{ij}'$ depends on at most $s$ variables. 
\item For any $a \in \F^N$, $P(\overline{X} + \overline{a})$ can be computed by a circuit of the form $$P(\overline{X} + \overline{a}) = \sum_{i = 1}^{T} \prod_{j = 1}^d Q_{ij}''$$
where each of the polynomials $Q_{ij}''$ depends on at most $s$ variables.
\end{enumerate}
\end{lem}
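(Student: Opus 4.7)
The plan is to handle the two parts separately; the second is essentially immediate, while the first requires a short interpolation argument.

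For part 2, if $P = \sum_{i=1}^T \prod_{j=1}^d Q_{ij}$, then
\[
P(\overline{X} + \overline{a}) = \sum_{i=1}^T \prod_{j=1}^d Q_{ij}(\overline{X} + \overline{a}),
\]
and a coordinate translation does not change the set of variables on which a polynomial actually depends (substituting $X_\ell \mapsto X_\ell + a_\ell$ where $X_\ell$ does not appear in $Q_{ij}$ has no effect). Hence each $Q_{ij}''(\overline{X}) := Q_{ij}(\overline{X} + \overline{a})$ depends on exactly the same variables as $Q_{ij}$, at most $s$ of them, giving a $\sop$ representation with the same parameters $T, d$.

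For part 1, the key is univariate interpolation in $y$. Since the individual degree of $P$ in $y$ is at most $k$, write $P = \sum_{\ell=0}^k y^\ell P_\ell(\overline{X}\setminus\{y\})$, so that
\[
\frac{\partial^j P}{\partial y^j} \;=\; \sum_{\ell=j}^k \frac{\ell!}{(\ell-j)!}\, y^{\ell-j}\, P_\ell.
\]
Fix any $k+1$ distinct field elements $b_0,\dots,b_k$. By Lagrange interpolation there are scalars $\alpha_{\ell m}\in\F$ with $P_\ell = \sum_{m=0}^k \alpha_{\ell m}\, P(y = b_m)$, and each substituted polynomial admits the representation $P(y=b_m) = \sum_{i=1}^T \prod_{j'=1}^d Q_{ij'}(y = b_m)$, where $Q_{ij'}(y=b_m)$ depends on at most $s$ variables (in fact on $\mathrm{vars}(Q_{ij'})\setminus\{y\}$). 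Substituting this in gives
\[
\frac{\partial^j P}{\partial y^j} \;=\; \sum_{\ell=j}^{k}\sum_{m=0}^{k}\sum_{i=1}^{T} \frac{\ell!}{(\ell-j)!}\,\alpha_{\ell m}\, y^{\ell-j}\, \prod_{j'=1}^d Q_{ij'}(y=b_m),
\]
a sum of at most $(k-j+1)(k+1)T \leq (k+1)^2 T$ products, which matches the bound $T' \leq T(k+1)^2$.

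The one piece of bookkeeping that needs care is absorbing the extra factor $\frac{\ell!}{(\ell-j)!}\alpha_{\ell m}\, y^{\ell-j}$ into the product so as to keep the fan-in at exactly $d$ and every factor in at most $s$ variables. The observation that unlocks this is: for $j\geq 1$, any index $i$ for which none of the $Q_{ij'}$ depends on $y$ contributes $0$ and may be dropped. For the remaining $i$'s, pick an index $j^\star = j^\star(i)$ with $y\in \mathrm{vars}(Q_{ij^\star})$; then $Q_{ij^\star}(y=b_m)$ has at most $s-1$ variables, so multiplying it by the univariate $\frac{\ell!}{(\ell-j)!}\alpha_{\ell m} y^{\ell-j}$ re-introduces $y$ and yields a factor in at most $s$ variables, while the other $d-1$ factors are left untouched. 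This is the main (and essentially the only) subtle point in the argument; the rest is purely mechanical.
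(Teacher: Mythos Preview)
Your proof is correct and follows essentially the same route as the paper's: part~2 is dispatched identically, and part~1 is done by univariate interpolation in $y$ to extract the coefficients $P_\ell$ as $\F$-linear combinations of the $(k+1)$ specializations $P(y=b_m)$, then reassembling $\partial^j P/\partial y^j$ as a sum of at most $(k+1)^2 T$ products. The paper's argument stops at ``the result follows'' once it has written $\partial^j P/\partial y^j$ as a linear combination of terms $C_{j_1}\cdot y^{j_2}$; your extra paragraph explaining how to absorb the monomial $y^{\ell-j}$ back into a factor without exceeding $s$ variables (by first discarding the $i$'s whose product is $y$-free, then folding $y^{\ell-j}$ into the factor $Q_{ij^\star}(y=b_m)$ which has lost a variable) is a genuine refinement that the paper leaves implicit.
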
 
\begin{proof}
The proof of the second item is immediate from the definitions. The only thing that changes due to a translation is the number of monomials in the $Q_{ij}$. The number of variables that each $Q_{ij}$ depends on remains unchanged, and so does the fan-in of the top sum gate and the product gates at level two. 

We now prove the first item. Let the set of variables in $P$ be $\overline{X} = \overline{X'} \cup \{y\}$ where $X'$ is of size $N-1$. Since the individual degree of $P$ is at most $k$, we can write $P = \sum_{i = 0}^k C_i(\overline{X'})\cdot y^i$. Here, $C_i(\overline{X'})$ are polynomials only in the $X'$ variables and are the coefficient of $y^i$, when viewing $P$ as an element of $\F[\overline{X'}][y]$. Now, for every $0 \leq i \leq k$, we can compute each of $C_i$ by a $\sop$ circuit with top fan-in at most $T\cdot(k+1)$ by interpolation as given by Lemma~\ref{lem:extracting coefficients}. All the partial derivatives of $P$ with respect to $y$ are linear combinations of the terms of the form $C_{j_1}\cdot y^{j_2}$. And so, the result follows. 
\end{proof}

We will also need the following simple fact about polynomials. 
\begin{lem}~\label{lem:non zero derivative}
Let $\F$ be a field of characteristic zero. 
Let $R \in \F[y]$ be a non-zero polynomial of degree at most  $t$ over the field $\F$. Then, for every $a \in \F$ such that $R(a) = 0$, there exists a $j$ such that $0 \leq j \leq t-1$ and  $\frac{\partial^j R}{\partial y^j}(a) = 0$ and $\frac{\partial^{j+1} R}{\partial y^{j+1}}(a) \neq 0$. 
\end{lem}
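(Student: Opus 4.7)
The plan is to examine the sequence of scalar values $R^{(0)}(a), R^{(1)}(a), R^{(2)}(a), \ldots$ obtained by evaluating successive derivatives of $R$ at the point $a$, and to argue that this sequence cannot remain zero all the way up to the index where the polynomial derivatives themselves become non-constant zero.

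First I would dispose of a trivial case: if $R$ is a nonzero constant polynomial, then it has no roots and the statement is vacuously true, so I may assume $\deg R = m$ with $1 \leq m \leq t$. The main structural input is the elementary fact that in characteristic zero the $m$-th derivative of a polynomial of degree exactly $m$ with leading coefficient $c_m \neq 0$ equals the nonzero constant $m!\cdot c_m$; this uses characteristic zero in an essential way, since $m!$ is then invertible. In particular, $R^{(m)}(a) = m!\cdot c_m \neq 0$ regardless of the choice of $a$.

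Now I would combine this with the hypothesis. By assumption $R^{(0)}(a) = R(a) = 0$, while I have just shown $R^{(m)}(a) \neq 0$. Consider the finite sequence of scalars $R^{(0)}(a), R^{(1)}(a), \ldots, R^{(m)}(a)$; it begins with $0$ and ends with a nonzero value, so there must be a largest index $j$ with $0 \leq j < m$ at which $R^{(j)}(a) = 0$. By maximality of $j$, we have $R^{(j+1)}(a) \neq 0$. Since $j+1 \leq m \leq t$, the index $j$ satisfies $j \leq t-1$, which is exactly what the lemma asserts.

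The proof is essentially immediate once one isolates the right invariant, so there is no genuine technical obstacle; the only thing to be careful about is that the argument genuinely requires characteristic zero (otherwise $R(y) = y^p - a^p$ over a field of characteristic $p$ would be a counterexample, since all of its higher derivatives vanish identically). I would make sure to flag this assumption when stating the key step $R^{(m)}(a) = m!\cdot c_m \neq 0$.
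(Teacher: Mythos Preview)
Your argument is correct and follows essentially the same idea as the paper's proof: both observe that the top derivative $R^{(m)}(a) = m!\,c_m$ is nonzero in characteristic zero, and then combine this with $R(a)=0$ to locate the required index $j$. Your version is in fact more carefully written than the paper's, which simply points to $j = t'-1$ and declares that the lemma immediately follows.
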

\begin{proof}
Let the degree of $R$ in $y$ be equal to $t'$. This means that the coefficient of highest degree term $y^{t'}$ in $R$ is non-zero. Let us call the coefficient of $y^{t'}$ in $R(y)$ as $C_{t'}$. We know that $C_{t'}$ is nonzero.  Consider $j = t'-1$. The lemma immediately follows. 
\end{proof}

We will crucially use the following result of Dvir, Shpilka, Yehudayoff~\cite{DSY09} in the analysis of the hitting set constructed in this paper. 
\begin{lem}[Dvir, Shpilka, Yehudayoff~\cite{DSY09}]~\label{lem:DSY main}
For a field $\F$, let $P \in \F[X_1, X_2, \ldots, X_N, Y ]$ be a non-zero polynomial of degree at most $k$ in $Y$. Let $f \in \F[X_1, X_2, \ldots, X_N]$ be a polynomial such that $P(X_1, X_2, \ldots, X_N, f) = 0$ and $\frac{\partial P}{\partial Y} (0, 0, \ldots, 0, f(0, 0, \ldots, 0))\neq 0$. Let $$P = \sum_{i = 0}^k C_i(X_1, X_2, \ldots, X_N)\cdot y^i$$ Then, for every $t \geq 0$, there exists a polynomial $R_t \in \F[Z_1, Z_2, \ldots, Z_{k+1} ]$ of degree at most $t$ such that $$\h^{\leq t}[f(X_1, X_2, \ldots, X_N)] = \h^{\leq t}[R_t(C_0, C_1, \ldots, C_k)] $$
\end{lem}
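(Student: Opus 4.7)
The plan is to prove this via the formal multivariate Hensel's lemma applied to a \emph{generic} version of $P$. Writing $z_i^* := C_i(0, \ldots, 0)$ and $\alpha := f(0, \ldots, 0)$, I would introduce new indeterminates $W_0, \ldots, W_k$ and the auxiliary polynomial $\tilde P(W, Y) := \sum_{i=0}^k (W_i + z_i^*) Y^i$, viewed as an element of $\F[W_0, \ldots, W_k][Y]$. At $W = 0$ one has $\tilde P(0, Y) = P(0, \ldots, 0, Y)$, which vanishes at $Y = \alpha$, and whose $Y$-derivative at $\alpha$ equals the nonzero scalar $(\partial P / \partial Y)(0, \ldots, 0, \alpha)$. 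Hensel's lemma in the power series ring $R := \F[[W_0, \ldots, W_k]]$ then produces a unique $\tilde g(W) \in R$ with $\tilde g(0) = \alpha$ and $\tilde P(W, \tilde g(W)) = 0$ identically.

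Next I would substitute $W_i \mapsto C_i(X) - z_i^*$. Each such $W_i(X)$ has zero $X$-constant term, so the substitution induces a well-defined ring map $R \to \F[[X_1, \ldots, X_N]]$. Under it the coefficient of $Y^i$ in $\tilde P$ becomes $C_i(X)$, so $\tilde P(W, Y)$ pushes forward to $P(X, Y)$, and $h(X) := \tilde g(C_0(X) - z_0^*, \ldots, C_k(X) - z_k^*)$ becomes a power series root of $P(X, \cdot)$ with $h(0) = \alpha$. Applying Hensel's lemma once more in $\F[[X_1, \ldots, X_N]]$ (again using the given nondegeneracy of $\partial P / \partial Y$ at $(0, \alpha)$) gives uniqueness of such a root, so $h = f$ as formal power series. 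For the truncation step, let $\tilde g_{\leq t}(W)$ be the total-degree $\leq t$ truncation of $\tilde g$ in the $W_i$, a polynomial of degree at most $t$. Since each $C_i(X) - z_i^*$ has $X$-degree at least $1$, any $W$-monomial of degree exceeding $t$ contributes only terms of $X$-degree strictly greater than $t$ after substitution, so $\h^{\leq t}[f] = \h^{\leq t}[h] = \h^{\leq t}[\tilde g_{\leq t}(C(X) - z^*)]$. Setting $R_t(Z_0, \ldots, Z_k) := \tilde g_{\leq t}(Z_0 - z_0^*, \ldots, Z_k - z_k^*)$ then yields a polynomial of degree at most $t$ in the $Z_i$ satisfying the identity claimed in the lemma.

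The main obstacle is establishing Hensel's lemma in the power series rings $R$ and $\F[[X_1, \ldots, X_N]]$, i.e., a formal multivariate implicit function theorem. I would obtain it by Newton iteration: starting from $g_0 = \alpha$, set $g_{j+1} := g_j - \tilde P(W, g_j) / (\partial \tilde P / \partial Y)(W, g_j)$. The denominator has nonzero constant term $(\partial P/\partial Y)(0, \alpha)$ and is hence a unit in $R$, so each iterate is well-defined; a standard induction shows $\tilde P(W, g_j) \in (W_0, \ldots, W_k)^{2^j}$, so the iterates converge $(W)$-adically to the unique desired $\tilde g$. The analogous argument in $\F[[X_1, \ldots, X_N]]$ provides the uniqueness of the root of $P(X, \cdot)$ used above. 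I would not need to make the iteration explicit: the abstract existence-and-uniqueness, combined with the degree-preservation argument in the truncation step, is what the proof rests on.
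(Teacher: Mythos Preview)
The paper does not prove this lemma; it is quoted verbatim from \cite{DSY09} and used as a black box in the PIT section. Your proposed argument---lift to a generic polynomial $\tilde P(W,Y)=\sum (W_i+z_i^*)Y^i$, run Hensel/Newton iteration in $\F[[W_0,\ldots,W_k]]$ to obtain a universal root $\tilde g(W)$, specialize $W_i\mapsto C_i(X)-C_i(0)$, identify the result with $f$ by Hensel uniqueness in $\F[[X]]$, and then truncate---is correct and is essentially the proof given in \cite{DSY09}.

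One terminological quibble: when you say each $C_i(X)-z_i^*$ ``has $X$-degree at least $1$'', you mean its \emph{order} (the degree of its lowest nonzero homogeneous part) is at least $1$, i.e., it has no constant term. Its actual $X$-degree could be large. It is the order bound that guarantees a $W$-monomial of total degree exceeding $t$ specializes to something lying in $\mathfrak m_X^{t+1}$, so that $\h^{\leq t}$ is unaffected by discarding those terms. With that wording fixed, the truncation step goes through exactly as you wrote it.
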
 
A key technical idea in the proof will be the notion of Nisan-Wigderson designs introduced in~\cite{NW94}. We will use the following lemma. 
\begin{lem}[Nisan-Wigderson~\cite{NW94}]~\label{lem: designs}
For every $a, b \in \N$, $b < 2^a$, there exists a family of sets $S_1, S_2, \ldots, S_b \subseteq \{1, 2, \ldots, l\}$ such that 
\begin{enumerate}
\item $l \in O(a^2/\log b)$ 
\item for all $i$, $|S_i| = a$
\item for all $i \neq j$, $|S_i \cap S_j| \leq \log b$
\end{enumerate}
Moreover, such a set family can be constructed in time polynomial in $b$ and $2^l$.  
\end{lem}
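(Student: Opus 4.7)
The plan is to realize each $S_i$ as the graph of a low-degree univariate polynomial over a finite field $\F_q$, exploiting the classical fact that two distinct polynomials of degree at most $d$ over $\F_q$ agree on at most $d$ points.

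\textbf{Setting parameters and defining the sets.} I would let $d := \lceil \log b \rceil$ and pick a prime $q$ with $a \le q \le 2a$, whose existence follows from Bertrand's postulate (the case $a = 1$ being trivial). Since $q \ge 2$, the number of polynomials of degree at most $d$ in $\F_q[z]$ is $q^{d+1} \ge 2^{d+1} \ge 2b$, so one may enumerate any $b$ distinct such polynomials as $p_1, \ldots, p_b$. Next, fix an arbitrary subset $A \subseteq \F_q$ with $|A| = a$ (possible since $q \ge a$), set the universe $U := A \times \F_q$, and define
\[
S_i := \{(x, p_i(x)) : x \in A\} \subseteq U.
\]

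\textbf{Verification.} Property (2), $|S_i| = a$, is immediate because the first coordinates of $S_i$ range over the $a$ distinct elements of $A$. For (3), if $i \ne j$ then $p_i - p_j$ is a nonzero polynomial in $\F_q[z]$ of degree at most $d$, hence has at most $d \le \lceil \log b \rceil$ roots in $\F_q$, and each shared first-coordinate contributes exactly one element to $S_i \cap S_j$; so $|S_i \cap S_j| \le \lceil \log b \rceil$. For constructibility, we enumerate $b$ coefficient tuples of length $d+1$ over $\F_q$ and perform $a$ polynomial evaluations per set, which takes time $\poly(b, a, \log b)$; since $l \ge a$, this is $\poly(b, 2^l)$.

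\textbf{Main obstacle and refinement.} The construction sketched above yields $l = |U| = aq = O(a^2)$, which is off by a $\log b$ factor from the claimed bound $l = O(a^2/\log b)$. This is the step I expect to be the main obstacle: naively shrinking the range of the second coordinate from $\F_q$ to a set of size $O(a/\log b)$ inflates pairwise intersections, because a smaller codomain forces polynomial values to collide more often. The standard fix is code concatenation: use an outer Reed--Solomon code over a sufficiently large field $\F_Q$ (so that there are at least $b$ codewords with pairwise distance at least $a - \log b$), and compose with an inner code mapping $\F_Q$-symbols into blocks over an alphabet of size $O(a/\log b)$; each concatenated codeword then becomes a subset of size $a$ in a universe of size $O(a^2/\log b)$ while maintaining the distance guarantee. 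Balancing the inner and outer parameters so that the set-size remains exactly $a$, the alphabet reduction yields $l = O(a^2/\log b)$, and the distance drop stays within $\log b$ is the technical heart of the argument; this is the route taken in \cite{NW94}, and once these parameters are set, verification of the three properties is routine.
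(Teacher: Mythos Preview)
The paper does not supply a proof of this lemma; it is quoted from \cite{NW94} and used as a black box. So there is no in-paper argument to compare your proposal against.

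On the merits of the proposal itself: the Reed--Solomon construction you describe is correct and gives a design with $l = O(a^2)$, but your ``refinement'' paragraph does not actually close the gap to $l = O(a^2/\log b)$. You gesture at code concatenation without specifying the inner and outer codes, the block lengths, or how the intersection bound survives the alphabet reduction; as written this is not a proof but a plan for one. Also, your claim that ``this is the route taken in \cite{NW94}'' is not accurate: the construction in \cite{NW94} is a greedy argument --- one builds the $S_i$ one at a time, showing by a counting argument over $\binom{[l]}{a}$ that as long as $l = \Theta(a^2/\log b)$ there always remains an $a$-subset meeting each previously chosen $S_j$ in at most $\log b$ points. That greedy search is what yields the stated $\poly(b,2^l)$ running time. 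If you want an explicit algebraic construction hitting $l = O(a^2/\log b)$, one clean way is to work over $\F_q$ with $q$ a prime power of order $a/\log b$ and take graphs of degree-$(\log b)$ polynomials over a universe structured as $[\log b]\times \F_q \times \F_q$, bundling $\log b$ evaluation points per coordinate; but this needs the side condition $a \ge 2\log b$ and a careful parameter check you have not supplied. A minor point: your construction gives $|S_i\cap S_j|\le \lceil \log b\rceil$ rather than $\le \log b$.
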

We will also use the following lemma of Alon~\cite{AlonCN} very crucially in our proof.
\begin{lem}[Combinatorial Nullstellensatz~\cite{AlonCN}]~\label{lem: comb nulls}
Let $P$ be a non-zero polynomial of  individual degree at most $d$ in $N$ variables over a large enough field $\F$. Let $S$ be an arbitrary subset of $\F$ of size $d+1$. Then, there exists a point $p$ in $S^{N}$ such that $P(p) \neq 0$.  
\end{lem}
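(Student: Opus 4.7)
The plan is to proceed by induction on $N$, the number of variables. The base case $N=1$ uses the fundamental fact that a non-zero univariate polynomial over a field of degree at most $d$ has at most $d$ distinct roots; since $|S| = d+1 > d$, at least one element of $S$ is not a root of $P$, yielding the required point.

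For the inductive step, assume the lemma holds for $N-1$ variables, and consider a non-zero $P \in \F[X_1, \ldots, X_N]$ with individual degree at most $d$. View $P$ as a polynomial in $X_N$ with coefficients in $\F[X_1, \ldots, X_{N-1}]$:
$$P = \sum_{i=0}^{d} C_i(X_1, \ldots, X_{N-1}) \cdot X_N^i.$$
Since $P$ is not identically zero, at least one of the polynomials $C_i$ is non-zero, and each $C_i$ inherits the bound of individual degree at most $d$ in the remaining $N-1$ variables. Applying the induction hypothesis to any non-zero $C_{i^*}$ produces a point $(a_1, \ldots, a_{N-1}) \in S^{N-1}$ with $C_{i^*}(a_1, \ldots, a_{N-1}) \neq 0$.

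Substituting these values into $P$ gives a univariate polynomial $P(a_1, \ldots, a_{N-1}, X_N) \in \F[X_N]$ which is non-zero (its coefficient of $X_N^{i^*}$ equals $C_{i^*}(a_1, \ldots, a_{N-1}) \neq 0$) and of degree at most $d$. By the base case argument, there exists $a_N \in S$ with $P(a_1, \ldots, a_{N-1}, a_N) \neq 0$, so the point $(a_1, \ldots, a_N) \in S^N$ is as required.

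There is no real technical obstacle here; this is Alon's classical Combinatorial Nullstellensatz, and the inductive argument above is the standard one. The only subtle point worth flagging is that the hypothesis of bounded \emph{individual} degree (rather than total degree) is exactly what makes the induction clean: the coefficient polynomials $C_i$ automatically inherit the same individual-degree bound in the remaining variables, so the induction hypothesis applies verbatim. The ``large enough field'' hypothesis is used only implicitly to guarantee the existence of a subset $S \subseteq \F$ of size $d+1$.
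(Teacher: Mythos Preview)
Your proof is correct and is the standard inductive argument for this fact. Note that the paper does not actually supply a proof of this lemma; it is simply cited from Alon~\cite{AlonCN} as a known result, so there is nothing to compare against beyond observing that your argument is the classical one.
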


\subsection{Blackbox PIT for $\sop$ circuits}~\label{sec:hitting set}
In this section, we prove the following theorem. 
\begin{thm}~\label{thm:mainthm2}
Let $c$ and $\mu$ be arbitrary constants such that $c> 0$ and  $0 \leq \mu < 1/2$, and let $\F$ be a field of characteristic zero. Let ${\cal C}$ be the set of polynomials $P$ in $N$ variables and individual degree at most $k$ over $\F$, with the property that $P$ can be expressed as 
$$P = \sum_{i = 1}^T \prod_{j = 1}^d Q_{ij}$$
such that 
\begin{enumerate}
\item $T < \log^c N$
\item $k < \log ^c N$
\item $d < N^c$
\item each $Q_{ij}$ depends on at most $N^{\mu}$ variables 
\end{enumerate}
Then, there exists a constant $\epsilon < 1$ dependent only on $c$ and $\mu$, such that there is a hitting set of size $\exp(N^{\epsilon})$ for ${\cal C}$ which can be constructed in time $\exp(N^{\epsilon})$. 
\end{thm}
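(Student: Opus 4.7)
The plan is to instantiate the Kabanets--Impagliazzo hardness-to-randomness paradigm, using the lower bound of Theorem~\ref{thm:mainthm} as the hard function, and then finish with a brute force Combinatorial Nullstellensatz hit (Lemma~\ref{lem: comb nulls}). Fix constants $\mu' \in (2\mu, 1)$ and $\mu'' \in (\mu', 1)$. Pick $n = N^{\alpha}$ with $\alpha$ small enough that (i) the Nisan--Wigderson polynomial $H := \nwa$ from Definition~\ref{defn:NW} lives in $a = n^{O_{\mu'}(1)} \le N^\beta$ variables with $\beta < 1/2$, (ii) $N^\mu \log N \le a^{\mu''}$, and (iii) $a$ is much larger than any polylog of $N$. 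Using Lemma~\ref{lem: designs}, construct sets $S_1, \dots, S_N \subseteq [\ell]$ with $|S_i| = a$, pairwise intersection at most $\log N$, and $\ell = O(a^2/\log N) \le N^\delta$ for some $\delta < 1$. Define the generator $G_H : \F^\ell \to \F^N$ by $G_H(z) = (H(z|_{S_1}), \dots, H(z|_{S_N}))$.

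The core claim is that $G_H$ fools $\mathcal{C}$, i.e.\ $P(G_H(z)) \not\equiv 0$ whenever $P \in \mathcal{C}$ is nonzero. Assuming the contrary, take the smallest $i^*$ with $P(G_1, \dots, G_{i^*}, X_{i^*+1}, \dots, X_N) \equiv 0$, and let $\tilde P := P(G_1, \dots, G_{i^*-1}, X_{i^*}, \dots, X_N)$. Pick generic scalars for $X_{i^*+1}, \dots, X_N$ and for the seed coordinates $z|_{[\ell] \setminus S_{i^*}}$ so that the resulting polynomial $\tilde P^{\star}(z|_{S_{i^*}}, X_{i^*})$ is still nonzero and has $H(z|_{S_{i^*}})$ as a root in the variable $X_{i^*}$; Lemma~\ref{lem:non zero derivative} then supplies an order $j \le k$ of differentiation after which $(\partial^{j+1}/\partial X_{i^*}^{j+1})\tilde P^{\star}$ does not vanish at $H$. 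Apply Lemma~\ref{lem:DSY main} (DSY) to $(\partial^{j}/\partial X_{i^*}^{j})\tilde P^{\star}$: writing it as $\sum_{l=0}^{k} C_l(z|_{S_{i^*}}) \cdot X_{i^*}^l$, there is a polynomial $R$ of degree at most $n$ in $k+1$ variables such that $H(z|_{S_{i^*}}) \equiv R(C_0, \dots, C_k) \pmod{\text{degree } > n}$.

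The crucial step is to bound the $\Sigma\Pi\left(\Sigma\Pi\right)^{[a^{\mu''}]}$-complexity of $H$ obtained by this composition. By Lemma~\ref{lem: model props}, each $C_l$ inherits a $\sop$ representation of top fan-in $T\,(k+1)^{O(1)}$, product fan-in $d$, and bottom support $N^\mu$. After substituting $X_k \leftarrow H(z|_{S_k})$ for $k < i^*$ and fixing the variables outside $S_{i^*}$ generically, every bottom gate $Q'_{ij}$ depends on at most $N^\mu$ of the $X_k$, each of which contributes at most $|S_k \cap S_{i^*}| \le \log N$ live $z$-variables; so the bottom support becomes at most $N^\mu \log N \le a^{\mu''}$ by our choice of parameters. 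Expanding $R(C_0,\dots,C_k)$ monomial by monomial yields a $\Sigma\Pi\left(\Sigma\Pi\right)^{[a^{\mu''}]}$-representation of $H$ of size at most $\binom{n+k+1}{n} \cdot (T(k+1)^{O(1)})^{n} \cdot nd = \exp(O(n \log \log N)) \cdot N^{O(c)}$, after extracting the homogeneous component of degree $n$ via Lemma~\ref{lem:interpolation}. With $n = N^\alpha$ and $T, k \le \log^c N$, this is strictly smaller than $n^{\Omega_{\mu'}(\sqrt n)}$, contradicting Theorem~\ref{thm:mainthm} applied to $H$.

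Once $G_H$ is shown to fool $\mathcal{C}$, the polynomial $P(G_H(z))$ is a nonzero polynomial in $\ell \le N^\delta$ variables of individual degree at most $k \cdot n$ (bounded by $N^{O(1)}$), so taking $\mathcal{H} := \{G_H(z) : z \in S^\ell\}$ for any $S \subseteq \F$ of size $Nk+1$ gives, by Lemma~\ref{lem: comb nulls}, a hitting set of size $(Nk+1)^\ell = \exp(N^\delta \log N) \le \exp(N^\epsilon)$ for some $\epsilon < 1$, constructible in the same time using Lemma~\ref{lem: NW eval} to evaluate $H$. The main obstacle I expect is parameter bookkeeping: simultaneously forcing $\ell < N^\delta$, $N^\mu \log N \le a^{\mu''}$, and $(T(k+1)^{O(1)})^n \cdot \binom{n+k+1}{n} \ll n^{\Omega_{\mu'}(\sqrt n)}$. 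These are compatible precisely because the hypothesis $\mu < 1/2$ gives slack between the required $\mu''$ and the constraint $2 \alpha \cdot O_{\mu'}(1) < 1$ coming from the seed length.
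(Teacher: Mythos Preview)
Your overall architecture matches the paper's proof exactly: Nisan--Wigderson designs, the Kabanets--Impagliazzo generator with $\nwa$ as the hard polynomial, the hybrid argument, the DSY root-extraction lemma, and the final brute-force hit. There is, however, one genuine gap that breaks the argument as written.

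\medskip
\textbf{The size bound after expanding $R(C_0,\dots,C_k)$ is too weak.} You write that expanding $R$ monomial by monomial gives a $\sop$-representation of $H$ of size at most
\[
\binom{n+k+1}{n}\cdot \bigl(T(k+1)^{O(1)}\bigr)^{n}\cdot nd \;=\; \exp\bigl(O(n\log\log N)\bigr)\cdot N^{O(c)}.
\]
You then assert this is strictly smaller than $n^{\Omega_{\mu'}(\sqrt n)}$. That assertion is false: for any choice $n = N^{\alpha}$ (or even $n$ polylogarithmic in $N$), one has $n\log\log N \gg \sqrt{n}\,\log n$, so your upper bound \emph{exceeds} the lower bound from Theorem~\ref{thm:mainthm} and no contradiction follows. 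The culprit is the naive count $(T(k+1)^{O(1)})^{n}$ for expanding a product of $n$ sums each of $T(k+1)^{O(1)}$ terms.

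\medskip
\textbf{What the paper does instead.} The paper observes that each monomial of $R$ has the form $C_0^{e_0}\cdots C_k^{e_k}$ with $\sum_j e_j \le a'$ (your $n$), and that the $e_j$-th power of a sum of $T_j \le T(k+1)^3$ product terms expands, by multiset counting, into only $\binom{T_j + e_j}{e_j}$ product terms --- not $T_j^{e_j}$. Since $e_j \le a'$ and $T_j$ is polylogarithmic, this is at most $(a')^{T_j}$. Multiplying over the $k+1$ factors gives a top fan-in bound of $\binom{T(k+1)^3 + a'}{a'}^{k+1} \le (a')^{O(Tk^4)}$, so the total size is $d^2 (a')^{O(Tk^4)}$. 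Now $Tk^4 = \operatorname{polylog} N = o(\sqrt{a'})$, and \emph{this} is what contradicts $(a')^{\Omega(\sqrt{a'})}$. Without this sharper power-of-sum expansion, the proof does not close; you should replace your $(T(k+1)^{O(1)})^n$ factor by the multiset bound.

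\medskip
A minor secondary point: the individual degree of each seed variable $z_j$ in $P(G_H(z))$ can exceed $Nk$ (since one $z_j$ may lie in many $S_i$'s and $H$ has degree $a'$ in its variables), so your grid size $Nk+1$ should be enlarged to roughly $Nka'+1$ as in the paper; this does not affect the final $\exp(N^{\epsilon})$ bound.
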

From our proof, it also follows that if each of polynomial $Q_{ij}$  depends only on $\log^{O(1)} N$ variables, then both the size of the hitting set and the time to construct it, are upper bounded by a quasipolynomial function in $N$. 
 In the rest of the section, we prove Theorem~\ref{thm:mainthm2}. We start by describing the construction of the hitting set $\cal H$. 

\subsubsection{Construction of  hitting sets for $\sopm$ circuits for $0 \leq \mu < 1/2$} 
Given $\mu$ such that $0 \leq \mu < 1/2$, we pick the parameter $\mu'$ such that $0 < \mu' < 1$ and $\frac{2\mu}{\mu'}$ is a positive constant strictly smaller than $1$. 
We construct a family of Nisan-Wigderson designs as described in Lemma~\ref{lem: designs} with the following parameters :
\begin{enumerate}

\item $b$, the number of sets is set equal to $N$
\item $a$, the size of each of the sets $S_i$ is set equal to $N^{\frac{\mu}{\mu'}}\log^{\frac{1}{\mu'}} N$.
\item $l$, the size of the universe  is chosen large enough in order to satisfy the hypothesis of Lemma~\ref{lem: designs}. From 
Lemma~\ref{lem: designs}, it follows that we can pick $l$ which is not too large ($l \in O(a^2/\log b)$). For the above chosen values of $a, b$,  there is a choice of $l$ such that $l$ is at most  $N^{\frac{2\mu}{\mu'}}\log^{\frac{2}{\mu'}-1} N$.

\end{enumerate}
Recall that our goal is to construct a hitting set for $\sopm$ circuits. Observe that the choice of parameters $l, a, b$ satisfy the hypothesis of Lemma~\ref{lem: designs}. 
So, we get a collection of $N$ subsets $S_1, S_2, \ldots, S_N$ of $\{1, 2, 3, \ldots, l\}$ satisfying 
\begin{enumerate}
\item for all $1\leq i \leq N$,  $|S_i| = a$
\item for all $1 \leq i < j \leq N$, $|S_i \cap S_j| \leq \log N$
\end{enumerate}
Moreover, these sets can be constructed in time polynomial in $b$ and $2^l$. 
We identify the set $\{1, 2, 3, \ldots, l\}$ with the set of new variables $\overline{Y} = \{Y_1, Y_2, \ldots, Y_l\}$.
Before we proceed further, we need some notation. We will pick $\delta = (1-\mu')/2$ to be a non-negative constant. Given, $a, \mu', \delta$, we define $\gamma = \frac{2(\mu' + \delta) + 1}{1-(\mu' + \delta)}$. Then, we define $q$ to be the smallest prime number between $({a/2})^{\frac{1+\gamma}{2+\gamma}}$ and $2\cdot ({a/2})^{\frac{1+\gamma}{2+\gamma}}$. Also, we set $a'$ to be equal to  $({a/2})^{\frac{1}{2+\gamma}}$. 
Observe that $a/2 \leq a'q \leq a$. 

For each $i$, such that $1 \leq i \leq N$, let ${S_i}'$ be an arbitrary subset of $S_i$ of size equal to $a'q$. For brevity, we rename the sets $S_i'$ as $S_i$~\footnote{We have replaced the family $\{S_1, S_2, \ldots, S_N\}$ by the set family $\{S_1', S_2', \ldots, S_N'\}$ such that for each $i \in [N]$, $S_i' \subseteq S_i$. Observe that the design based properties of the original system continue to hold. The only thing that changes is that the size of $S_i'$ could be smaller than the size of $S_i$, by at most a factor $2$.  }. Let $\rho = (\mu' + \delta)\frac{\log a'q}{\log a'}$ and $D   = \frac{\gamma + \rho}{2(1 + \gamma)} \cdot a'$.

 Often for the ease of notation we will identify the set $S_i$ of $\{1, 2, \ldots, l\}$ with the set of variables $\{Y_j : j \in S_i\}$. We will think of the variables $\{Y_j : j \in S_i\}$ to be arranged in a $a'\times q$ matrix $V(i)$, with the variables placed in the matrix in some order. 
For every $i\in \{1, 2, 3, \ldots, N\}$, we define $\nwa(S_i)$ as 
$$\nwa(S_i) = \sum_{\substack{f(z) \in \F_{q}[z] \\
                        deg(f) \leq D-1}} \prod_{j \in [a']} V(i)_{jf(j)}$$

For a point $p = (p_1, p_2, \ldots, p_l) \in \F^l$, we denote by $\nwa(S_i)|p$, the evaluation of $\nwa(S_i)$ when the variable $Y_j$ is set to $p_j$.

Let $G$ be  an arbitrary subset of $\F$ of size $Nka' + 1$. We define the hitting set ${\cal H}$ as follows. 

\begin{define}[Definition of the hitting set ${\cal H}$]~\label{def:hitting set}
$${\cal H} = \left\{ (\nwa(S_1)|p, \nwa(S_2)|p, \ldots, \nwa(S_N)|p) : p \in G^{l} \right\}   $$
\end{define}

We now proceed to prove the correctness of the construction. We first prove the following lemma which shows that ${\cal H}$ is explicit and has the correct size as per Theorem~\ref{thm:mainthm2}. 

\begin{lem}~\label{lem: hitting set size}
The set ${\cal H}$ as defined in Definition~\ref{def:hitting set} has size at most $(Nka' + 1)^l$ and all its elements can be enumerated in time $a^{a'}\cdot (Nka' + 1)^l\cdot N^{O(1)}$. 
\end{lem}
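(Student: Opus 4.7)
The lemma is essentially bookkeeping: the size bound is immediate from the definition of $\mathcal{H}$, and the enumeration time reduces to invoking the Nisan--Wigderson evaluation routine (Lemma~\ref{lem: NW eval}) once per coordinate per seed. My plan has two short steps; I do not anticipate a substantive obstacle, but some care is needed in the evaluation cost---I want the exact bound $a^{a'}$ from the statement rather than the looser $a^{O(a')}$ one would get by quoting Lemma~\ref{lem: NW eval} as a black box, so I will peek inside its proof rather than invoke it verbatim.

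For the size, I observe that $\mathcal{H}$ is indexed by the seeds $p \in G^l$: each $p$ contributes exactly one tuple $(\nwa(S_1)|p, \ldots, \nwa(S_N)|p) \in \F^N$, so $|\mathcal{H}| \leq |G|^l = (Nka'+1)^l$. (Distinct seeds could in principle collide on the same tuple, but this only helps the bound.)

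For the running time, I would enumerate the seeds $p \in G^l$ and, for each $p$, compute the $N$ coordinates $\nwa(S_i)|p$ in turn. Fix an index $i$: by construction $\nwa(S_i)$ is exactly a Nisan--Wigderson polynomial on the variable set $\{Y_j : j \in S_i\}$ with the parameters $(a', q, D)$ playing the role of $(n, \psi, D)$ in Definition~\ref{defn:NW}; it lives on $a'q \leq a$ variables and has degree $a'$. Revisiting the proof of Lemma~\ref{lem: NW eval} with these rescaled parameters, a single evaluation is dominated by iterating over the $q^{a'} \leq a^{a'}$ candidate multilinear monomials---one entry per row of the $a' \times q$ matrix $V(i)$---and, for each, a $\poly(a)$-time interpolation check that decides whether some univariate $f \in \F_q[z]$ of degree at most $D-1$ realises that monomial. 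Hence one evaluation costs $a^{a'}\cdot \poly(a)$. Multiplying by the $N$ coordinates per seed and the $|G|^l = (Nka'+1)^l$ seeds, and noting that $a = N^{\mu/\mu'}\log^{1/\mu'}N$ is polynomial in $N$ so that $\poly(a) = N^{O(1)}$, the total enumeration cost is $a^{a'} \cdot (Nka'+1)^l \cdot N^{O(1)}$, matching the claim.
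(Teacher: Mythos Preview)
Your proof is correct and follows essentially the same approach as the paper: bound the size by $|G|^l$, enumerate seeds, and evaluate each $\nwa(S_i)$ via the routine underlying Lemma~\ref{lem: NW eval}. If anything, you are more careful than the paper---the paper simply cites Lemma~\ref{lem: NW eval} for the $a^{a'}\cdot\text{Poly}(N)$ bound, whereas you explicitly unpack why the monomial count is $q^{a'}\le a^{a'}$ rather than the cruder $a^{O(a')}$.
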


\begin{proof}
The size of the set ${\cal H}$ is equal to $|G|^l = (Nka' + 1)^l$.  The set ${\cal H}$ can be  enumerated by enumerating through the points $p$ in $G^l$ in some natural order (say lexicographic order) and evaluating the tuple $ (\nwa(S_1)|p, \nwa(S_2)|p, \ldots, \nwa(S_N)|p)$ at each of these points. For every point $p$ and subset $S_i$, the polynomial $\nwa(S_i)$ can be evaluated in time at most $a^{a'}\times \text{Poly}(N)$ from Lemma~\ref{lem: NW eval}.  So, the second part of the lemma follows.  
\end{proof}

Observe that for our choice of parameters, the above bounds on the size and the time of enumeration are bounded by a function which is subexponential in $N$. 

We now show that for every non-zero polynomial $P$ in the class ${\cal C}$, as defined in the statement of Theorem~\ref{thm:mainthm2}, there exists a point $p \in {\cal H}$, such that $P(p)$ is non-zero. We show this in Lemma~\ref{lem: hitting set correctness} below. That will complete the proof of Theorem~\ref{thm:mainthm2}.  


\subsection{Correctness of the construction}~\label{sec:hitting set correct}
For the rest of this section, we denote $N^{\mu}$ by $s$.
\begin{lem}~\label{lem: hitting set correctness}
Let $P$ be a non-zero polynomial in the set $\cal C$ as defined in the statement of Theorem~\ref{thm:mainthm2}, and let ${\cal H}$ be the set defined in Definition~\ref{def:hitting set}.  Then, there is a point $p$ in the set ${\cal H}$ such that $P(p) \neq 0$.
\end{lem}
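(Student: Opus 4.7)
The approach is the standard Kabanets--Impagliazzo generator analysis, using the lower bound of Theorem~\ref{thm:mainthm} in place of a generic hardness hypothesis. Suppose for contradiction that $P$ vanishes on all of $\mathcal{H}$, and let $\bar P(Y) := P(\nwa(S_1), \nwa(S_2), \ldots, \nwa(S_N))$. The individual degree of $\bar P$ in each $Y_j$ is at most $Nka'$, and since $|G| = Nka' + 1$, Lemma~\ref{lem: comb nulls} (Combinatorial Nullstellensatz) forces $\bar P \equiv 0$ identically. Now run the usual hybrid: let $i^*$ be the smallest index for which the partial substitution $P(\nwa(S_1), \ldots, \nwa(S_{i^*}), X_{i^*+1}, \ldots, X_N)$ vanishes identically. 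Then the polynomial $P^* := P(\nwa(S_1), \ldots, \nwa(S_{i^*-1}), X_{i^*}, X_{i^*+1}, \ldots, X_N)$ is nonzero, but $\nwa(S_{i^*})$ is a root of $P^*$ viewed as a polynomial in $X_{i^*}$ over $\F[X_{i^*+1}, \ldots, X_N, Y]$.

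Next, reduce to a clean one-variable statement and set up DSY. Using Lemma~\ref{lem: comb nulls} again, pick field values $a_{i^*+1}, \ldots, a_N$ for the remaining $X$-variables and $b_j$ for each $Y_j$ with $j \notin S_{i^*}$, chosen so that the resulting polynomial $\tilde P(X_{i^*}, Y_{S_{i^*}})$ is still nonzero and still has $\nwa(S_{i^*})$ as a root in $X_{i^*}$; an appropriate affine shift of the $Y$'s (and of $X_{i^*}$) normalises things so that the root is at the origin. To get the non-degeneracy hypothesis of Lemma~\ref{lem:DSY main} (so that DSY applies), invoke Lemma~\ref{lem:non zero derivative} to replace $\tilde P$ by some derivative $\partial^{j} \tilde P / \partial X_{i^*}^{j}$ (with $j < k$) whose multiplicity at the root drops to one; Lemma~\ref{lem: model props}(1) ensures that this derivative still has a $\sop$ representation with only a $\poly(k)$ blow-up in the top fan-in. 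Writing this derivative as $\sum_j \tilde C_j(Y_{S_{i^*}}) X_{i^*}^j$, Lemma~\ref{lem:DSY main} then yields a polynomial $R_{a'}$ of degree at most $a'$ in $k+1$ variables with $\h^{a'}[\nwa(S_{i^*})] = \h^{a'}[R_{a'}(\tilde C_0, \ldots, \tilde C_k)]$; since $\nwa(S_{i^*})$ is homogeneous of degree $a'$, the left-hand side is just $\nwa(S_{i^*})$ itself.

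Finally, contradict the lower bound. The crucial structural claim, which is the main thing to verify carefully, is that each $\tilde C_j$ admits a $\sop$ representation over $\F[Y_{S_{i^*}}]$ in which every bottom-level factor depends on at most $a^{\mu'}$ variables. This uses two facts in tandem: Lemma~\ref{lem: model props} shows that substituting the $\nwa(S_\ell)$ for the $X_\ell$'s, then extracting the coefficient of $X_{i^*}^j$, preserves the $\sop$ form with only $\poly(T,k)$ blow-up in the top fan-in; and the NW-design property $|S_i \cap S_j| \leq \log N$ implies that any original factor $Q_{uv}$, which depended on at most $s = N^\mu$ of the $X_\ell$'s, becomes (after substituting $b_j$ for $Y_j$, $j \notin S_{i^*}$) a polynomial in at most $s \cdot \log N = N^\mu \log N \leq a^{\mu'}$ of the $Y_{S_{i^*}}$ variables, where the last inequality is exactly our choice $a = N^{\mu/\mu'}\log^{1/\mu'}N$. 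Expanding $R_{a'}(\tilde C_0, \ldots, \tilde C_k)$ into a single $\sop$ circuit in the $Y_{S_{i^*}}$ variables, the product of the top fan-in and the product fan-in is bounded by $(Tk)^{O(a')} \cdot d \cdot a'$, while the bottom support remains at most $a^{\mu'}$. But Theorem~\ref{thm:mainthm} applied to $\nwa$ on its $a'q \le a$ variables with bottom support $a^{\mu'}$ requires this product to be at least $(a')^{\Omega_{\mu'}(\sqrt{a'})}$. The parameters $\gamma$, $\delta$, $\mu'$ were precisely chosen so that $(Tk)^{O(a')} \cdot d \cdot a' \ll (a')^{\Omega(\sqrt{a'})}$ for all sufficiently large $N$, which is the needed contradiction. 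The main obstacle is this final numerical comparison: it is where the whole chain of constants ($\mu$ determining $\mu'$, which determines $\gamma$, which determines $a'$ as a function of $a$ and hence of $N$) has to be balanced so that the subexponential size of $\mathcal H$ coexists with a genuine contradiction; everything else is routine bookkeeping with Lemmas~\ref{lem: model props},~\ref{lem:non zero derivative}, and~\ref{lem:DSY main}.
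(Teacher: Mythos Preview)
Your overall architecture is exactly the paper's: hybrid argument, freeze all variables outside $S_{i^*}$ so that each substituted $\nwa(S_\ell)|_{\bar c}$ becomes a polynomial in at most $\log N$ of the $Y$-variables, pass to a derivative via Lemma~\ref{lem:non zero derivative} to make DSY applicable, then expand $R_{a'}(\tilde C_0,\dots,\tilde C_k)$ into a $\sop$ circuit for $\nwa(S_{i^*})$ and contradict Theorem~\ref{thm:mainthm}. The structural claim that each $\tilde C_j$ inherits a $\sop$ form with bottom support at most $s\log N \le a^{\mu'}$ is correct and matches the paper (Claim~\ref{clm: p1} and Lemma~\ref{lem: model props}).

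The genuine gap is your final numerical step. You assert that expanding $R_{a'}(\tilde C_0,\dots,\tilde C_k)$ gives top fan-in $(Tk)^{O(a')}$ and then that $(Tk)^{O(a')}\cdot d\cdot a' \ll (a')^{\Omega(\sqrt{a'})}$. The first estimate is what one gets from the naive expansion of a product of $a'$ many $\tilde C_j$'s, each a sum of $\le T(k{+}1)^3$ terms; but the resulting inequality is \emph{false} for the paper's parameters. Taking logarithms, you would need $a'\cdot\log(Tk) = o(\sqrt{a'}\cdot\log a')$, i.e.\ $\sqrt{a'} = o\!\big(\log a'/\log\log N\big)$, whereas for any $\mu>0$ we have $a'=N^{\Theta(1)}$ so the left side is a positive power of $N$ and the right side is $O(\log N)$. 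The chain of constants does \emph{not} rescue this.

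What the paper does instead (Claim~\ref{clm:dsy app}) is count more carefully: since $\tilde C_j^{\,t}$ is a sum indexed by \emph{multisets}, its top fan-in is $\binom{T(k+1)^3+t}{t}$ rather than $(T(k+1)^3)^t$. Because $T(k+1)^3=\log^{O(1)}N \ll a'$, this gives
\[
\binom{T(k+1)^3+a'}{a'}^{k+1}\;\le\;(2a')^{T(k+1)^3\cdot(k+1)}\;=\;(a')^{O(Tk^4)},
\]
so the total size is at most $d^2\cdot (a')^{O(Tk^4)}$, and now $Tk^4+2\log d = o(\sqrt{a'})$ does yield the desired contradiction with $(a')^{\Omega(\sqrt{a'})}$. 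In short: the missing idea is to bound the fan-in of a power of a sum by a stars-and-bars count, putting the small quantity $Tk^{O(1)}$ rather than the large quantity $a'$ in the exponent.
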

\begin{proof}
We define $$P_i(\overline{X}, \overline{Y}) := P(\nwa(S_1), \nwa(S_2), \ldots, \nwa(S_i), X_{i+1}, X_{i+2}, \ldots, X_N)$$ to be the polynomial obtained from $P$ by substituting the variables $X_j$ by $\nwa(S_j)$, for every $1 \leq j \leq i$. 

From the construction of our hitting set, it follows that it would suffice to argue that the polynomial $P_{N}(\overline{X}, \overline{Y})$ is non-zero. If this was true, then the lemma above will follow from Lemma~\ref{lem: comb nulls}, since the degree of any variable $P(\overline{X}, \overline{Y})$ is at most $Nka'$. 

 We proceed via contradiction. If possible, let $P_N(\overline{X}, \overline{Y})$ be identically zero. Since $P = P_0(\overline{X}, \overline{Y})$ is non-zero to start with, by a hybrid argument it follows that there is an index $i$, such that $P_i(\overline{X}, \overline{Y})$ is non-zero while $P_{i+1}(\overline{X}, \overline{Y})$ is identically zero. Observe that $P_i$ is a polynomial in the variables $\overline{Y}$ and $X_{i+1}, X_{i+2}, \ldots, X_N$. 
In going from $P_{i}$ to $P_{i+1}$, we substituted the variable $X_{i+1}$ by the polynomial $\nwa(S_{i+1})$. Since $P_{i}(\overline{X}, \overline{Y})$ is  non-zero by assumption above, there exists a substitution $\overline{c}$ of all variables apart from $\{Y_j : j \in S_{i+1}\}$ and $X_{i+1}$, which keeps the polynomial non-zero. Let the polynomial resulting after this substitution be $P_i'$. From the definitions, it follows that 
$$P_i' = P(\nwa(S_1)|{\overline{c}}, \nwa(S_2)|{\overline{c}}, \ldots, \nwa(S_i)|{\overline{c}}, X_{i+1}, X_{i+2}|{\overline{c}}, \ldots, X_N|{\overline{c}}) $$ 

Observe that each of the polynomials $\nwa(S_j)|{\overline{c}}$ depends only on the variables in the set $S_j \cap S_{i+1}$. From the properties of Nisan-Wigderson designs, and the choice of parameters, the size of this intersection is at most $\log N$. From the definition of $P_i$ and the choice of $\overline{c}$,  $P_i'$ is not identically zero. We will think of $P_i'$ as a polynomial in $X_{i+1}$ with the coefficients being polynomials in the variables in the set $\{Y_j : j \in S_{i+1}\}$. Now, we know that the the polynomial $P_{i+1}'$ obtained by substituting $X_{i+1}$ by  $\nwa(S_{i+1})$ is identically zero. Hence, it must be the case that $X_{i+1} - \nwa(S_{i+1})$ is a factor of $P_i'$. 

To proceed further, we need the following claim. 
\begin{claim}~\label{clm: p1}
$P_i'$ as defined above can be represented as 
$$P_i' = \sum_{r = 1}^T \prod_{j = 1}^d Q_{rj}'$$
such that each of the polynomials $Q_{rj}'$ depends on at most $s\log N$ variables. 
\end{claim}
\begin{proof}
Recall that $P$ can be represented as
$$P = \sum_{i = 1}^T \prod_{j = 1}^d Q_{ij}$$
where each $Q_{ij}$ is a polynomial in at most $s = N^{\mu}$ variables. 
In going from $P$ to $P_i'$, we have substituted each of the variables outside the set $\{Y_j : j \in S_{i+1}\} \cup \{X_{i+1}\}$ by either a constant or by the polynomial $\nwa(S_{j})|\overline{c}$ (which is a polynomial in at most $|S_j \cap S_{i+1}| \leq \log N$ variables) for some $j$. In either case, after substitution, the polynomials $Q_{rj}'$ obtained from $Q_{rj}$ depends on at most $s\log N$ variables, since $Q_{rj}$ depended on at most $s$ variables. This completes the proof of the claim.
\end{proof}
Moreover, since the individual degree of variables in $P$ is at most $k$, the individual degree of $X_{i+1}$ in $P_i'$ is at most $k$. The goal now is to invoke Lemma~\ref{lem:DSY main}, which would imply that $\nwa(S_{i+1})$ also has a small circuit as a sum of product of polynomials in {\it few} variables, and together with the lower bound from Theorem~\ref{thm:mainthm}, this would lead to a contradiction.
 We essentially follow this outline. Formally, we use the following claim to complete the proof of Lemma~\ref{lem: hitting set correctness}.  We defer the proof of the claim to the end.     
\begin{claim}~\label{clm:dsy app}
If $(X_{i+1} - \nwa(S_{i+1}) )$ divides $P_i'$, then $\nwa(S_{i+1})$ can be written as 
$$ \nwa(S_{i+1}) = \sum_{r = 1}^{I'} \prod_{j = 1}^{d'} \Gamma_{rj} $$
 where 
 \begin{enumerate}
\item $I' \leq (da'^2 + 1)\cdot {{k+a' + 1} \choose k + 1} \times {{T\cdot (k+1)^3 + a'}\choose a'}^{k+1}$
\item $d' \leq d\cdot a'$
\item Each  $\Gamma_{rj}$ depends on at most $s\log N$ variables
\end{enumerate}

\end{claim}
From our choice of parameters, recall that 
 $$a = N^{\mu/{\mu'}}\cdot \log^{1/{\mu'}} N$$
 and $$s = N^{\mu} $$

Therefore,  $s\log N \leq  N^{\mu}\cdot \log N \leq a^{\mu'}$. To complete the proof, we observe that by Theorem~\ref{thm:mainthm}, we must have $$I'd' \geq (a')^{\Omega(\sqrt{a'})}$$
But, for our choice of parameters,
\begin{enumerate}
\item $I' \leq (da'^2+1)\cdot {{k+a'} \choose k} \times {{T\cdot (k+1)^3 + a'}\choose a'}^{k+1} \leq da^{O(Tk^4)} \leq d{a'}^{O(Tk^4)}$ (since $a$ and $a'$ are polynomially related)
\item $d' \leq da'$
\end{enumerate}

This implies that $I'd' \leq d^2a^{O(Tk^4)}$. From our choice of parameters,  $s\log N < a^{\mu'}$ and $Tk^4 + 2\log d \in o(\sqrt{a'})$. This contradicts that $I'd' \geq (a')^{\Omega(\sqrt{a'})}$. This completes the proof of Lemma~\ref{lem: hitting set correctness} assuming Claim~\ref{clm:dsy app}.  
\end{proof}

We now give a proof of Claim~\ref{clm:dsy app}.
\begin{proof}[Proof of Claim~\ref{clm:dsy app}]
From Claim~\ref{clm: p1}, we know that $$P_i' = \sum_{r = 1}^T \prod_{j = 1}^d Q_{rj}'$$
such that each  $Q_{rj}'$ depends on at most $s\log N$ variables. 
Since $P_i'$ is not identically zero and $\nwa(S_{i+1})$ is a root of $P_i'$, it follows from Lemma~\ref{lem:non zero derivative} that there is an integer $\lambda$ such that  $0 \leq \lambda \leq k-1$ and,  $$\frac{\partial^{\lambda} P_i'}{\partial X_{i+1}^{\lambda}}(\nwa(S_{i+1})) = 0$$ and $$\frac{\partial^{\lambda+1} P_i'}{\partial X_{i+1}^{\lambda+1}}(\nwa(S_{i+1})) \neq 0$$

From Lemma~\ref{lem: model props} it follows that $\tilde{P_i'} = \frac{\partial^{\lambda} P_i'}{\partial X_{i+1}^{\lambda}}$ can also be expressed as $$\tilde{P_i'} = \sum_{r = 1}^{T'} \prod_{j = 1}^d \tilde{Q}_{ij}$$
where $T' \leq T\cdot (k+1)^2$ and each of the $\tilde{Q}_{rj}$ depends on at most $s\log N$ variables. 

Observe that, $\tilde{P_i'}$ vanishes when $\nwa(S_{i+1})$ is substituted for $X_{i+1}$, while its derivative with respect to $X_{i+1}$ does not vanish identically at $X_{i+1} = \nwa(S_{i+1})$. So, in particular, there is a substitution of the $Y$ variables where the derivative $\frac{\partial{\tilde{P_i'}}}{\partial{X_{i+1}}}$ is nonzero. Since the class of $\sop$ circuits is closed under translations of variables  (from item 2 in Lemma~\ref{lem: model props}), we can assume without loss of generality that the derivative is nonzero when all the variables in $\overline{Y}$ are set to zero. Also observe that by this variable translation, we have actually obtained a polynomial $\nwb(S_{i+1})$ from $\nwa(S_{i+1})$. Moreover, the degree of $\nwb(S_{i+1})$ is equal to $a'$ and the homogeneous component of degree $a'$ of $\nwb(S_{i+1})$ is equal to $\nwa(S_{i+1})$. Let the polynomial obtained after the variable translation from $\tilde{P_i'}$ as $\tilde{P_i''}$.  At this point, the hypothesis of Lemma~\ref{lem:DSY main} is satisfied by $\tilde{P_i''}$. 

Let  $\tilde{P_i''} = \sum_{j = 0}^k C_j(\overline{Y})\cdot X_{i+1}^j$. Here, $C_j(\overline{Y})$ is a polynomial only in the $Y$ variables and is the coefficient of $X_{i+1}^j$, when viewing $\tilde{P_i''}$ as an element of $\F[\overline{Y}][X_{i+1}]$. From Lemma~\ref{lem:extracting coefficients}, we know that  each of the polynomials $C_j$ can be expressed as a polynomial of the form 
$$C_j = \sum_{r= 1}^{T_j} \prod_{l = 1}^d Q_{rl}''$$
where $T_j \leq T'\cdot(k+1) \leq T\cdot (k+1)^3$ and each  $Q_{rl}''$ depends on at most $s\log N$ variables. 

Hence,  by Lemma~\ref{lem:DSY main}, for every $t \geq 0$, there exists a polynomial $R_t \in \F[Z_1, Z_2, \ldots, Z_{k+1} ]$ of degree at most $t$ such that $$\h^{\leq t}[\nwb(S_{i+1})] = \h^{\leq t}[R_t(C_0, C_1, \ldots, C_k)] $$

The goal now is to obtain a representation of $\nwa(S_{i+1})$ as a sum of products of polynomials in few variables and show that this contradicts the lower bound in Theorem~\ref{thm:mainthm}.
$\nwb(S_{i+1})$ is a  polynomial of degree at most $a'$. So,  there is a polynomial $R_{a'}$ of degree at most $a'$ in  $k + 1$ variables such that $$\nwb(S_{i+1}) =  \h^{\leq {a'}}[R_{a'}(C_0, C_1, \ldots, C_k)]$$
From the discussion on the relation between $\nwb(S_{i+1})$ from $\nwa(S_{i+1})$, we also know that 
$$\nwa(S_{i+1}) = \h^{a'}[\nwb(S_{i+1})] = \h^{a'}[R_{a'}(C_0, C_1, \ldots, C_k)]$$
Since $R_{a'}$ is a polynomial in $k+1$ variables of degree $a'$, the number of monomials in $R_{a'}$ is at most ${a' + k + 1} \choose {k+1}$. Therefore, we can represent $R_{a'}(C_0, C_1, \ldots, C_k)$ as a sum of products of the $C_j$'s, with the sum fan-in at most ${a' + k + 1} \choose {k+1}$ and the product fan-in at most $a'$. Moreover, each of the product gates in this representation takes the polynomials $C_j$'s as inputs. We know that each $C_j$ can be written as 
$$C_j = \sum_{r= 1}^{T_j} \prod_{l = 1}^d Q_{rl}''$$
where each $Q_{rl}''$ is a polynomial in at most $s\log N$ variables, and the top sum fan-in $T_j$ is at most $T\cdot (k+1)^3$. For any $t$, the polynomial $C_j^t$,  has a similar representation with the top sum fan-in at most ${T\cdot (k+1)^3 + t}\choose t$. Therefore, any product of fan-in at most $a'$ in the $C_j$'s can be written as a sum of product of polynomials in at most $s\log N$ variables, with top fan-in  at most $${{T\cdot (k+1)^3 + a'}\choose a'}^{k+1}$$
since each $C_j$ is raised to a power of at most $a'$ and there are $k+1$ such $C_j$'s.  
Therefore,  $R_{a'}(C_0, C_1, \ldots, C_k)$ can be written as $$R_{a'}(C_0, C_1, \ldots, C_k) = \sum_{r = 1}^I \prod_{j = 1}^{d'} \Gamma'_{rj}$$ such that 
\begin{enumerate}
\item $I \leq {{k+a' + 1} \choose k+1} \times {{T\cdot (k+1)^3 + a'}\choose a'}^{k+1}$
\item $d' \leq d\cdot a'$
\item Each  $\Gamma'_{rj}$ depends on at most $s\log N$ variables
\end{enumerate}
We would now like to extract the homogeneous part of degree $a'$ of $R_{a'}(C_0, C_1, \ldots, C_k)$, which we know is equal to $\nwa(S_{i+1})$. We do this by a standard application of Lemma~\ref{lem:interpolation}.  Since we are interested only in the homogeneous part of degree $a'$, we can assume without loss of generality that each of the polynomials $\Gamma'_{rj}$ is of degree at most $a'$ (we can discard all monomials of degree larger than $a'$ in each of the $\Gamma'_{rj}$, since they do not contribute to the homogeneous component of degree $a'$ of $R_{a'}(C_0, C_1, \ldots, C_k)$ ). Hence, the degree of $R_{a'}(C_0, C_1, \ldots, C_k)$ is upper bounded by $da'\cdot a'$. So, from Lemma~\ref{lem:interpolation}, we can extract the homogeneous component of degree $a'$ of $R_{a'}(C_0, C_1, \ldots, C_k)$ by blowing up the top fan-in by a factor of at most $da'^2 + 1$. Hence, $\nwa(S_{i+1})$  can be  expressed as 
 $$ \nwa(S_{i+1}) = \sum_{r = 1}^{I'} \prod_{j = 1}^{d'} \Gamma_{rj} $$
 where 
 \begin{enumerate}
\item $I' \leq (da'^2 + 1)\cdot {{k+a' + 1} \choose k + 1} \times {{T\cdot (k+1)^3 + a'}\choose a'}^{k+1}$
\item $d' \leq d\cdot a'$
\item Each  $\Gamma_{rj}$ depends on at most $s\log N$ variables
\end{enumerate}
\end{proof}

We remark that if the value of $s$ was $\log^{O(1)} N$ to start with, the same proof as above goes through with $l$ and $a$ being set to polynomials of sufficiently high degree in  $\log N$. The size of the hitting set and the time to construct it in this case are upper bounded by a quasipolynomial function in $N$. 
\section{Open problems}~\label{sec:open ques}
We  conclude with some open problems. 
\begin{enumerate}
\item An intriguing open question is to obtain PIT for $\sop$ circuits without the restriction on the individual degree. The strategy in this paper relies on  hardness randomness tradeoffs for bounded depth circuits~\cite{DSY09}.  The tradeoffs in~\cite{DSY09} crucially use the fact that the individual degree is bounded. 
\item Another related question would be to get any non-trivial PIT (even subexponential) for the sum of constant many products of degree two polynomials.
\item It would also be interesting to understand if one could obtain any non-trivial PIT for slightly non-multilinear depth four circuits (say individual degree at most 2) with bounded top fan-in. A natural strategy for this question would be to reduce it to the case of $\sop$ circuits by either expanding out the polynomials $Q_{ij}$ which depend on too many variables or use a partial derivative like trick, as in~\cite{OSV14}. The immediate challenge in this case is that the top fan-in seems to increase by any of these tricks and the calculations in this paper seem to not work out.  
\end{enumerate}

\section*{Acknowledgements}
We would like to thank Rafael Oliveira for many helpful discussions regarding hardness-randomness tradeoffs for bounded depth arithmetic circuits at the early stages of this work. 
\bibliographystyle{alpha}
\bibliography{refs}

\appendix
\section{Calculations}~\label{sec:calc}
$$Td^2n^3 \cdot rs \geq \frac{{\frac{1}{n^{O(1)}}\text{min}\left(\frac{p^r}{4^r} \cdot {N \choose r} \cdot {N \choose m}, {N \choose m + n - r}\right)}}{{2^{O(\sqrt{n})}\cdot {N \choose m+rs} \cdot {n + r \choose r}}} $$
We first estimate the ratio $$\frac{{N \choose m + n - r}}{{{N \choose m+rs} \cdot {n + r \choose r}}} $$. 
\begin{eqnarray*}
\frac{{N \choose m + n - r}}{{{N \choose m+rs} \cdot {n + r \choose r}}} &\geq & \frac{(m+rs)!}{(m+n-r)!} \frac{(N-m-rs)!}{(N-m-(n-r))!} \cdot \left(\frac{r}{e(n+r)}\right)^r  \\
\end{eqnarray*}
Here we use the fact that ${n + r \choose r} \leq \left(\frac{e(n+r)}{r}\right)^r$.
Now, approximating the ratios using Lemma~\ref{lem:approx} and substituting $m = \frac{N}{2}(1-r\frac{\ln n}{n})$, we get 
\begin{eqnarray*}
\frac{{N \choose m + n - r}}{{{N \choose m+rs} \cdot {n + r \choose r}}} &\geq & \left( \frac{N-m}{m}\right)^{n-r-rs} \cdot \left(\frac{r}{e(n+r)}\right)^r  \\
&\geq & \exp\left(\frac{r\ln n}{n} \cdot (n-r-rs) - r\ln {\frac{e(n+r)}{r}}\right)\\
\end{eqnarray*}

Since $r = \Theta(\sqrt{n})$, we get that the ratio is at least $\exp\left(r\ln n ((n-r-rs)/n - \frac{1}{2} + o(1))\right)$, which is $\exp(\Omega(\sqrt{n}\ln n))$. \\

Next we estimate the ratio $$ \frac{{\left(\frac{p^r}{4^r} \cdot {N \choose r} \cdot {N \choose m}\right)}}{{ {N \choose m+rs} \cdot {n + r \choose r}}}  $$
\begin{eqnarray*}
\frac{{\left(\frac{p^r}{4^r} \cdot {N \choose r} \cdot {N \choose m}\right)}}{{ {N \choose m+rs} \cdot {n + r \choose r}}} &\geq & \frac{p^r}{4^r} \cdot \frac{(m+ rs)!}{m!} \cdot \frac{(N-m- rs)!}{(N-m)!} \cdot \frac{N!}{(N-r)!} \cdot \frac{n!}{(n+r)!}\\
&\geq & \frac{p^r}{4^r} \cdot \left( \frac{m}{N-m} \right)^{rs} \cdot \left( \frac{N}{n} \right)^{r}\\
 &\geq & \frac{p^r}{4^r} \cdot \left( 1 - 2.01r\frac{\ln n}{n}\right)^{rs} \cdot \left( \frac{N}{n} \right)^{r}\\
 &\geq &\frac{1}{4^r} \exp\left(-r(\mu + \delta)\ln N - 2.01r^2s\frac{\ln n}{n} + r \ln (N/n) \right)\\
\end{eqnarray*}
Here, we used Lemma~\ref{lem:approx} in the second step and substituted $p = N^{-(\delta + \mu)}$ in the last step. 
Now, substituting $2n^{2+\gamma} \geq N \geq n^{2+\gamma} $, the exponent is at least $$r\ln n (-(\mu + \delta)(2 + \gamma) - 2.01rs/n + (1 + \gamma))$$ 
This is at least $$r\ln n (-(\mu + \delta)(2 + \gamma) - 2.01rs/n+ (1 + \gamma))$$
Now, plugging back the value of $\gamma$, the exponent is at least $(2 - 2.01rs/n)r\ln n$. We have chosen $rs$ such that $rs/n < 0.001$. Therefore, the ratio we set out to lower bound is at least $\exp(\Omega(\sqrt{n}\ln n))$. 
\end{document}